\newtheorem{Thm}{Theorem}
\newtheorem{theorem}{Theorem}[section]
\newtheorem{Th}{Theorem}[section]
\newtheorem{Lem}[theorem]{Lemma}
\newtheorem{Prop}{Proposition}[section]
\newtheorem{cor}{Corollary}
\newtheorem{Definition}{Definition}
\newtheorem{Def}{Definition}[section]
\newtheorem{rmk}{Remark}[section]
\numberwithin{equation}{section}
\newcommand{\bb}{\mathbb}
\newcommand{\ms}{\mathscr}
\newcommand{\mr}{\mathrm}
\newcommand{\frk}{\mathfrak}
\newcommand{\tr}{\mr{tr}\,}
\begin{document}

\title{Euclidean Jordan Algebras, Hidden Actions, and J-Kepler Problems}
\author{Guowu Meng}

\address{Department of Mathematics, Hong Kong Univ. of Sci. and
Tech., Clear Water Bay, Kowloon, Hong Kong}
%    Current address
\curraddr{Institute for Advanced Study, Einstein Drive, Princeton, New Jersey 08540 USA }

\email{mameng@ust.hk}
%    \thanks will become a 1st page footnote.
%\thanks{The first author was supported in part by NSF Grant \#000000.}
\thanks{The author was supported in part by Qiu Shi Science and Technologies Foundation
and an internal grant (``Scholol-based Initiatives" ) from the Hong Kong Univ. of Sci. \& Tech.}

%    General info
\subjclass[2000]{Primary 22E46, 22E70; Secondary 81S99, 51P05}

\date{May 11, 2011}

%\dedicatory{This paper is dedicated to our advisors.}

\keywords{Euclidean Jordan Algebras, J-Kepler Problems, Hidden
Actions, Symmetric Domains}

\begin{abstract} For a {\em simple  euclidean Jordan algebra}, let
$\frk{co}$ be its conformal algebra, $\ms P$ be the manifold
consisting of its semi-positive rank-one elements, $C^\infty(\ms P)$ be the space of complex-valued smooth functions on $\ms P$.
An explicit action of $\frk{co}$ on $C^\infty(\ms P)$, referred to as the {\em hidden
action} of $\frk{co}$ on $\ms P$, is exhibited. This hidden action turns out to be mathematically responsible for the existence of the Kepler problem and its recently-discovered vast generalizations, referred to as J-Kepler problems. The J-Kepler problems are then reconstructed and re-examined in terms of the unified language of euclidean Jordan algebras. As a result, for a
simple euclidean Jordan algebra, the minimal representation of its
conformal group can be realized either as the Hilbert space of bound
states for its J-Kepler problem or as $L^2({\ms P}, {1\over
r}\mr{vol})$, where $\mr{vol}$ is the volume form on $\ms P$ and
$r$ is the inner product of $x\in \ms P$ with the identity
element of the Jordan algebra.

\end{abstract}

 \maketitle
%\tableofcontents

\section {Introduction}
The main message we wish to convey in this paper is that the simple
euclidean Jordan algebras introduced by P. Jordan \cite{PJordan33} in
the 1930's and the various Kepler-type problems we \cite{meng} introduced
in recent years are {\em intrinsically} in one-to-one
correspondence: for a simple euclidean Jordan algebra, there is a
super integrable model whose configuration space is the manifold
consisting of the semi-positive rank-one elements of the Jordan algebra;
moreover, the conformal symmetry group of the Jordan algebra is the dynamical symmetry group of the super
integrable model. Since they resemble the Kepler/Coulomb problem, these
super integrable models are referred to as {\bf J-Kepler problems}.

\vskip 10pt

The euclidean Jordan algebras were initially introduced by P. Jordan
for the purpose of reformulating quantum mechanics in a minimal way.
By definition, an {\bf euclidean Jordan algebra} is {\it a finite
dimensional real commutative algebra $V$ with unit such that, for
$a$, $b$ in $V$, 1) $a^2(ab)=a(a^2b)$, 2) $a^2+b^2=0$ implies that $a=b=0$}. As an example, we have the euclidean Jordan
algebra of real symmetric $k\times k$-matrices with the Jordan
product being the symmetrized matrix product. A theorem of
Jordan, von Neumann and Wigner \cite{JVW34} says that the simple
euclidean Jordan algebras consist of {\it four infinity families and
one exceptional}.

Although physicists quickly lost interest in Jordan algebras, the
subsequent further explorations taken on by mathematicians turned out
to be quite fruitful. The work of the Koecher school is especially
relevant to the Kepler problem. Here is an important discovery made
by M. Koecher \cite{Koecher99, FK91}: {\it simple euclidean Jordan algebras and
(irreducible) tube domains are in natural one-to-one correspondence}.
Combining with our discovery, we conclude that {\it J-Kepler
problems and (irreducible) tube domains are in natural one-to-one correspondence.}

\vskip 10pt The {\it {\bf Kepler problem} is a physics problem about
two bodies which attract each other by a force inversely
proportional to the distance}. Mathematically, this is a mechanical
system with configuration space ${\bb R}^3_*:=\bb R^3\setminus\{0\}$
and Lagrangian
$$
L={1\over 2}{\bf r}'\cdot {\bf r}' +{1\over r}.
$$ Here, $\bf r$ is a function of time $t$ taking value in $\bb R^3_*$, $r=|\bf r|$ and $\bf r'$ is the time derivative of $\bf r$.
Therefore, quantum mechanically the hamiltonian for the Kepler
problem becomes
\begin{eqnarray}\label{hydroH}
\hat H=-{1\over 2}\Delta-{1\over r}.
\end{eqnarray}Here, $\Delta$ is the Laplace operator on $\bb R^3_*$. Physicists are interested in solving the bound
state eigenvalue problems for $\hat H$, i.e., 1) finding the list of
real numbers $\lambda_0<\lambda_1<\cdots$ such that
$$
\ms H_\lambda:=\{\psi\in C^\infty({\bb R}_*^3, {\bb C})\mid \int
_{\bb R_*^3}|\psi|^2\, d^3\vec r<\infty, \hat H\psi=\lambda\psi \}
$$ is nontrivial if and only if $\lambda$ is one of $\lambda_I$; 2)
determining each $\ms H_{\lambda_I}$. It is well known that, for the
hamiltonian in Eq. (\ref{hydroH}), the answer is very simple:
$$
\lambda_I=-{1/2\over (I+1)^2}, \quad I=0, 1, \ldots
$$
and as a module of $\frk{so}(4)=\frk{so}(3)\oplus \frk{so}(3)$, $\ms H_I\cong V_I\otimes V_I$, where $V_I$ is the $\frk{so}(3)$-module with dimension $I+1$. What is
less well known is a discovery of A.O. Barut and H. Kleinert \cite{BarutKleinert67} which essentially says that
$$
{\mathcal H}:=\bigoplus_{I=0}^\infty {\ms H}_I
$$ is a unitary lowest weight $(\frk{so}(6), \mr{SO}(4)\times \mr{SO}(2))$-module in the sense of
Harish-Chandre \cite{Harish-Chandra1953}, hence it can be integrated to a unitary lowest
weight module for $\mr{SO}_0(2, 4)$ (the identity component of $\mr{SO}(2,
4)$). Moreover, this module for $\mr{SO}_0(2, 4)$ is minimal in the sense
of A. Joesph \cite{Joseph1974} and also has $L^2({\bb R}_*^3, {1\over r}d^3\vec r)$ as a
geometric realization.

An apparent mathematical generalization, known to many people, is to
replace $\bb R^3$ by $\bb R^n$ ($n\ge 2$) and keep the hamiltonian
in the same form. Similar results are valid: $\lambda_I=-{1/2\over
(I+{n-1\over 2})^2}$, $\ms H_I$ is an irreducible representation of
$\mr{SO}(n+1)$, $\bigoplus_{I=0}^\infty {\ms H}_I$ is a unitary lowest
weight $(\frk{so}(n+3), \mr{SO}(n+1)\times \mr{SO}(2))$-module, hence it can
be integrated to a unitary lowest weight module for $\mr{SO}_0(2, n+1)$
(actually a double of it when $n$ is even); moreover, this module is
minimal and also has $L^2({\bb R}_*^n, {1\over r}d^n\vec r)$ as a
geometric realization.

The less obvious cousins of the Kepler problems were all worked out in recent years. Together
with the obvious ones mentioned in the preceding paragraphs, they
consist of {\it four infinity families and one exceptional}. So it
appears that there is a natural one-to-one correspondence between
them and the simple euclidean Jordan algebras.

Indeed, this is the case, and here is a quick way to see the one-to-one
correspondence. We begin with the notion of {\bf Kepler cone} for a
simple euclidean Jordan algebra. By definition it is the manifold
consisting of the rank-one semi-positive elements of the Jordan algebra,
equipped with a suitable Riemannian metric. This Kepler cone plays
the role of $\bb R^3_*$. To define the {\bf J-Kepler problem}, one
needs to replace the hamiltonian in Eq. (\ref{hydroH}) by this one:
\begin{eqnarray}
\hat h=-{1\over 2}\Delta-\left( {B\over 2r^2}+ {1\over r}\right).
\end{eqnarray}
Here, $\Delta$ is the (non-positive) Laplace operator on the Kepler
cone, $r=r(x)$ is the inner product of $x$ (in the Kepler cone) with
the identity element of the Jordan algebra, and $B$ is a constant
depending on the Jordan algebra, for example, $B=26$ for the
exceptional Jordan algebra. Note that, when the Jordan algebra is
the Minkowski space, the
J-Kepler problem is equivalent to the Kepler problem, i.e., it is a reformulation of
the Kepler problem as a dynamic problem on the {\it open future light cone}:
$$
\{(t, {\bf x})\mid t^2=|{\bf x}|^2, t>0\}
$$
--- the Kepler cone in this case.

The J-Kepler problems all share the key features of the Kepler problem, for example, the $I$-th eigenvalue of the
hamiltonian is
$$
\lambda_I=-{1/2\over (I+\rho \delta/4)^2},
$$ here $\rho$ and $\delta$ are the rank and degree of the Jordan algebra respectively. The more detailed results are given in Theorem \ref{main3}
on page \pageref{main3}.

A key mathematical result here is Theorem \ref{Main1} on page \pageref{Main1}, which gives
an explicit action of the conformal algebra of the Jordan algebra on
the space of complex-valued smooth functions on the Kepler cone. In
this action, elements of the conformal algebra are realized as
differential operators of degree zero, one and two, so this action
does not come from an underlying action on the Kepler cone;
consequently such an action is referred to as a {\bf hidden action} of
the conformal algebra on the Kepler cone. In our view, this hidden
action is the mathematical origin for the J-Kepler problems.

One curious fact, though not presented here, is that the magnetized
versions of a given J-Kepler problem also exist unless the Jordan
algebra is the exceptional one. This could lead to some speculations about
the fundamental physics. Another fact, which is not proved here either, is that a unitary lowest weight representation can be realized by the Hilbert space of bound states of a magnetized version of a J-Kepler problem if and
only if it has the minimal positive Gelfand-Kirillov dimension.

\vskip 10pt Here is the organization of this paper. In section
\ref{S:JA}, we give a review of the euclidean Jordan algebras,
tailed to our needs. In section \ref{S:TKK}, we review the TKK
(Tits-Kantor-Koecher) construction \cite{TKK60}, something that assigns a simple real Lie algebra (the {\bf
conformal algebra}) to each simple euclidean Jordan algebra. In
section \ref{S:Vogan}, we do a bit of structural analysis for the
conformal algebra. In section \ref{S:KM}, we introduce the notion
of Kepler cone for simple euclidean Jordan algebras.  In
section \ref{S:DS}, we introduce the hidden action of the
conformal algebra on the Kepler cone, which amounts to the dynamic
symmetry of the corresponding J-Kepler problem. In section
\ref{S:KP}, we introduce the notion of J-Kepler problem
for simple euclidean Jordan algebras and show that a J-Kepler
problem is just a Kepler-type problem with zero magnetic charge
that we introduced and studied in recent years. In section
\ref{S:Summary}, based on the hidden action obtained from section
\ref{S:DS}, we give the dynamical symmetry analysis and solve the
bound state problem for the J-Kepler problem.
\vskip 10pt
{\bf Acknowledgement}. We wish to thank the following distinguished mathematicians for providing their moral support for our lonely exploration: M. Atiyah, R. Howe, J.S. Li and C. Taubes.

\section{euclidean Jordan Algebras}\label{S:JA}
The materials reviewed in this and the next sections can be found in Refs. \cite{Koecher99, FK91}.
Recall that an {\bf algebra} $V$ over a field $\bb F$ is a vector
space over $\bb F$ together with a $\bb F$-bilinear map $V\times
V\to V$ which maps $(u, v)$ to $uv$. This $\bb F$-bilinear map can
be recast as a linear map $V\to \mr{End}_{\bb F}(V)$ which maps $u$
to $L_u$: $v\mapsto uv$.

We say that algebra $V$ is {\it commutative} if $uv=vu$ for any $u,
v\in V$. As usual, we write $u^2$ for $uu$ and $u^{m+1}$ for $uu^m$
inductively.

\begin{Def}A {\bf Jordan algebra} over $\bb F$ is just a
commutative algebra $V$ over $\bb F$ such that
\begin{eqnarray}\label{JA}
[L_u, L_{u^2}]=0
\end{eqnarray} for any $u\in V$.
\end{Def}
Here are some {\it basic facts} about the Jordan algebra $V$ over $\bb
F$:
\begin{itemize}

\item  $u^r u^s=u^{r+s}$ for any $u\in V$ and
any integers $r,s\ge 1$.

\item $L_{u^m}$ ($m\ge 1$) is a polynomial in $L_u$
and $L_{u^2}$, for example,
\begin{eqnarray}\label{ID:1}
L_{u^3}=3L_{u^2}L_u-2L_u^3.
\end{eqnarray}

\item For any $u, v, z\in V$, we have
\begin{eqnarray}\label{usefulid}
[[L_u, L_v], L_z]=L_{u(vz)-v(uz)}
\end{eqnarray}
provided that $\mbox{Char}(\bb F)\neq 2$.

\end{itemize}

As the first example, we note that $\bb F$ is a Jordan algebra over
$\bb F$. Here is a {\it recipe} to produce Jordan algebras. Suppose
that $\Phi$ is an associative algebra over field $\bb F$ with
characteristic $\neq 2$, and $V\subset \Phi$ is a linear subspace of
$\Phi$, closed under square operation, i.e, $u\in V$ $\Rightarrow$
$u^2\in V$. Then $V$ is a Jordan algebra over $\bb F$ under the
Jordan product
$$
uv:={(u+v)^2-u^2-v^2\over 2}.
$$
Applying this recipe, we have the following Jordan algebras over
$\bb R$:
\begin{enumerate}
\item The algebra $\Gamma(n)$. Here $\Phi=\mr{Cl}(\bb R^n)$---the Clifford algebra of $\bb
R^n$ and $V=\bb R\oplus \bb R^n$.

\item The algebra ${\mathcal H}_n(\bb R)$. Here $\Phi=M_n(\bb R)$---the algebra of real $n\times n$-matrices and $V\subset \Phi$ is the set of symmetric
$n\times n$-matrices.

\item The algebra ${\mathcal H}_n(\bb C)$. Here $\Phi=M_n(\bb C)$---the algebra of complex $n\times n$-matrices (considered as an algebra over $\bb R$)
and $V\subset \Phi$ is the set of Hermitian $n\times n$-matrices.

\item The algebra ${\mathcal H}_n(\bb H)$. Here $\Phi=M_n(\bb H)$---the algebra of quaternionic $n\times n$-matrices (considered as an algebra over $\bb R$)
and $V\subset \Phi$ is the set of Hermitian $n\times n$-matrices.

\end{enumerate}
The Jordan algebras over $\bb R$ listed above are {\bf
special} because they can be derived from associated
algebras via the above recipe. Let us denote by ${\mathcal H}_n(\bb O)$ the algebra for which
the underlying real vector space is the set of Hermitian
$n\times n$-matrices over octonions $\bb O$ and the  product is the
symmetrized matrix product. One can show that ${\mathcal
H}_n(\bb O)$ is a Jordan algebra if and only if $n\le 3$.

\vskip 10pt Any Jordan algebra $V$ comes with a canonical symmetric
bilinear form \begin{eqnarray}\tau(u, v):=\mbox{the trace of
$L_{uv}$}.\end{eqnarray} Thanks to Eq. (\ref{usefulid}), $L_u$ is self-adjoint with respect to $\tau$.

We say that the Jordan algebra $V$ is {\it semi-simple} if the symmetric
bilinear form $\tau$ is non-degenerate. We say that the Jordan algebra
$V$ is {\it simple} if it is semi-simple and has no ideal other than
$\{0\}$ and $V$ itself.

By definition, an {\bf euclidean Jordan algebra}\footnote{Called
{\it formally real Jordan algebra} in the old literature.} is a
real Jordan algebra with an identity element and the positive definite symmetric bilinear form $\tau$. Therefore, an euclidean Jordan algebra is
semi-simple and can be uniquely written as the direct sum of simple
ideals --- ideals which are simple as Jordan algebras.

\begin{Th}[Jordan, von Neumann and Wigner]\label{JAclassification}
The complete list of simple euclidean Jordan algebras are
\begin{enumerate}
\item The algebra $\Gamma(n)=\bb R\oplus \bb R^n$ ($n\ge 2$).

\item The algebra ${\mathcal H}_n(\bb R)$ ($n\ge 3$ or $n=1$).

\item The algebra ${\mathcal H}_n(\bb C)$ ($n\ge 3$).

\item The algebra ${\mathcal H}_n(\bb H)$ ($n\ge 3$).

\item The algebra ${\mathcal H}_3(\bb O)$.
\end{enumerate}
\end{Th}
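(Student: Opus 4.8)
The plan is to follow the classical route of Jordan, von Neumann and Wigner: build up the internal structure of $V$ from idempotents, reduce the multiplication to a ``coordinate'' algebra by means of the Peirce decomposition, and then appeal to Hurwitz's theorem on composition algebras. First I would develop the spectral theory. The paper already records that each $L_u$ is self-adjoint with respect to $\tau$; once positive-definiteness of $\tau$ promotes it to a genuine inner product, this self-adjointness yields real spectra and rules out nonzero nilpotents. Consequently every $u\in V$ has a spectral resolution $u=\sum_i\lambda_i c_i$ with real $\lambda_i$ and orthogonal idempotents $c_i$ (satisfying $c_i^2=c_i$, $c_ic_j=0$ for $i\neq j$, and $\sum_i c_i=e$, the identity). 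A standard exchange argument then shows that any two maximal such systems — the Jordan frames of primitive idempotents — have the same cardinality $\rho$, the \emph{rank}.

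Next comes the Peirce decomposition. For a single idempotent $c$, specializing (\ref{ID:1}) to $c^2=c$ gives $L_c(2L_c-1)(L_c-1)=0$, so the eigenvalues of $L_c$ lie in $\{0,\tfrac12,1\}$ and $V=V(c,0)\oplus V(c,\tfrac12)\oplus V(c,1)$. Applied simultaneously to a frame $\{c_1,\dots,c_\rho\}$ this refines to $V=\bigoplus_{i\le j}V_{ij}$, with $V_{ii}=\bb R\,c_i$ and off-diagonal blocks $V_{ij}$ ($i<j$) holding the ``matrix-entry'' data. Using simplicity — which guarantees that every pair of frame idempotents is connected, i.e.\ each $V_{ij}\neq 0$ — Peirce multiplication furnishes isomorphisms $V_{ij}\cong V_{ik}$, so all off-diagonal blocks share one common dimension $\delta$, the \emph{degree}; moreover the products obey the matrix-like rules $V_{ij}V_{jk}\subseteq V_{ik}$ and $V_{ij}V_{k\ell}=0$ whenever $\{i,j\}\cap\{k,\ell\}=\emptyset$.

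The decisive step is coordinatization. For $\rho\ge 3$ I would fix three indices and use the triple product $V_{12}\times V_{23}\to V_{13}$ to equip a single off-diagonal space with a bilinear product and a conjugation; the Jordan identity (\ref{JA}), transported through these Peirce rules, forces this coordinate algebra $\bb F$ to be alternative with a positive-definite norm, and one reconstructs $V\cong\mathcal H_\rho(\bb F)$. Hurwitz's theorem then pins $\bb F$ down to one of $\bb R,\bb C,\bb H,\bb O$. The rank-two case must be handled separately: there is no third index, the only product $V_{12}\times V_{12}\to V_{11}\oplus V_{22}$ yields a symmetric bilinear form alone, and $V$ is recovered as the spin factor $\Gamma(n)$; rank one is simply $\bb R=\mathcal H_1(\bb R)$.

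The part I expect to be the main obstacle is the interplay between the \emph{non-associativity} of $\bb O$ and the rank. When $\rho\ge 4$ one can choose four distinct indices $i,j,k,\ell$, and requiring the two ways of forming an element of $V_{i\ell}$ through $V_{ij}V_{jk}V_{k\ell}$ to agree forces the coordinate algebra to be \emph{associative}, which excludes $\bb O$; this is exactly why $\mathcal H_n(\bb O)$ satisfies (\ref{JA}) only for $n\le 3$, where mere alternativity suffices. Isolating this associativity constraint cleanly — and verifying that $\mathcal H_3(\bb O)$ genuinely survives as a Jordan algebra — is the technical crux. Assembling the cases by rank ($\rho=1$: $\mathcal H_1(\bb R)$; $\rho=2$: the family $\Gamma(n)$; $\rho\ge 3$: $\mathcal H_\rho(\bb R),\mathcal H_\rho(\bb C),\mathcal H_\rho(\bb H)$ together with the exceptional $\mathcal H_3(\bb O)$) then produces precisely the five families in the statement.
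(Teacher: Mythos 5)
The paper does not prove this theorem: it is quoted verbatim from the classical literature (the citation is to Jordan--von Neumann--Wigner \cite{JVW34}, with \cite{Koecher99, FK91} as background), so there is no in-paper argument to compare yours against. Your outline is the standard classification proof --- spectral resolution via formal reality, Peirce decomposition relative to a Jordan frame, connectedness of primitive idempotents from simplicity, coordinatization, Hurwitz --- and as a roadmap it is correct, including the two points that are genuinely delicate: that alternativity of the coordinate algebra suffices only at rank $3$ while rank $\ge 4$ forces associativity (excluding $\bb O$), and that $\mathcal H_3(\bb O)$ must be separately verified to satisfy (\ref{JA}).

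Two caveats keep this a proof plan rather than a proof. First, the heavy lifting is concentrated in steps you name but do not carry out: the spectral resolution of an element really requires passing to the associative subalgebra $\bb R[u]$ generated by $u$ and $e$ (power-associativity plus positive-definiteness of $\tau$ makes it a product of copies of $\bb R$); mere self-adjointness of the operator $L_u$ does not by itself produce the idempotents $c_i$ inside $V$. Likewise the coordinatization theorem and the verification that the Jordan identity transported through the Peirce rules forces alternativity (resp.\ associativity) are substantial arguments, not formalities. Second, to land exactly on the list as stated you must do the low-rank bookkeeping explicitly: $\Gamma(1)\cong\bb R\oplus\bb R$ is not simple (hence $n\ge 2$), and the rank-two stratum absorbs all of $\mathcal H_2(\bb F)$ via $\Gamma(2)\cong\mathcal H_2(\bb R)$, $\Gamma(3)\cong\mathcal H_2(\bb C)$, $\Gamma(5)\cong\mathcal H_2(\bb H)$, $\Gamma(9)\cong\mathcal H_2(\bb O)$, which is precisely why the matrix families start at $n\ge 3$ (or $n=1$). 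Your rank-stratified assembly handles this implicitly, but the statement's index restrictions should be derived, not just asserted.
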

Note that the last one is called the {\bf exceptional type} because it
cannot be obtained from an associative algebra via the recipe we mentioned early. Note also that
$\Gamma(1)$ is not simple, ${\mathcal H}_1(\bb F)=\bb R$ is the
only associative simple euclidean Jordan algebra, and we have
various isomorphisms: $\Gamma(2)\cong {\mathcal H}_2(\bb R)$,
$\Gamma(3)\cong {\mathcal H}_2(\bb C)$, $\Gamma(5)\cong {\mathcal
H}_2(\bb H)$, $\Gamma(9)\cong {\mathcal H}_2(\bb O)$. The first family is called the {\bf Dirac type} and the next three families are called the {\bf hermitian type}.

\vskip 10pt Throughout the remainder of this paper, we always use
$V$ to denote a simple euclidean Jordan algebra, and $e$ to denote
the identity element of $V$.

The notion of {\bf trace} is valid for Jordan algebras. For the
simple euclidean Jordan algebras, the trace can be easily described:
For $\Gamma(n)$, we have
$$
\tr(\lambda, \vec u)=2\lambda,
$$
and for all other types, it is the usual
trace on matrices.

Recall that $L_u$: $V\to V$ is the multiplication by $u$ and its
trace is denoted by $\mr{Tr}(L_u)$. It is a fact that $${1\over \dim
V}\mr{Tr}(L_u)={1\over \rho}\tr (u)$$ where $\rho:=\tr(e)$ is the {\bf
rank} of $V$.

For the {\bf inner product} on $V$, we take
\begin{eqnarray}\fbox{$\langle u\mid v\rangle :={1\over
\rho}\tr(uv)$}\end{eqnarray} so that $e$ becomes a unit vector. One
can check that $L_u$ is self-adjoint with respect to this inner
product: $ \langle v u\mid w\rangle = \langle v\mid u w\rangle$,
i.e., $L_u'=L_u$. 

We shall use Dirac's bracket notations: for $u, v\in V$, we
declare that $\mid u\rangle$ is the vector $u$, $\langle u\mid$
is the co-vector sending $z\in V$ to $\langle u\mid z\rangle$, and $\mid u\rangle\langle v\mid$ is the endomorphism
sending $z\in V$ to $\langle v\mid z\rangle u$. Since $L_u'=L_u$, the dual of $L_u$, denoted by $L_u^*$, maps $\langle z\mid$ to $-\langle L_u z\mid$.

We shall adopt the following conventions: $x$ is reserved for a point in
the smooth space $V$, and $u$, $v$, $z$, $w$ are reserved for vectors in
the vector space $V$. We shall fix an orthonomal basis $\{e_\alpha\}$ for $V$, with respect to which, we can express $x$ as $\sum_\alpha x^\alpha e_\alpha$. For simplicity, we shall write $\sum_\alpha e_\alpha {\partial \over \partial x^\alpha}$ as ${/\hskip -5pt
\partial}$.

From here on, we shall also use $V$ to denote the euclidean space
with underlying smooth space $V$ and Riemannian metric $ds^2_E$:
\begin{eqnarray}
T_xV\times T_xV & \to &  \bb R \cr ((x, u), (x, v)) &\mapsto &
\langle u\mid v\rangle.
\end{eqnarray}

For $u\in V$, we use $\hat L_u$ to denote the vector field on $V$
whose value at $x\in V$ is $(x, -ux)\in T_xV$.  Viewed as a differential operator, we have
$$
\hat L_u=-\langle ux\mid {/\hskip -5pt
\partial}\rangle.
$$

The following theorem is extremely useful when we do concrete
computations with Jordan algebras.
\begin{Th}[Jordan Frame]
Let $V$ be a simple euclidean Jordan algebra of rank $\rho$, $x_0\in
V$. Suppose that $x_0$ is non-zero and satisfies equation $x_0^2=\tr x_0\, x_0$. Then there is an
orthogonal basis for $V$: $e_{11}$, \ldots, $e_{\rho\rho}$,
$e_{ij}^\mu$ ($1\le i < j\le \rho$, $1\le\mu \le \delta$) such that

1) each basis vector has length $1\over \sqrt{\rho}$;

2) $e_{ii}^2=e_{ii}$, $(e_{jk}^\mu)^2={1\over 2}(e_{jj}+e_{kk})$;

3) $e_{ii}e_{ij}^\mu=e_{jj}e_{ij}^\mu={1\over 2} e_{ij}^\mu$;

4) $e_{ii}e_{jk}^\mu=0$ if $i\not\in\{j,k\}$, and $e_{ii}e_{jj}=0$ if $i\neq j$;

5) $e_{11}+\cdots+e_{\rho\rho}=e$;

6) $\tr e_{ii}=1$, $\tr e_{ij}^\mu=0$;

7) $x_0$ is a multiple of $e_{11}$.

\end{Th}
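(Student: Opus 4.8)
The plan is to reduce the statement to the existence of a Jordan frame and its associated Peirce decomposition, after which the seven properties fall out almost immediately; the only real computation is the squaring rule on an off-diagonal Peirce space.

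First I would observe that the hypothesis $x_0^2=\tr x_0\,x_0$ forces $x_0$ to be a positive multiple of a primitive idempotent. Indeed $\tr x_0\neq 0$: otherwise $x_0^2=0$, whence $\langle x_0\mid x_0\rangle=\frac{1}{\rho}\tr(x_0^2)=0$ and positive-definiteness of the inner product gives $x_0=0$, against the hypothesis. Setting $c_1:=x_0/\tr x_0$ one checks $c_1^2=c_1$ and $\tr c_1=1$, and an idempotent of trace $1$ is primitive. Taking $e_{11}:=c_1$ already gives property (7). Next I would invoke the spectral theory of euclidean Jordan algebras to extend $c_1$ to a complete system of orthogonal primitive idempotents $c_1,\dots,c_\rho$, so that $c_i^2=c_i$, $c_ic_j=0$ for $i\neq j$, and $\sum_i c_i=e$; this is property (5), and $e_{ii}:=c_i$ settles the diagonal part of (2).

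Because the $L_{c_i}$ are commuting self-adjoint operators with $L_{c_i}$ having eigenvalues in $\{0,\frac12,1\}$, their simultaneous eigenspace decomposition is the Peirce decomposition $V=\bigoplus_{i\le j}V_{ij}$, with $V_{ii}=\bb R c_i$ and $V_{ij}$ ($i<j$) the common $\frac12$-eigenspace of $L_{c_i}$ and $L_{c_j}$ (and $0$-eigenspace of the remaining $L_{c_k}$). Self-adjointness of the $L_{c_i}$ makes distinct Peirce spaces mutually orthogonal, and the Peirce multiplication relations give exactly properties (3) and (4). Here I would appeal to the simplicity of $V$ to conclude that all $V_{ij}$ with $i<j$ have a common dimension $\delta$, so that the indices $1\le\mu\le\delta$ make sense. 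For the squaring rule: given $u\in V_{ij}$, the Peirce relations put $u^2\in\bb R c_i\oplus\bb R c_j$, say $u^2=\alpha c_i+\beta c_j$. Pairing against $c_i$ and using self-adjointness of $L_u$ (established in the excerpt) together with $uc_i=\frac12 u$,
$$
\alpha=\rho\langle c_i\mid u^2\rangle=\rho\langle u c_i\mid u\rangle=\frac{\rho}{2}\langle u\mid u\rangle ,
$$
and symmetrically $\beta=\frac{\rho}{2}\langle u\mid u\rangle$, so $u^2=\frac{\rho}{2}\langle u\mid u\rangle\,(c_i+c_j)$. Consequently, choosing within each $V_{ij}$ an orthogonal basis $e_{ij}^1,\dots,e_{ij}^\delta$ normalized to length $1/\sqrt{\rho}$ yields $(e_{ij}^\mu)^2=\frac12(c_i+c_j)=\frac12(e_{ii}+e_{jj})$, the remaining part of (2). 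The length normalization is (1) for these vectors, while $\langle c_i\mid c_i\rangle=\frac{1}{\rho}\tr c_i=\frac{1}{\rho}$ gives (1) for the diagonal ones; and since $e_{ij}^\mu$ lies in a Peirce space orthogonal to every $c_k$, it is orthogonal to $e=\sum_k c_k$, so $\tr e_{ij}^\mu=\rho\langle e\mid e_{ij}^\mu\rangle=0$, which with $\tr c_i=1$ gives (6). Orthogonality of the full collection follows because the $c_i$ are mutually orthogonal, distinct Peirce spaces are orthogonal, and each $V_{ij}$-basis was chosen orthogonal.

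The main obstacle is not any of these verifications but the structural input I am taking for granted: the existence of a Jordan frame refining a given primitive idempotent, and the fact that simplicity forces the off-diagonal Peirce spaces to share a common dimension $\delta$. Once these standard facts of the Koecher theory are in place, the seven properties drop out exactly as above.
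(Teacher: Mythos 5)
Your argument is correct: the paper states this Jordan--frame theorem without proof, as standard background drawn from the cited references (Koecher; Faraut--Koranyi), and your reduction --- normalize $x_0$ to a primitive idempotent via $\tr x_0\neq 0$, complete it to a full orthogonal system, and read off properties (1)--(7) from the Peirce decomposition together with the self-adjointness of the $L_{c_i}$ and the computation $u^2=\tfrac{\rho}{2}\langle u\mid u\rangle(c_i+c_j)$ for $u$ in an off-diagonal Peirce space --- is exactly the standard route those references take. The structural inputs you flag (extension of a primitive idempotent to a Jordan frame, and simplicity forcing $\dim V_{ij}$ to be a common constant $\delta$) are indeed the only nontrivial ingredients, and they are precisely the theorems the paper is implicitly invoking, so nothing is missing.
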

In the above theorem, $\{e_{11}, \ldots, e_{\rho\rho}\}$ is called a
{\bf Jordan frame} and the parameter $\delta$ is called the {\bf
degree} of $V$.  If $i<j$, we use $V_{ij}$ to denote
$\mr{span}_{\bb R}\{e_{ij}^\mu\mid 1\le \mu \le \delta\}$ --- the $(i,j)$-{\bf Pierce
component}.

\begin{rmk}
Simple euclidean Jordan algebras are isomorphic if and only if they
have the same rank $\rho$ and same degree $\delta$, as one can verify
from the table below:
\begin{displaymath}
\begin{array}{|c|c|c|c|c|c|}
\hline
V & \Gamma(n) & {\mathcal H}_n(\bb R) & {\mathcal H}_n(\bb C) &{\mathcal H}_n(\bb H) & {\mathcal H}_3(\bb O)\\
\hline
\rho & 2 & n &  n & n &  3\\
\hline
\delta & n-1 & 1 &  2 &  4 &  8\\
\hline
\end{array}
\end{displaymath}
It is also clear from this table and the above theorem that, for the
simple euclidean Jordan algebras, there is one with rank-one,
infinitely many with rank two, four with rank three, and three with
rank four or higher.
\end{rmk}
\section{Tits-Kantor-Koecher Construction}\label{S:TKK}
The Tits-Kantor-Koecher construction yields a simple real Lie
algebras from a simple euclidean Jordan algebra. We begin with the
introduction of the {\bf Jordan triple product}:
$$
\{uvw\} :=u(vw)+w(vu)-(uw)v.
$$ One can check that the Jordan triple product satisfies the following identities:
\begin{eqnarray}\label{JTP}
\{wyz\} & = & \{zyw\},\cr \{uv\{zwy\}\} & = &
\{\{uvz\}wy\}-\{z\{vuw\}y\}+\{zw\{uvy\}\}.
\end{eqnarray}

For a pair $(u,v)\in V\times V$, we introduce the linear map
$S_{uv}$: $V\to V$ by declaring that $$S_{uv}(z):=\{uvz\}.$$ Then
Eq. (\ref{JTP}) is equivalent to the commutation relation
\begin{eqnarray}\label{strR}[S_{uv},S_{zw}] =
S_{\{uvz\}w}-S_{z\{vuw\}}.
\end{eqnarray} Therefore, the span of these
$S_{uv}$, denoted by $\frk{str}(V)$ or simply $\frk{str}$, becomes a real Lie algebra
--- the {\bf structure algebra} of $V$. One can check that
$S_{ue}=S_{eu}=L_u$ and $S_{uv} = [L_u, L_v]+L_{uv}$. Then $S_{uv}'=S_{vu}$, hence
$S_{uv}^*(\langle z\mid)=-\langle S_{vu}(z)\mid$. Therefore, in view of Eq. (\ref{strR}), we see that $z$ and $w$ in $S_{zw}$ transform under $\frk{str}(V)$ as a vector
and a co-vector respectively.

Eq. (\ref{usefulid}) amounts to saying that that $[L_u, L_v]$: $V\to V$ is a
derivation; in fact, any derivation is a linear combination of derivations of this form. The {\bf
derivation algebra} of $V$, denoted by $\frk{der}(V)$ or simply $\frk{der}$, is then a Lie
subalgebra of the structure algebra.

\vskip 10pt The {\bf conformal algebra} of $V$, denoted by
$\frk{co}(V)$ or simply $\frk{co}$, is a further extension of $\frk{str}(V)$. As a vector space, $$\frk{co}(V)=V\oplus \frk{str}(V)\oplus V^*.$$
In the following we shall rewrite $u\in V$ as $X_u$ and $\langle v\mid\, \in V^*$ as $Y_v$.
\begin{Th}[Koecher]\label{koecher} Let $V$ be a simple euclidean Jordan algebra, then
$\frk{co}(V)=V\oplus \frk{str}(V)\oplus V^* $ becomes a simple
real Lie algebra with the defining TKK commutation relations
\begin{eqnarray}\label{CONFA}
\fbox{$\begin{matrix}[X_u, X_v] =0, \quad [Y_u, Y_v]=0, \quad [X_u,
Y_v] = -2S_{uv},\cr\\ [S_{uv},
X_z]=X_{\{uvz\}}, \quad [S_{uv}, Y_z]=-Y_{\{vuz\}},\cr\\
[S_{uv}, S_{zw}] = S_{\{uvz\}w}-S_{z\{vuw\}}
\end{matrix}$}
\end{eqnarray}for $u$, $v$, $z$, $w$ in $V$.
\end{Th}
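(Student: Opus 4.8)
The plan is to establish the two assertions of the theorem separately: first that the bracket prescribed by (\ref{CONFA}) makes $\frk{co}(V)$ a Lie algebra, and then that this Lie algebra is simple. The organizing device throughout is the $\bb Z$-grading $\frk{co}(V)=\frk{co}_{-1}\oplus\frk{co}_0\oplus\frk{co}_1$ with $\frk{co}_{-1}=V^*$ (the $Y_v$), $\frk{co}_0=\frk{str}(V)$ (the $S_{uv}$), and $\frk{co}_1=V$ (the $X_u$); one reads off from (\ref{CONFA}) that the bracket respects this grading, so $[\frk{co}_i,\frk{co}_j]\subseteq\frk{co}_{i+j}$ with $\frk{co}_k=0$ for $|k|\ge 2$.

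For the Lie algebra structure, antisymmetry is built into (\ref{CONFA}), so only the Jacobi identity must be checked, and by bilinearity it suffices to test it on triples drawn from $\{X_u,Y_v,S_{pq}\}$, sorted by grading type. The cases $(1,1,1)$, $(-1,-1,-1)$, $(1,1,0)$, $(-1,-1,0)$ vanish trivially because $[X_u,X_v]=0=[Y_u,Y_v]$. The purely degree-zero case $(0,0,0)$ is exactly the Jacobi identity inside $\frk{str}(V)$, already recorded as (\ref{strR}). The cases $(1,0,0)$ and $(-1,0,0)$ assert that $S_{uv}\colon z\mapsto\{uvz\}$ is a representation of $\frk{str}(V)$ on $V$ and that $S_{uv}\colon Y_z\mapsto -Y_{\{vuz\}}$ is the contragredient representation on $V^*$; both follow from the second identity in (\ref{JTP}). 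The mixed case $(1,1,-1)$ collapses, after using $[X_u,Y_v]=-2S_{uv}$ and $[S_{uv},X_z]=X_{\{uvz\}}$, to the requirement $\{uwv\}=\{vwu\}$, which is precisely the first identity in (\ref{JTP}), and $(-1,-1,1)$ is dual. The last case $(1,-1,0)$ expresses that the map $(u,v)\mapsto S_{uv}$ is $\frk{str}(V)$-equivariant, and a short manipulation shows this is again nothing but (\ref{strR}). Thus every instance of Jacobi reduces to one of the two Jordan-triple identities (\ref{JTP}).

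For simplicity, the key observation is that $h_0:=S_{ee}$ is a grading element: since $e$ is the unit one has $S_{ee}=\mr{Id}_V$, and (\ref{CONFA}) gives $[h_0,X_u]=X_u$, $[h_0,Y_v]=-Y_v$, $[h_0,S_{uv}]=0$, so $\mr{ad}(h_0)$ is semisimple with eigenvalues $+1,0,-1$ on $\frk{co}_1,\frk{co}_0,\frk{co}_{-1}$. Any ideal $\frk a$ is therefore graded, $\frk a=(\frk a\cap\frk{co}_{-1})\oplus(\frk a\cap\frk{co}_0)\oplus(\frk a\cap\frk{co}_1)$. I would then invoke two facts about $V$ as an $\frk{str}(V)$-module: it is faithful (elements of $\frk{str}(V)$ are by construction endomorphisms of $V$), and it is irreducible. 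Irreducibility uses the simplicity of $V$ from Theorem \ref{JAclassification}: since $L_u=S_{ue}\in\frk{str}(V)$, any nonzero $\frk{str}(V)$-submodule $W\subseteq V$ satisfies $VW\subseteq W$ and is hence a Jordan ideal, forcing $W=V$; dualizing gives irreducibility of $V^*$.

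Given a nonzero ideal $\frk a$, three cases arise. If $\frk a\cap\frk{co}_1\ne 0$, then being an $\frk{str}(V)$-submodule of the irreducible module $V$ it equals all of $\frk{co}_1$, so $X_u\in\frk a$ for every $u$; bracketing with $Y_v$ produces all $S_{uv}$, hence $\frk{co}_0\subseteq\frk a$, and $[h_0,Y_v]=-Y_v$ then yields all of $\frk{co}_{-1}$, so $\frk a=\frk{co}(V)$. The case $\frk a\cap\frk{co}_{-1}\ne 0$ is symmetric. Finally, if $\frk a\subseteq\frk{co}_0$, then for $0\ne A\in\frk a$ one has $[A,X_z]=X_{A(z)}\in\frk a\cap\frk{co}_1=0$ for all $z$, so $A$ vanishes as an endomorphism of $V$; faithfulness forces $A=0$, a contradiction. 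Hence every nonzero ideal is all of $\frk{co}(V)$, and as $[X_u,Y_v]=-2S_{uv}\ne 0$ shows $\frk{co}(V)$ is non-abelian, it is simple. I expect the main obstacle to be the bookkeeping in the mixed-grade cases $(1,1,-1)$ and $(1,-1,0)$, where one must track signs and the first-versus-third-argument symmetry of the triple product carefully to see the reduction to (\ref{JTP}) and (\ref{strR}); the remainder is either a trivial grading observation or a direct appeal to the irreducibility and faithfulness of $V$ over $\frk{str}(V)$.
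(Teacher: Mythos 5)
Your proof is correct, but note that the paper itself offers no proof of Theorem \ref{koecher}: it is quoted as Koecher's result (the section points to \cite{Koecher99, FK91} and the TKK papers \cite{TKK60}), and the nearest thing to an argument in the text is Proposition \ref{prop1}, which verifies the relations (\ref{CONFA}) for the concrete vector fields $\hat S_{uv}$, $\hat X_u$, $\hat Y_v$ on $V$. That realization route would deliver the Jacobi identity essentially for free (the three families are polynomial vector fields of homogeneous degrees $1$, $0$, $2$, and the map $\frk{co}(V)\to\frk X(V)$ is injective since $\{eve\}=P(e)v=v$), but it says nothing about simplicity, which your argument supplies. Your case-by-case check of Jacobi along the $\bb Z$-grading is the standard self-contained proof and every reduction you describe does go through: in the $(1,-1,0)$ case the three terms sum to $-2[S_{uv},S_{zw}]+2S_{u\{wzv\}}-2S_{\{zwu\}v}=-2[S_{uv},S_{zw}]-2[S_{zw},S_{uv}]=0$ by two applications of (\ref{strR}), and in the $(1,1,-1)$ case one is left with $2X_{\{uwv\}}-2X_{\{vwu\}}=0$ by the outer symmetry in (\ref{JTP}), exactly as you say. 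The simplicity argument via the grading element $h_0=S_{ee}=L_e=\mr{Id}_V$ is also sound; in particular your irreducibility step is the right one — a $\frk{str}(V)$-submodule of $V$ is $L_u$-stable for every $u$, hence a Jordan ideal, and simplicity of $V$ forces it to be $0$ or $V$. The only point worth making explicit is that (\ref{CONFA}) is written on the spanning set $\{S_{uv}\}$, which is not a basis of $\frk{str}(V)$, so one must observe that the bracket is well defined on $\frk{str}(V)$ itself; this is immediate from your own framing, since $[S,X_z]=X_{S(z)}$, $[S,Y_z]=-Y_{S'(z)}$ (using $S_{uv}'=S_{vu}$) and $[S,T]$ is the operator commutator, all of which depend only on $S$ and $T$ as endomorphisms. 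With that remark added, there is no gap.
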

It is not hard to remember the TKK commutation relations in the
above theorem if one notices that, under
the action of the structure algebra, $u$ and
$v$ in $S_{uv}$ transform as a vector and a co-vector respectively.  It follows from the TKK commutation relations that
\begin{eqnarray}\label{derivation}
[D, L_u]=L_{Du}, \quad [D, X_u]=X_{Du}, \quad [D, Y_u]=Y_{Du}
\end{eqnarray}
for any derivation $D\in\frk{der}$ and any $u\in V$. In particular, we have $[D, L_e]=0$, $[D, X_e]=0$ and $[D, Y_e]=0$.

Let $V$ be a simple euclidean Jordan algebra with an orthonormal
basis $\{e_\alpha\}$ chosen.  Upon recalling the definition of ${/\hskip -5pt
\partial} $, we can introduce differential operators
$$
\hat S_{uv}=-\langle S_{uv}(x)\mid {/\hskip -5pt
\partial} \rangle, \quad \hat X_u=-\langle u\mid {/\hskip -5pt
\partial} \rangle, \quad \hat Y_v=-\langle \{xvx\}\mid {/\hskip -5pt
\partial} \rangle.
$$
One can verify that $\hat S_{ue}=\hat S_{eu}=\hat L_u$ and $\hat S_{uv}=[\hat L_u, \hat L_v]+\hat L_{uv}$.

The following proposition implies that the conformal algebra of $V$ can be
realized as a Lie subalgebra of the Lie algebra of vector fields on
$V$.

\begin{Prop}\label{prop1}
The TKK commutation relations can be realized
by vector fields $\hat S_{uv}$, $\hat X_z$, $\hat Y_w$.

\end{Prop}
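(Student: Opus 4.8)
The plan is to treat $\hat X_u$, $\hat S_{uv}$, $\hat Y_v$ as genuine vector fields on $V$ and to compute their Lie brackets directly, matching the six relations of \eqref{CONFA} one at a time. For a smooth map $a\colon V\to V$, write $\partial_a:=\langle a(x)\mid {/\hskip -5pt\partial}\rangle$, so that $\hat X_u=-\partial_u$, $\hat S_{uv}=-\partial_{S_{uv}x}$ and $\hat Y_v=-\partial_{\{xvx\}}$; thus the three families are vector fields whose \emph{symbols} (the vector-valued functions $a(x)$) are respectively constant, linear, and quadratic in $x$. The first step is to record the general commutator rule $[\partial_a,\partial_b]=\partial_{D_ab-D_ba}$, where $D_ab$ denotes the directional derivative of the symbol $b$ along $a$ (the Jacobian of $b$ applied to $a(x)$); equivalently $[\hat A,\hat B]=-\partial_c$ with symbol $c=D_ba-D_ab$. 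Since every symbol here has degree at most two, each bracket stays within this class, and the whole computation becomes purely algebraic in the Jordan triple product.

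With this rule the easy relations fall out at once. One has $[\hat X_u,\hat X_v]=0$ because both symbols are constant; the relation $[\hat S_{uv},\hat X_z]=\hat X_{\{uvz\}}$ follows because differentiating the linear symbol $S_{uv}x$ along the constant $z$ produces $\{uvz\}$; and $[\hat S_{uv},\hat S_{zw}]=\hat S_{\{uvz\}w}-\hat S_{z\{vuw\}}$ reduces, after the chain rule is applied to the two linear symbols, to exactly the structure relation \eqref{strR} for $[S_{uv},S_{zw}]$. The relation $[\hat X_u,\hat Y_v]=-2\hat S_{uv}$ needs only that the derivative of the quadratic symbol $\{xvx\}$ along $u$ equals $2\{xvu\}$, together with the symmetry $\{xvu\}=\{uvx\}$ from the first line of \eqref{JTP}.

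The substantive work is in the two remaining relations, where the fundamental identity (the second line of \eqref{JTP}) enters. For $[\hat S_{uv},\hat Y_z]=-\hat Y_{\{vuz\}}$ I would compute the symbol $c=\{uv\{xzx\}\}-2\{xz\{uvx\}\}$ and then apply the fundamental identity with $(z,w,y)\mapsto(x,z,x)$; using the symmetry $\{\{uvx\}zx\}=\{xz\{uvx\}\}$ this collapses $\{uv\{xzx\}\}$ to $2\{xz\{uvx\}\}-\{x\{vuz\}x\}$, leaving $c=-\{x\{vuz\}x\}$, which is precisely the symbol of $-\hat Y_{\{vuz\}}$. The hardest case is $[\hat Y_u,\hat Y_v]=0$: here the symbol is $c=2\{xu\{xvx\}\}-2\{xv\{xux\}\}$, and I expect the obstacle to be that a single application of the fundamental identity does not visibly vanish. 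The resolution is to apply the identity twice, once to $\{xv\{xux\}\}$ and once to $\{xu\{xvx\}\}$ with the roles of $u$ and $v$ interchanged, and to note that the symmetry $\{vxu\}=\{uxv\}$ makes the two ``diagonal'' terms $\{x\{vxu\}x\}$ coincide; writing $A=\{xv\{xux\}\}$, $B=\{xu\{xvx\}\}$ and $C=\{x\{vxu\}x\}$, the two instances read $A=2B-C$ and $B=2A-C$, a linear system forcing $A=B$ and hence $c=0$. Once these six brackets are verified, the assignment $X_u\mapsto\hat X_u$, $S_{uv}\mapsto\hat S_{uv}$, $Y_v\mapsto\hat Y_v$ realizes \eqref{CONFA} by vector fields, which is the assertion of the proposition.
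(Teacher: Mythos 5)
Your proposal is correct and follows essentially the same route as the paper: a direct computation of all six Lie brackets, with the fundamental identity (equivalently, the structure relation \eqref{strR} applied to $S_{xz}$) doing the work in the $[\hat S_{uv},\hat Y_z]$ and $[\hat Y_u,\hat Y_v]$ cases. Your $2\times 2$ linear system $A=2B-C$, $B=2A-C$ for the last bracket is exactly the paper's observation that the expression ``is equal to the negative half of itself,'' just packaged more transparently.
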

\begin{proof}
It is clear that $[\hat X_u, \hat X_v]=0$.

\begin{eqnarray}
[\hat S_{uv}, \hat S_{zw}] &=&[-\langle S_{uv}(x)\mid {/\hskip -5pt
\partial} \rangle,
-\langle S_{zw}(x)\mid {/\hskip -5pt
\partial} \rangle ]\cr
&=&-\langle S_{uv}S_{zw}(x)\mid {/\hskip -5pt
\partial} \rangle + \langle S_{zw}S_{uv}(x)\mid {/\hskip -5pt
\partial} \rangle\cr
&=& -\langle [S_{uv},S_{zw}](x)\mid {/\hskip -5pt
\partial} \rangle\cr
&=&-\langle (S_{\{uvz\}w}-
S_{z\{vuw\}})(x)\mid {/\hskip -5pt
\partial} \rangle\cr
&=&\hat S_{\{uvz\}w} - \hat S_{z\{vuw\}}.\nonumber
 \end{eqnarray}
\begin{eqnarray}
[\hat S_{uv}, \hat X_z] &=&[-\langle S_{uv}(x)\mid
{/\hskip -5pt
\partial}\rangle, -\langle z\mid {/\hskip -5pt
\partial}\rangle
]\cr
&=&- \langle S_{uv}(z)\mid
{/\hskip -5pt
\partial}\rangle=\hat X_{\{uvz\}}.\nonumber
\end{eqnarray}
\begin{eqnarray}
[\hat X_u, \hat Y_v]&=&[-\langle u\mid
{/\hskip -5pt
\partial}\rangle, -\langle \{xvx\}\mid {/\hskip -5pt
\partial}\rangle]\cr
&=&2\langle \{uvx\}\mid
{/\hskip -5pt
\partial}\rangle=-2\hat S_{uv}.\nonumber
\end{eqnarray}
\begin{eqnarray}
[\hat S_{uv}, \hat Y_z] &=&[-\langle S_{uv}(x)\mid
{/\hskip -5pt
\partial}\rangle, -\langle \{xzx\}\mid {/\hskip -5pt
\partial}\rangle
]\cr &=&2\langle \{S_{uv}(x)zx\}\mid
{/\hskip -5pt
\partial}\rangle  -\langle S_{uv}(\{xzx\})\mid {/\hskip -5pt
\partial}\rangle
\cr
&=&2\langle S_{xz}S_{uv}(x)\mid {/\hskip -5pt
\partial}\rangle -\langle S_{uv} S_{xz}(x)\mid {/\hskip -5pt
\partial}\rangle\cr
&=&\langle S_{xz}S_{uv}(x)\mid {/\hskip -5pt
\partial}\rangle-\langle [S_{uv}, S_{xz}](x)\mid {/\hskip -5pt
\partial}\rangle\cr
&=&\langle S_{xz}S_{uv}(x)-S_{\{uvx\}z}(x)\mid {/\hskip -5pt
\partial}\rangle+\langle S_{x\{vuz\}}(x)\mid {/\hskip -5pt
\partial}\rangle\cr
&=& -\hat Y_{\{vuz\}}.\nonumber
\end{eqnarray}

Finally, 
\begin{eqnarray}
[\hat Y_u, \hat Y_v] &=&[-\langle  \{x ux \}\mid {/\hskip -5pt
\partial}\rangle -\langle  \{x vx \}\mid {/\hskip -5pt
\partial}\rangle]\cr
&=& 2\langle \{ \{x ux \}vx\}\mid {/\hskip -5pt
\partial}\rangle - <u\leftrightarrow v>\cr
&=& 2\langle [S_{x v},S_{x u}] x\mid {/\hskip -5pt
\partial}\rangle\\
&=&\langle [S_{x v},S_{x u}] x\mid {/\hskip -5pt
\partial}\rangle - <u\leftrightarrow v>\cr
&=& \langle (S_{\{x vx\}u}-S_{x \{vxu\}})x\mid {/\hskip -5pt
\partial}\rangle - <u\leftrightarrow v>\cr
&=&  \langle S_{\{x vx\}u}x\mid {/\hskip -5pt
\partial}\rangle - <u\leftrightarrow v>\quad \mbox{because  $\{ux v\}=\{vx u\}$}\cr&=&  - \langle [S_{xv}, S_{xu}]x\mid {/\hskip -5pt
\partial}\rangle \\
&=&0,\nonumber
\end{eqnarray}
because it is equal to the negative half of itself. Here,  $<u\leftrightarrow v>$ means the term same as the one on the left except that $u$ and $v$ are interchanged.

%%%%%%%%%%%%
\begin{comment}
ii) As differential operators on $V$,
\begin{eqnarray}
\hat L_{\{uzv\}}=\hat S_{u(zv)}+\hat S_{v(zu)}-\hat S_{(uv)z}.
\end{eqnarray}

ii) It suffices to show that
\begin{eqnarray}\label{smallid}
L_{\{uzv\}}= S_{u(zv)}+ S_{v(zu)}- S_{(uv)z}
\end{eqnarray}
or equivalently
\begin{eqnarray}
[L_u, L_{zv}]+[L_v, L_{zu}]-[L_{uv}, L_z]=0.\nonumber
\end{eqnarray} This last identity is valid because it is the polarization of identity
$[L_{u^2}, L_u]=0$.
\end{comment}
\end{proof}

\begin{comment}
\begin{rmk}
Viewing $\hat S_{uv}$, $\hat X_u$ and $\hat X_v$ as vector fields on $V$,  since $T_xV=\{x\}\times V$ for any $x\in V$,  we have  
$$
\hat S_{uv} \mid _x=(x, -S_{uv}(x)), \quad \hat X_{u} \mid _x=(x, -u),  \quad \hat Y_v\mid _x=(x, -\{xvx\}).
$$
\end{rmk}

\begin{rmk}
The assignment of $\hat S_{uv}$ to $S_{uv}$, $\hat X_u$ to $X_u$ and $\hat Y_v$ to $Y_v$ defines an action of $\frk{co}(V)$ on $C^\infty(V)$. This action can be lifted to an action
of $\frk{co}(V)$ on $C^\infty(T^*V)$:
\[
\begin{CD}
C^\infty(V) @> \pi^* >> C^\infty(T^*V) \\
 @ V{\hat Z} VV @ VV {\{h_Z,\, \}} V \\
C^\infty(V) @>> \pi^* > C^\infty(T^*V). \\
\end{CD}
\]
Here $\pi$: $T^*V\to V$ is the bundle projection map, $Z$ is an element of $\frk{co}$, $\{\, , \,\}$ is the Poisson bracket,  and function $h_Z$ depends on $Z$ linearly such that
\begin{eqnarray}
h_{Z}(x, p)=-p(\hat Z\mid_x).
\end{eqnarray}
Note that, the action on $C^\infty(T^*V)$ comes from a underlying Hamiltonian action
on $T^*V$. 
\end{rmk}
\end{comment}

We conclude this section with a few more terminologies and notations. Let $\Omega$ be
the {\bf symmetric cone} of $V$ and $\mr{Str}(V)$ be the {\bf structure group} of $V$.  By definition,
$\Omega$ is the topological interior of $\{x^2\mid x\in V\}$ and 
$$
\mr{Str}(V)=\{g\in GL(V)\mid P(gx) =gP(x)g' \quad \forall x\in V\}.
$$
Here $P(x):=2L_x^2-L_{x^2}$ and it is called the {\bf quadratic representation} of $x$.

We write $V^{\bb C}$ for the complexification of $V$, denote by $T_\Omega$ the tube domain associated with $V$. By definition, $T_\Omega=V\oplus i\Omega$ --- a domain inside $V^{\bb C}$. We say that map $f$: $T_\Omega\to T_\Omega$ is a holomorphic automorphism of $T_\Omega$ if $f$ is invertible and both $f$ and $f^{-1}$ are holomorphic. We use $\mr{Aut}(T_\Omega)$ to denote the group of holomorphic automorphisms
of $T_\Omega$. 

It is a fact that both $\mr{Str}(V)$ and $\mr{Aut}(T_\Omega)$  are Lie groups.  The Lie algebra of $\mr{Str}(V)$ is $\frk{str}(V)$, the Lie algebra of $\mr{Aut}(T_\Omega)$ is $\frk{co}(V)$, and its universal enveloping algebra is called the {\bf TKK algebra} of $V$. The simply connected Lie group with $\frk{co}$ as its Lie algebra, denoted by $\mr{Co}(V)$ or simply $\mr{Co}$, shall be referred to as the {\bf conformal group} of $V$. While the structure algebra is reductive, the conformal algebra is simple.

\section{Cantan Involutions and Vogan Diagrams}\label{S:Vogan}
The complex simple Lie algebras are completely classified by (connected) Dynkin diagrams.
The real simple Lie algebras are completely classified too, and can be represented by Vogan diagrams --- Dynkin diagrams
with some extra information on the Dynkin nodes.

\subsection{Generalities}
Here is a quick review of real simple Lie algebras and Vogan diagrams. Let $\frk g$ be a real simple Lie algebra, and $\langle\,,\,\rangle$ be its Killing form. An involution $\theta$ on $\frk g$ is called a {\bf Cartan involution} if the bilinear form $(X, Y)\mapsto \langle X, \theta(Y)\rangle$ is negative definite. Given a Cartan involution $\theta$, we have the corresponding {\bf Cartan decomposition}:
$$
\frk g=\frk{u}\oplus\frk{p}.
$$
Here, $\frk u$ ($\frk p$ resp.) is the eigenspace of $\theta$ with eigenvalue $1$ ($-1$ resp.). A subalgebra $\frk h$ of $\frk g$ is called a
{\bf $\theta$-stable Cartan subalgebra} of $\frk g$ if $\frk h^{\bb C}$ is a Cartan algebra of $\frk g^{\bb C}$ and $\theta(\frk h)=\frk h$. Here, $\frk h^{\bb C}$ ($\frk g^{\bb C}$ resp.) is the complexification of $\frk h$ ($\frk g$ resp.).

Here are some basic facts:
\begin{itemize}
\item There is a Cartan involution $\theta$ on $\frk g$, unique up
to conjugations.

\item $\mr{span}_{\bb R}\{X+iY\mid X\in \frk u, Y\in \frk p\}$ is a compact Lie algebra.

\item $\theta$-stable Cartan subalgebra of $\frk g$ exists, but are not all conjugate to each other.
\end{itemize}

Given a $\theta$-stable Cartan subalgebra $\frk h$, there is a corresponding root space decomposition:
$$
\frk{g}^{\bb C}=\frk{h}^{\bb
C}\oplus\bigoplus_{\alpha\in\Delta}\frk{g}_\alpha.
$$
A root $\alpha$ w.r.t. $(\frk g^{\bb C}, \frk h^{\bb C})$ is called {\it compact} ({\it non-compact} resp.) if $\frk g_\alpha$ is a subspace of $\frk u^{\bb C}$ ($\frk p^{\bb C}$ resp.). Here is a simple fact on compact or non-compact roots: suppose that root $\alpha$ is either compact or non-compact, then one can choose root vectors $E_\alpha\in{\frk g}_\alpha$,
$E_{-\alpha}\in{\frk g}_{-\alpha}$ with both $\sqrt{-1}(E_\alpha+E_{-\alpha})$ and $E_\alpha-E_{-\alpha}$ in
$\frk{g}$, and an element $H_\alpha$ in $\sqrt{-1}{\frk h}$,  such
that\footnote{In this convention, $H_\alpha\in \sqrt{-1}{\frk h}$ is fixed
by condition $\alpha(H_\alpha)=2$, $E_\alpha$ is unique up to a sign
and $E_{-\alpha}$ is fixed once $E_\alpha$ is fixed.}
\begin{eqnarray}\label{convention}
[H_\alpha, E_{\pm\alpha}] =\pm 2E_{\pm\alpha},\quad [E_\alpha, E_{-\alpha}] = \left\{\begin{matrix} H_\alpha & \mbox{if $\alpha$ is compact}\cr
-H_\alpha & \mbox{if $\alpha$ is non-compact.}\end{matrix}\right.
\end{eqnarray}

We say that a $\theta$-stable Cartan subalgebra of $\frk g$ is {\it maximally compact} if $\dim ({\frk h}\cap \frk u)$ is as large as possible. To get a Vogan diagram, the first step is to find a
{\it maximally compact} $\theta$-stable Cartan subalgebra $\frk{h}$ for
$\frk{g}$. The next step is to chose a simple root system $R$ (or Weyl chamber) for the corresponding root system $\Delta$. Such a $R$ is unique up to the action by the Weyl group $W(\Delta)$. Since $\frk h$ has been chosen to be maximally compact, the roots w.r.t. $(\frk g^{\bb C}, \frk h^{\bb C})$ never vanish on $\frk h\cap \frk u$, hence are either complex-valued or imaginary-valued on $\frk h$. So $R$ splits
into two classes: {\it complex} and {\it imaginary}. Since $R$ is invariant
under complex conjugation, the class of complex simple roots splits further into various conjugate pairs of simple roots. Since the corresponding root space $\frk g_\alpha$ for an imaginary root $\alpha$ is a subspace of $\frk u^{\bb C}$ or $\frk p^{\bb C}$, the class of imaginary simple roots splits further into two subclasses: compact and non-compact.

\begin{Def} By definition, a {\bf Vogan diagram} is a Dynkin diagram with
such an information about its nodes recorded: we paint each
imaginary noncompact node black, connect each conjugate pair of
complex nodes by a two-way arrow, and do nothing to the imaginary
compact nodes.
\end{Def}

Note that one can recover the simple real Lie algebra from one of its Vogan diagrams.  Note also that, in the equal rank case (i.e., the case when $\frk g$ and $\frk u$ have the same rank), $\frk h\subset \frk u$, so every root is either compact or non-compact, and there is no conjugate pair of Dynkin nodes.

\subsection{Analysis for the conformal algebra}
The conformal algebra is a real simple Lie algebra per theorem
\ref{koecher}, so it admits a Cartan involution $\theta$, unique up
to inner automorphisms. Indeed, one can choose $\theta$ such that
$$
\theta(X_u)=Y_u, \quad \theta(Y_u)=X_u, \quad
\theta(S_{uv})=-S_{vu}.
$$  The resulting Cartan decomposition is
$\frk{co}=\frk{u}\oplus\frk{p}$ with
$$\frk{u}=\mr{span}_{\bb R}\{[L_u, L_v], X_w+Y_w\mid u, v, w\in V\}, \quad
\frk{p}=\mr{span}_{\bb R}\{L_u, X_v-Y_v\mid u, v\in V\}.$$
Note that $\frk{u}$ is reductive with center spanned by $X_e+Y_e$ and its semi-simple part  $\bar {\frk{u}}$ is
$$ \mr{span}_{\bb R}\{[L_u, L_v], X_w+Y_w\mid u,
v, w\in ({\bb R}e)^\perp\}.$$
Sometime we need to emphasize the dependence on $V$, then we rewrite $\frk u$ as ${\frk u}(V)$. It is a fact that $\frk{str}$ and $\frk{u}$ are different real forms of the same complex reductive Lie algebra. In fact, one can identify their complexfications as follows:
\begin{eqnarray}\label{id: str=k}
[L_u, L_v] \leftrightarrow [L_u, L_v],\quad
-{i\over 2}(X_w+Y_w)  \leftrightarrow  L_w.
\end{eqnarray}
Now we have another natural chain of real Lie algebras associated
with the Jordan algebra $V$:
\begin{eqnarray}
\frk{der}\subset\bar{\frk{u}}\subset\frk{u}\subset\frk{co}.\nonumber
\end{eqnarray}
We shall use $\tilde{\mr U}$ to denote the closed Lie subgroup of $\mr{Co}$
whose Lie algebra is $\frk u$. Note that  $\tilde{\mr U}$  is a closed maximal
subgroup of $\mr {Co}$ with $\tilde{\mr U}/{\mr Z}$ is compact. (Here $\mr
Z$ is the center of $\mr {Co}$.)

Here is a detailed summary of all real Lie algebras we have
encountered:
\begin{displaymath}
\begin{array}{|c|c|c|c|c|}
\hline
V & \frk{der} & \frk{str} & \frk{u} & \frk{co} \\
\hline
\Gamma(n) & \frk {so}(n) & \frk{so}(n,1)\oplus \bb R & \frk{so}(n+1)\oplus \frk{so}(2) & \frk{so}(n+1,2) \\
\hline
{\mathcal H}_n(\bb R)  & \frk{so}(n) & \frk{sl}(n,{\bb R})\oplus \bb R & \frk{u}(n) &  \frk{sp}(n,{\bb R}) \\
\hline
{\mathcal H}_n(\bb C)  & \frk{su}(n) & \frk{sl}(n,{\bb C})\oplus \bb R & \frk{su}(n)\oplus\frk{su}(n)\oplus \frk{u}(1) &  \frk{su}(n,n)\\
\hline
{\mathcal H}_n(\bb H) & \frk{sp}(n) &  \frk{su}^*(2n)\oplus \bb R & \frk{u}(2n)& \frk{so}^*(4n)  \\
\hline
{\mathcal H}_3(\bb O) & \frk{f}_4 & \frk{e}_{6(-26)}\oplus \bb R & \frk{e}_{6}\oplus \frk{so}(2) & \frk{e}_{7(-25)}  \\
\hline
\end{array}
\end{displaymath}

To get a Vogan diagram for $\frk{co}$, the first step is to find a
maximally compact $\theta$-stable Cartan subalgebra $\frk{h}$ for
$\frk{co}$. Since we are in the equal rank case, $\frk{h}\subset\frk u$, and a root $\alpha$ is either {\em
compact} or {\em non-compact}. Recall that $e_{11}$ denotes the first element of a Jordan frame for $V$.

\begin{Lem}\label{LemmaVogan}
There is a maximally compact $\theta$-stable Cartan subalgebra $\frk h$ for
$\frk{co}$, with respect to which, there is a simple root system
consisting of imaginary roots $\alpha_0$, $\alpha_1$, \ldots,
$\alpha_r$ such that, for $i\ge 1$, $\alpha_i$ is compact with $H_{\alpha_i},
E_{\pm\alpha_i}\in \bar { \frk u}^{\bb C}$,  and $\alpha_0$
is non-compact with
$$
H_{\alpha_0}=i(X_{e_{11}}+Y_{e_{11}}), \quad E_{\pm \alpha_0}={i\over
2}(X_{e_{11}}-Y_{e_{11}})\mp L_{e_{11}}.
$$  

\end{Lem}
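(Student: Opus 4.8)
The plan is to realize $\frk h$ as a Cartan subalgebra of the maximal compact subalgebra $\frk u$ and to recognize $\alpha_0$ as the (noncompact) Harish-Chandra simple root attached to the primitive idempotent $e_{11}$; the explicit formulas for $H_{\alpha_0}$ and $E_{\pm\alpha_0}$ are then verified by a direct bracket computation. First I would exploit that $e_{11}^2=e_{11}$, so that $S_{e_{11}e_{11}}=[L_{e_{11}},L_{e_{11}}]+L_{e_{11}^2}=L_{e_{11}}$, and read off from the TKK relations (\ref{CONFA}) that
\[
[X_{e_{11}},Y_{e_{11}}]=-2L_{e_{11}},\qquad [L_{e_{11}},X_{e_{11}}]=X_{e_{11}},\qquad [L_{e_{11}},Y_{e_{11}}]=-Y_{e_{11}}.
\]
Thus $\{X_{e_{11}},Y_{e_{11}},L_{e_{11}}\}$ spans a $\theta$-stable $\frk{sl}(2,\bb R)$ inside $\frk{co}$ (recall $\theta(X_{e_{11}})=Y_{e_{11}}$ and $\theta(L_{e_{11}})=-L_{e_{11}}$), with compact part $\bb R(X_{e_{11}}+Y_{e_{11}})$. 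Substituting $H_{\alpha_0}=i(X_{e_{11}}+Y_{e_{11}})$ and $E_{\pm\alpha_0}=\tfrac{i}{2}(X_{e_{11}}-Y_{e_{11}})\mp L_{e_{11}}$ into these three brackets yields $[H_{\alpha_0},E_{\pm\alpha_0}]=\pm 2E_{\pm\alpha_0}$ and $[E_{\alpha_0},E_{-\alpha_0}]=-H_{\alpha_0}$, which is exactly the noncompact normalization (\ref{convention}). Moreover $E_{\alpha_0}+E_{-\alpha_0}=i(X_{e_{11}}-Y_{e_{11}})$ and $E_{\alpha_0}-E_{-\alpha_0}=-2L_{e_{11}}$, so $\sqrt{-1}(E_{\alpha_0}+E_{-\alpha_0})=-(X_{e_{11}}-Y_{e_{11}})$ and $E_{\alpha_0}-E_{-\alpha_0}$ both lie in $\frk p\subset\frk{co}$, while $E_{\pm\alpha_0}\in\frk p^{\bb C}$; hence once $\alpha_0$ is exhibited as a root it is automatically noncompact.

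Next I would build $\frk h$. Since $\frk u=\bb R(X_e+Y_e)\oplus\bar{\frk u}$ with $\bar{\frk u}$ compact and semisimple, I pick a maximal torus $\bar{\frk h}$ of $\bar{\frk u}$ containing the element $w_0:=X_w+Y_w$, where $w:=e_{11}-\tfrac1\rho e$. The reality check $\langle w\mid e\rangle=\langle e_{11}\mid e\rangle-\tfrac1\rho=\tfrac1\rho-\tfrac1\rho=0$ shows $w\perp e$, so indeed $w_0\in\bar{\frk u}$, and $w_0$ is ad-semisimple because $\bar{\frk u}$ is compact, so such a torus exists. Setting $\frk h:=\bb R(X_e+Y_e)\oplus\bar{\frk h}$, I use the equal-rank hypothesis (already invoked in the text) to conclude that the Cartan subalgebra $\frk h$ of $\frk u$ is a maximally compact $\theta$-stable Cartan subalgebra of $\frk{co}$. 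By construction $X_{e_{11}}+Y_{e_{11}}=\tfrac1\rho(X_e+Y_e)+w_0\in\frk h$, so $H_{\alpha_0}=i(X_{e_{11}}+Y_{e_{11}})\in\sqrt{-1}\,\frk h$, as demanded by (\ref{convention}).

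It then remains to produce the simple system. Relative to $\frk h$ every root is compact or noncompact; since the central line $\bb R(X_e+Y_e)$ already lies in $\frk h$, every compact root space lies in $\bar{\frk u}^{\bb C}=[\frk u^{\bb C},\frk u^{\bb C}]$, which will give the assertion $H_{\alpha_i},E_{\pm\alpha_i}\in\bar{\frk u}^{\bb C}$ for $i\ge1$. As $X_e+Y_e$ spans the one-dimensional center of $\frk u$, the algebra $\frk{co}$ is of Hermitian type; writing $\frk p^{\bb C}=\frk q^+\oplus\frk q^-$ for the two eigenspaces of $\mathrm{ad}(X_e+Y_e)$ (eigenvalues $\pm 2i$, as one computes directly), I would choose the positive system so that the noncompact positive root spaces exhaust the eigenspace $\frk q^+$ containing $E_{\alpha_0}$; for such an ordering exactly one simple root is noncompact. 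Finally, under the identification (\ref{id: str=k}) the module $\frk q^+$ is the standard $\frk{str}^{\bb C}$-module $V^{\bb C}$, in which the primitive idempotent $e_{11}$ is an extreme weight vector (its weight $2\lambda_1$ is maximal among the Peirce weights $\lambda_i+\lambda_j$); choosing the ordering of the $\lambda_i$ so that this weight is lowest makes the root of $E_{\alpha_0}$ the minimal noncompact positive root, i.e. the unique noncompact simple root, which I label $\alpha_0$.

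The main obstacle is precisely this last point of the third paragraph: arranging the positive system so that the specific root carried by $E_{\alpha_0}$ is simple. This is where the Jordan structure is indispensable — one must use that $e_{11}$ is a \emph{primitive} idempotent to place $E_{\alpha_0}$ at a corner of the Peirce/weight decomposition of $V^{\bb C}\cong\frk q^+$, equivalently to see that its weight is extremal so that no noncompact positive root is strictly smaller. Everything else reduces either to the direct TKK computation of the first paragraph or to standard features of equal-rank Hermitian real forms.
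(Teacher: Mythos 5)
Your proposal is correct and for the most part follows the same architecture as the paper's own proof: the paper likewise takes $\frk h$ to be a Cartan subalgebra of $\frk u$ containing $\mr{span}_{\bb R}\{X_{e_{ii}}+Y_{e_{ii}}\}$, obtains the $\frk{sl}(2)$-relations for $H_{\alpha_0}$, $E_{\pm\alpha_0}$ as a special case of the identities $[h_u,E_v^{\pm}]=\pm 2E_{uv}^{\pm}$, $[E_u^+,E_v^-]=-h_{uv}-2[L_u,L_v]$ (its Eq.~(\ref{vogan:key})), and proves uniqueness of the noncompact simple root by evaluating roots on the central element $h_e$ (compact roots give $0$, noncompact positive ones give $2$, so at most one noncompact simple root can occur) --- which is precisely the ``Hermitian type'' mechanism you invoke. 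Where you genuinely diverge is the step you yourself flag as the main obstacle: arranging the ordering so that the root carried by $E_{\alpha_0}=E^+_{e_{11}}$ is simple. The paper does this by explicitly enumerating all noncompact roots $\beta_i$, $\beta_{jk}^z$ together with their coroots $H_{\beta_i}=h_{e_{ii}}$ and $H_{\beta_{jk}^z}=h_{z\bar z}+2[L_z,L_{\bar z}]$, and then verifying a list of positivity relations such as $\beta_{1i}^z(H_{\beta_1})=1>0$ (one of which reduces to an inequality in the underlying division algebra) to exclude every other noncompact root from a simple system containing $\beta_1$. You instead use that $\frk q^+\cong V^{\bb C}$ as a $\frk u^{\bb C}$-module and that the Peirce weight carried by $E^+_{e_{11}}$ is a vertex of the weight polytope with one-dimensional weight space, so a suitable ordering makes it the lowest weight of $\frk q^+$, i.e.\ the unique noncompact simple root. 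This is cleaner and avoids the case-by-case inner-product computations, at the price of leaving two standard facts to be verified explicitly: that $\bb C E^+_{e_{11}}$, being a one-dimensional weight space for $\mr{span}\{h_{e_{ii}}\}$ and hence preserved by all of $\frk h$, really is a root space; and that the unique noncompact simple root is the minimum of the noncompact positive roots in the root partial order, so that ``extremal weight for a suitable ordering'' really does yield ``simple.'' Both are routine, so I regard your proposal as a valid alternative route through the one nontrivial step.
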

\begin{proof}
Let us fix a Jordan frame $\{e_{ii}\mid 1\le i\le \rho\}$. Since $\frk u$ is compact, being an abelian
subalgebra of $\frk u$, ${\frk h}':=\mr{span}_{\bb
R}\{X_{e_{ii}}+Y_{e_{ii}}\mid 1\le i\le \rho\}$ can be extended to a
Cartan subalgebra $\frk{h}$ for $\frk u$, hence a maximally compact
$\theta$-stable Cartan subalgebra $\frk{h}$ for $\frk{co}$. 

Under the action of $\frk h$, we have decomposition $\frk{co}^{\bb C}=\frk u^{\bb C}\oplus \frk p^{\bb C}$. Therefore, if $\alpha$ is a root for $(\frk{co}^{\bb C}, \frk{h}^{\bb C})$, $\frk g_\alpha$ is a subset of either $\frk u^{\bb C}$ or $\frk p^{\bb C}$, so $\alpha$ is either compact or non-compact.

To understand the non-compact roots for $({\frk {co}}^{\bb C}, {\frk h}^{\bb C})$, we introduce
\begin{eqnarray}
E_u^\pm= {\sqrt{-1}\over 2}(X_u-Y_u)\mp L_u,
\quad h_u=\sqrt{-1}(X_u+Y_u), \quad u\in V.\nonumber
\end{eqnarray} 
and verify that
\begin{eqnarray}\label{vogan:key}
\left\{\begin{array}{l}[h_u, E_v^\pm] =  \pm 2 E_{uv}^\pm, \quad
[E_u^+, E_v^-]=-h_{uv}-2[L_u, L_v],\cr
[E_u^+, E_v^+]= [E_u^-,
E_v^-]=0, \quad [h_u, h_v]=4[L_u, L_v].
\end{array}\right.
\end{eqnarray}
The first identity in Eq. (\ref{vogan:key}) implies that, under the action of $\frk h'$, we have the following decomposition of $\frk p^{\bb C}$: 
\begin{eqnarray}
{\frk p}^{\bb C}=\bigoplus_{i\le j}\left({\frk
g}^+_{ij}\oplus {\frk g}^-_{ij}\right)\nonumber
\end{eqnarray}
where $ {\frk g}^\pm_{ii}=\mr{span}_{\bb C}\{E^\pm_{e_{ii}} \}$ and
${\frk g}^\pm_{ij}=\mr{span}_{\bb C}\{E_u^\pm \mid u\in V_{ij}\}$ if
$i< j$. 

Let $\alpha$ be a non-compact root. The decomposition above implies that
$\frk g_\beta\subset \frk g_{ij}^\pm$ for some $i, j$ with $i\le j$. Since $ {\frk g}^\pm_{ii}$ is one-dimensional and Eq. (\ref{vogan:key}) implies that
$$
[E_{e_{ii}}^+, E_{e_{ii}}^-]=-h_{e_{ii}}, \quad [h_{e_{ii}},
E_{e_{ii}}^\pm]=\pm 2E_{e_{ii}}^\pm,
$$
we conclude that, in view of Eq. (\ref{convention}), there is a non-compact root
$\beta_i$ such that ${\frk g}_{\pm \beta_i}={\frk g}_{ii}^\pm$ and
\fbox{$H_{\beta_i}=h_{e_{ii}}$}.
We call $\beta_i$ a non-compact root of {\em type I}.

Suppose that $\alpha$ is a non-compact root of {\em type II}, i.e.,
not of type I. We may assume that ${\frk g}_\alpha\subset {\frk g}_{jk}^+$ for
some $j<k$, then the root vector $E_\alpha\in {\frk g}_\alpha$ is of
the form
$$
E_\alpha=E_z^+
$$ for some $z\in V_{jk}^{\bb C}$. Consequently,
$E_{-\alpha}=E_{\bar z}^-$, and Eq. (\ref{vogan:key}) implies that
\begin{eqnarray}
[E_\alpha, E_{-\alpha}] = -h_{z\bar z}-2[L_z, L_{\bar z}].\nonumber
\end{eqnarray}
So there is a non-compact root $\beta_{jk}^z$ such that
${\frk g}_{\pm \beta_{jk}^z}\subset{\frk g}_{jk}^\pm$ and
\fbox{$H_{\beta_{jk}^z}=h_{z\bar z}+2[L_z, L_{\bar z}]$}. Of course, there are exactly $\delta$-many such $z$. 

In summary, the non-compact roots are $$\pm \beta_i\;  \mbox{and}\;\pm \beta_{jk}^z.$$
and
$$
{\frk g}_{\beta_i}, {\frk g}_{\beta_{jk}^z}\subset {\frk p}_{+}^{\bb C}:=\bigoplus_{i\le j}{\frk
g}^+_{ij}.
$$

To continue the proof, with the help of Eq. (\ref{vogan:key}), we observe that 
$$\beta_{1i}^z(H_{\beta_i})=  \beta_{1i}^z(H_{\beta_1})=1>0,\quad \beta_{1i}^z(H_{\beta_{1i}^z})=2>0, \quad \beta_{1j}^z(H_{\beta_{jk}^w})>0.$$ Here, only the last inequality is not clear, but it can be verified by a
case-by-case study\footnote{it reduces to the
trivial equality $|Im(z_1z_2)|<2$ for $z_1, z_2$ in a division algebra with
$|z_1|^2+|z_2|^2=1$.}. This observation implies that, for $\alpha=\beta_i$ or $\beta_{jk}^z$,  $H_{\beta_1}$ and $H_{-\alpha}$ cannot be in the same Weyl chamber, i.e., $\beta_1$ and $-\alpha$ cannot be in the same simple root system. 

Fixing a simple root system containing the non-compact root $\beta_1$,
all we need to do is to show that this simple root system cannot contain
another non-compact root. Otherwise, this simple system consisted of
simple roots $\gamma_1$, \ldots, $\gamma_t$, $\delta_1$, \ldots,
$\delta_s$ with $\delta_i$ compact, $\gamma_j$ non-compact, $t>1$, $\gamma_1=\beta_1$ and
each $\gamma_j$ is either of the form $\beta_i$ or of the form $\beta_{kl}^z$ due to the conclusion in the previous paragraph, so the first identity in Eq. (\ref{vogan:key}) implies that,
$$
\gamma_j(h_e)=2, \quad 1\le j\le t.
$$

Since the rank of $\frk{co}$ is one
more than the rank of $\bar {\frk u}$, $s$ is less than the rank of
$\bar{\frk u}$, so there is a compact root $\lambda$ such that
$$
\lambda =\sum_i m_i\gamma_i+\sum_jn_j\delta_j
$$ for some non-negative integers $n_j$ and $m_i$ with $\sum_im_i\neq 0$. Since $\alpha(h_e)$ is equal to zero if $\alpha$ is compact, we have
a contradiction:
$$
0=\lambda(h_e)=2\sum_im_i.
$$

\end{proof}
As a corollary of the above lemma, the Vogan diagram we have arrived at for the conformal algebra of a simple euclidean Jordan algebra has no complex nodes and only one node painted black. Here is a pictorial summary of the
Vogan diagrams for the conformal algebras:

\begin{displaymath}
\begin{array}{|c|c|}
\hline
\frk{co} & \mbox{Vogan Diagram} \\
\hline
& \\
& \\
\frk{so}(2,2n) & \setlength{\unitlength}{0.3cm}
\begin{picture}(20,0.25)\linethickness{1mm}
\put(4,0){\circle*{0.3}} \put(6.5,0){\circle{0.3}}
\put(9,0){\circle{0.3}} \put(11.9,0){\circle{0.3}}
\thinlines\put(4,0){\line(1,0){2.3}} \put(7,-0.28){-- -- }
\put(9.2,0){\line(1,0){2.1}}
\put(9,0.2){\line(0,1){2.1}}\put(9.7,0){\line(1,0){2}}
\put(9,2.43){\circle{0.3}}
\end{picture}   \\
& \\
\hline
& \\
\frk{so}(2,2n+1)  &\setlength{\unitlength}{0.3cm}
\begin{picture}(20,0.25)\linethickness{1mm}
\put(4,0){\circle*{0.3}}
 \put(6.5,0){\circle{0.3}} \put(9,0){\circle{0.3}}
\put(10.2,-0.2){$\rangle$}\put(11.5,0){\circle{0.3}}
\thinlines\put(4,0){\line(1,0){2.3}}\put(7,-0.28){-- -- }
\put(9.2,-0.1){\line(1,0){2.1}} \put(9.2,0.1){\line(1,0){2.1}}
\end{picture} \\
& \\
\hline
& \\
\frk{sp}(n, \bb R) & \setlength{\unitlength}{0.3cm}
\begin{picture}(20,0.25)\linethickness{1mm}
\put(4,0){\circle{0.3}}
 \put(6.5,0){\circle{0.3}} \put(9,0){\circle{0.3}}
\put(10,-0.2){$\langle$}\put(11.5,0){\circle*{0.3}}
\thinlines\put(4,0){\line(1,0){2.3}}\put(7,-0.28){-- -- }
\put(9.2,-0.1){\line(1,0){2.1}} \put(9.2,0.1){\line(1,0){2.1}}
\end{picture} \\
&\\
\hline
&\\
\frk{su}(n,n) &  \setlength{\unitlength}{0.3cm}
\begin{picture}(20,0.25)\linethickness{1mm}
\put(4,0){\circle{0.3}}
 \put(6.5,0){\circle{0.3}} \put(9,0){\circle*{0.3}}
\put(11.5,0){\circle{0.3}} \put(14,0){\circle{0.3}}
\thinlines\put(6.7,0){\line(1,0){2.3}} \put(9.2,0){\line(1,0){2.1}}
\put(4.5,-0.28){-- -- } \put(12,-0.28){-- -- }
\end{picture}\\
& \\
\hline
&\\
&\\
\frk{so}^*(4n) & \setlength{\unitlength}{0.3cm}
\begin{picture}(21,0.25)\linethickness{1mm}
\put(4.5,0){\circle{0.3}} \put(7,0){\circle{0.3}}
\put(9.5,0){\circle{0.3}} \put(12,0){\circle*{0.3}} \thinlines
\put(5,-0.28){-- -- } \put(7.2,0){\line(1,0){2.1}}
\put(9.5,0.2){\line(0,1){2.1}}\put(9.7,0){\line(1,0){2.1}}
\put(9.5,2.43){\circle{0.3}}
\end{picture} \\
&\\
\hline
& \\
& \\
\frk{e}_{7(-25)} & \setlength{\unitlength}{0.3cm}
\begin{picture}(20,0.25)\linethickness{1mm}
\put(4,0){\circle{0.3}} \put(6.5,0){\circle{0.3}}
\put(9,0){\circle{0.3}} \put(11.5,0){\circle{0.3}}
\put(14,0){\circle{0.3}}\put(16.5,0){\circle*{0.3}}
\thinlines\put(6.7,0){\line(1,0){2.1}}
\put(9.2,0){\line(1,0){2.1}}\put(11.7,0){\line(1,0){2.1}}\put(4.2,0){\line(1,0){2.1}}\put(14.2,0){\line(1,0){2.1}}
\put(9,0.15){\line(0,1){2.1}}\put(9,2.35){\circle{0.3}}
\end{picture} \\
& \\
\hline
\end{array}
\end{displaymath}

\section{Kepler Cones}\label{S:KM}
The goal in this section is to introduce the Kepler cone, an open
Riemannian manifold which serves as the configuration space for the
J-Kepler problem.

\begin{Definition}[Kepler Cone]\label{D:main1}
Let $V$ be a simple euclidean Jordan algebra. The {\bf Kepler
cone} is a Riemannian manifold whose underlying smooth manifold is
\begin{eqnarray} {\ms
P}:=\left\{x\in V\mid x^2=(\tr x)\, x, \; \tr x >0\right\}
\end{eqnarray}
and its Riemannian metric, referred to as the {\bf Kepler metric},  is the restriction of
\begin{eqnarray}\label{Kepler metric}
ds^2_K:={2\over \rho}ds^2_E-(d\langle e\mid x\rangle)^2
\end{eqnarray}
from $V$ to $\ms P$.
\end{Definition}
We shall also use ${\ms P}$ to denote the Kepler cone. By
introducing coordinates, it is not hard to see that the Kepler cone
is a smooth real affine variety.

${\ms P}$ is called the Kepler cone because it is isometric to the open geometric cone over
{\bf projective space}
\begin{eqnarray}
{\bb P}:=\left\{x\in {\ms P}\mid \tr(x)=\sqrt{{2\rho}}\right\}.
\end{eqnarray}
Here, as Riemannian manifolds, ${\ms P}$ is viewed as $({\ms P}, ds^2_K|_{{\ms P}})$, $\bb R_+\times \bb P$ is viewed as
$\left(\bb R_+\times \bb P, dr^2+r^2\; ds^2_E|_{\bb P}\right)$, and the isometry is
\begin{eqnarray}\iota: \quad {\ms P} &\longrightarrow &  \bb R_+\times \bb P\cr
x & \mapsto & \left({\tr x\over \rho},\; {\sqrt 2}{x\over |x|}\right).
\end{eqnarray}

Note that, being the intersection of ${\ms P }$ with the sphere of radius $\sqrt 2$ and centered at the origin of $V$, ${\bb P}$
is a compact symmetric space of rank-one:
\begin{displaymath}
\begin{array}{|c|c|c|c|c|c|}
\hline
% &&&& & \\
V& \Gamma(n) & {\mathcal H}_n(\bb R) & {\mathcal H}_n(\bb C) &{\mathcal H}_n(\bb H) & {\mathcal H}_3(\bb O)\\
%&&&& & \\
\hline
%& & & & &\\
{\bb P} & {\mr S}^{n-1} & {\bb R}P^{n-1} &  {\bb C}P^{n-1} &  {\bb H}P^{n-1} &  {\bb O}P^2\\
%& & & & &\\
\hline
\end{array}
\end{displaymath}
One can check that the Riemannian metric $ds^2_{\bb P}$ on projective space $\bb P$ is the round metric of the unit spheres for the
Dirac type and is four times the Fubini-Study metric
$$ds^2_{FS}={|dZ|^2\over |Z|^2}-{|Z\cdot d\bar Z|^2\over
|Z|^4}$$ of projective spaces for the other types.

We conclude this section with a technical lemma.

\begin{Lem}\label{KeyLemma}
Let $V$ be a simple euclidean Jordan algebra with rank $\rho$ and degree $\delta$, and $r=\langle e\mid x\rangle$.

i) Let $e_\alpha$ be an orthonormal basis for $V$, then
\begin{eqnarray}
 {\sum_{\alpha,\beta} \mid
[L_{e_\alpha}, L_{e_\beta}]x\rangle\langle [L_{e_\alpha},
L_{e_\beta}]x \mid \over {\rho^2\over 2}\left(1+{\delta\over
4}(\rho-2)\right)}= r\sum_\alpha \mid e_\alpha \rangle \langle
e_\alpha x \mid- \mid x\rangle \langle x\mid
\end{eqnarray}for any $x\in {\ms P}$.

ii) For each $u\in V$ and each $x\in {\ms P}$, the value of $\hat
L_u$ at $x$ is a tangent vector of ${\ms P}$ at $x$. So $\hat L_u$
descends to a differential operator on the Kepler cone.

iii) Let $\lambda_u= {(\rho/2-1)\delta\over 2}{\langle u\mid
x\rangle\over r}+{\rho \delta\over 4}\langle u\mid e\rangle$,
$\mr{vol}_{{\ms P}}$ be the volume element on ${\ms P}$, and ${\ms
L}_u$ be the Lie derivative with respect to vector field $\hat L_u$
on the Kepler cone. Then
\begin{eqnarray}\label{volume}
{\ms L}_u\left({1\over r}\mr{vol}_{{\ms P}}\right)=-2\lambda_u
{1\over r} \mr{vol}_{{\ms P}}.
\end{eqnarray}
Consequently, $\tilde L_u:=\hat L_u-\lambda_u$ is a skew-hermitian
operator with respect to inner product
$$
(\psi_1, \psi_2):=\displaystyle\int_{\ms
P}\overline{\psi_1}\,\psi_2\,{1\over r}\mr{vol}_{{\ms P}}
$$ for compactly-supported smooth functions on ${\ms P}$.

\end{Lem}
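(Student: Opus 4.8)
The plan is to establish the three parts in order, using the Jordan-frame machinery and the isometry $\iota\colon{\ms P}\to\bb R_+\times\bb P$ to reduce everything to explicit computations on the cone. Throughout I would work at a fixed $x\in{\ms P}$, noting that $x^2=(\tr x)\,x$ means $x$ is (a multiple of) a primitive idempotent, so the Jordan Frame theorem applies with $x_0=x$: there is an orthogonal basis $e_{11},\dots,e_{\rho\rho},e_{ij}^\mu$ with $x$ a positive multiple of $e_{11}$.

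For part (i), I would diagonalize the operator $L_x$ with respect to the Peirce decomposition associated to the frame. Since $x=\tr x\,e_{11}=r\rho\,e_{11}$ (using $\langle e\mid x\rangle=\tfrac1\rho\tr x$, so $\tr x=\rho r$), properties (2)--(4) of the frame give the eigenvalues of $L_{e_{11}}$ on each Peirce space, hence the eigenvalues of $L_x$. Both sides of the claimed identity are self-adjoint endomorphisms (rank-one-combinations $\mid\cdot\rangle\langle\cdot\mid$), so it suffices to check equality on each Peirce component $V_{11}=\bb R e_{11}$, $V_{1j}$, and $V_{jk}$ ($2\le j<k$). On the right, $r\sum_\alpha\mid e_\alpha\rangle\langle e_\alpha x\mid-\mid x\rangle\langle x\mid = rL_x^*{}'-\ldots$; more usefully it equals $r\,L_x-\mid x\rangle\langle x\mid$ reinterpreted as an operator, which I would simply evaluate on frame vectors. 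On the left, the key is $[L_{e_\alpha},L_{e_\beta}]x=\tfrac12\big(e_\alpha(e_\beta x)-e_\beta(e_\alpha x)\big)\cdot 2$ via Eq.~(\ref{usefulid}), i.e. $[L_{e_\alpha},L_{e_\beta}]x=L_{e_\alpha x}e_\beta-\ldots$; summing the dyads over the orthonormal basis produces a Casimir-type operator whose restriction to each Peirce space is a scalar computable from the frame relations. The normalizing constant $\tfrac{\rho^2}{2}(1+\tfrac\delta4(\rho-2))$ is exactly what makes these scalars match the right-hand side, and I expect it to drop out of a dimension count: $V_{1j}$ contributes $\delta$ dimensions for each of the $\rho-1$ values of $j$, and $V_{jk}$ contributes $\delta$ dimensions for each of the $\binom{\rho-1}{2}$ pairs, which is where the factor $1+\tfrac\delta4(\rho-2)$ originates.

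Part (ii) is essentially differentiating the defining equations of ${\ms P}$. The tangent space $T_x{\ms P}$ is the kernel of the differential of the two constraints $x^2-(\tr x)x=0$ and (implicitly) the positivity is open; so $v\in T_x{\ms P}$ iff $2xv-(\tr v)x-(\tr x)v=0$, i.e. $2L_xv-(\tr x)v=(\tr v)x$. The value of $\hat L_u$ at $x$ is $-ux$, so I must check $v:=-ux$ satisfies this linearized constraint. Using $x^2=(\tr x)x$ and the Jordan identity $[L_x,L_{x^2}]=0$ together with the polarization identity (\ref{usefulid}), a short manipulation shows $2L_x(ux)-(\tr x)(ux)$ equals a multiple of $x$, with the multiple being $-\tr(ux)=-\rho\langle u\mid x\rangle$ up to the right constant; this is exactly the tangency condition. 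Hence $\hat L_u$ is tangent to ${\ms P}$ and descends to a first-order differential operator on the Kepler cone.

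Part (iii) is the crux and the main obstacle, since it requires pinning down how $\hat L_u$ distorts the measure $\tfrac1r\mr{vol}_{{\ms P}}$. My strategy is to compute the divergence of $\hat L_u$ against this measure directly: by Cartan's formula, ${\ms L}_u(\tfrac1r\mr{vol})=d\,\iota_{\hat L_u}(\tfrac1r\mr{vol})$, and the claim (\ref{volume}) says this equals $-2\lambda_u\tfrac1r\mr{vol}$, i.e. the weighted divergence of $\hat L_u$ is the scalar $-2\lambda_u$. The cleanest route is the isometry $\iota$: on $\bb R_+\times\bb P$ the volume form factors as $r^{\rho\delta/2+\rho-2}\,dr\wedge\mr{vol}_{\bb P}$ (the exponent being $\dim{\ms P}-1$, which one reads off from $\dim V=\rho+\tfrac\delta2\rho(\rho-1)$ and the cone structure), so $\tfrac1r\mr{vol}$ has radial weight $r^{\rho\delta/2+\rho-3}$. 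I would decompose $\hat L_u$ into its radial part (driven by $\langle u\mid x\rangle/r$) and its spherical part (a Killing-type field along $\bb P$, hence divergence-free there). The radial component, acting on the radial weight, produces precisely the term $\tfrac{(\rho/2-1)\delta}{2}\tfrac{\langle u\mid x\rangle}{r}$ in $\lambda_u$, while the constant term $\tfrac{\rho\delta}{4}\langle u\mid e\rangle$ comes from the component of $\hat L_u$ along $u=e$, where $\hat L_e=-\langle x\mid\not\!\partial\rangle$ is (a multiple of) the radial Euler field and its divergence against $\tfrac1r\mr{vol}$ is a pure constant. The hardest bookkeeping is separating these contributions cleanly and getting the coefficients to land exactly on $\lambda_u$; I expect part (i), which identifies the relevant Casimir scalar, to be the tool that controls the spherical-divergence term. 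The final skew-hermiticity of $\tilde L_u=\hat L_u-\lambda_u$ then follows formally: for compactly supported $\psi_1,\psi_2$, integration by parts gives $(\hat L_u\psi_1,\psi_2)+(\psi_1,\hat L_u\psi_2)=-\int(\operatorname{div}\hat L_u)\overline{\psi_1}\psi_2\,\tfrac1r\mr{vol}=\int 2\lambda_u\,\overline{\psi_1}\psi_2\,\tfrac1r\mr{vol}$ by (\ref{volume}), so $(\tilde L_u\psi_1,\psi_2)=-(\psi_1,\tilde L_u\psi_2)$.
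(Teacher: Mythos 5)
Parts i) and ii) of your proposal are consistent with the paper's treatment: for i) the paper likewise reduces to a Jordan frame adapted to $x$ (normalizing $\tr x=1$) and declares the resulting Peirce-component computation straightforward, and for ii) it linearizes $x^2=\tr x\,x$ at $x_0$ to identify $T_{x_0}{\ms P}$ with $\bigoplus_{j}V_{1j}$ and observes $ux_0\in\bigoplus_j V_{1j}$; your direct check of the linearized constraint (equivalently, that $P(x)u=2x(xu)-(\tr x)(xu)$ is a multiple of $x$) amounts to the same thing.

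Part iii), however, contains a genuine error. First, $\dim{\ms P}=1+(\rho-1)\delta$, so ${1\over r}\mr{vol}_{\ms P}=r^{(\rho-1)\delta-1}\,dr\wedge\mr{vol}_{\bb P}$; your exponent $\rho\delta/2+\rho-3$ agrees with this only when $\rho=2$. More importantly, writing $\hat L_u=-\langle u\mid x\rangle\,\partial_r+W$ with $W$ tangent to the slices $\{r=\mathrm{const}\}$, the field $W$ is \emph{not} a Killing field and \emph{not} divergence-free: in a Jordan frame with $x=\rho r\,e_{11}$ one finds $W=-{\rho r\over 2}u_{1*}$, where $u_{1*}$ is the component of $u$ in $\bigoplus_{j\ge 2}V_{1j}$, which is ($-r$ times) the \emph{gradient} along the slice of the function $\langle u\mid x\rangle$; its divergence is governed by $\Delta_{\bb P}\langle u\mid x\rangle=-{\rho\delta\over 2}\langle u\mid x\rangle\neq 0$. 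Concretely, the radial term contributes $-(\rho-1)\delta\,{\langle u\mid x\rangle\over r}$ to ${\ms L}_u\bigl({1\over r}\mr{vol}_{\ms P}\bigr)/\bigl({1\over r}\mr{vol}_{\ms P}\bigr)$, not the $-(\rho-2){\delta\over 2}{\langle u\mid x\rangle\over r}$ you need, and only after adding the spherical divergence $+{\rho\delta\over 2}{\langle u\mid x\rangle\over r}$ does one land on $-2\lambda_u$ (take $u\perp e$). Already for $V=\Gamma(n)$ one has $\lambda_u=0$ for $u\perp e$, so two nonzero contributions must cancel --- impossible under your accounting, which sets the spherical one to zero. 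The radial/spherical splitting is a workable idea, but you must actually compute $\mr{div}\,W$; the paper avoids the issue by computing ${\ms L}_u$ of the adapted coordinate one-forms $dx_{11}$, $dx_{1j}^\beta$ at a point $x_0=ae_{11}$ and reading off the factor from $\mr{vol}_{\ms P}$ directly. Your final integration-by-parts deduction of skew-hermiticity from (\ref{volume}) is correct and is exactly what the paper does.
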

\begin{proof} i) Since both sides of the identity are homogeneously quadratic in $x$, one
may assume that $\tr x=1$. Choosing a Jordan frame $\{e_{11}$, \ldots, $e_{\rho\rho}\}$ with
$e_{11}=x$ and an associated orthonormal basis for $V$, the detailed proof then becomes just a straightforward
computation, so we skip it.

ii) Let $x_0\in \ms P$. Since $x_0^2=\tr x_0\, x_0$ and $\tr x_0>0$,  one can write
$x_0=\tr x_0\, e_{11}$ so that $e_{11}^2=e_{11}$ and $\tr e_{11}=1$. Extending $e_{11}$ to
a Jordan frame $\{e_{11}, \ldots, e_{\rho\rho}\}$ for $V$, then we can
decompose $V$ orthogonally into the direct sum of the Pierce
components $V_{ij}$ ($i\le j$). Then
$$
u x_0\in\bigoplus_{j=1}^\rho V_{1j}.
$$ 
By linearizing equation $x^2=\tr x\, x$ at $x_0$, it is clear that the tangent space of $\ms P$ at $x_0$, when translated to $0$, is exactly
$\bigoplus_{j=1}^\rho V_{1j}$. Therefore,
$$
\hat L_u|_{x_0}=(x_0, -ux_0)\in T_{x_0}\ms P.
$$

(In fact, one can show that the structure group of $V$ acts on $\ms P$ transitively.)

iii) We wish to prove identity (\ref{volume}) at $x_0\in {\ms P}$.
To do that we need to choose a local coordinate system for ${\ms P}$
around $x_0$ and do the computations. We may choose a Jordan frame $\{e_{11}$, \ldots, $e_{\rho\rho}\}$
such that $x_0=ae_{11}$ for some $a>0$. Write
$$x=\sum_{i=1}^\rho x_{ii}e_{ii}+\sum_{1\le i<j\le \rho}^{1\le \mu \le \delta} x_{ij}^\mu e_{ij}^\mu,$$
by solving equation $x^2=\tr x\, x$, we know that $x_{11}$, $x_{1j}^\alpha$'s are independent
real variables and the Taylor expansion of the other variables
starts at quadratic terms in $(x_{11}-a)$, $x_{1j}^\alpha$'s.
Therefore,
$$
ds^2_K={1\over \rho^3}\left[(dx_{11})^2+2\sum_{2\le j\le \rho}^{1\le
\alpha\le \delta} (dx_{1j}^\alpha)^2\right]+O(|x-x_0|^2)
$$
and
$$
\mr{vol}_{{\ms P}}=b\,
dx_{11}\wedge(\wedge_{j=2}^\rho(\wedge_{\alpha=1}^\delta
dx_{1j}^\alpha))+O(|x-x_0|^2)
$$ with $b$ being a constant. Since
\begin{eqnarray}
{\ms L}_u(dx_{11}) &= & d {\ms L}_u (x_{11})=-d\langle ux\mid \rho e_{11}\rangle= -d\langle x\mid
\rho e_{11}u\rangle\cr &=& -\langle e_{11}\mid \rho e_{11}u\rangle
dx_{11}-\sum_{2\le i\le \rho}^{1\le \alpha\le \delta}\langle
e_{1i}^\alpha\mid \rho e_{11}u\rangle dx_{1i}^\alpha\;,\cr {\ms
L}_u(dx_{1j}^\beta) &=& -\langle e_{1j}^\beta\mid \rho
e_{1j}^\beta u\rangle dx_{1j}^\beta-\langle  e_{11}\mid \rho
e_{1j}^\beta u\rangle dx_{11}\cr &&-\sum_{(i, \alpha)\neq (j,
\beta)}\langle e_{1i}^\alpha\mid \rho e_{1j}^\beta u\rangle
dx_{1i}^\alpha\;,\nonumber
\end{eqnarray}
we have
\begin{eqnarray}
{\ms L}_u(\mr{vol}_{{\ms P}})|_{x_0} &= &\left.-\left(\langle
e_{11}\mid \rho e_{11}u\rangle+\sum_{j\ge 2, 1\le\beta\le \delta}\langle
e_{1j}^\beta\mid \rho e_{1j}^\beta u\rangle\right)\mr{vol}_{\ms
P}\right|_{x_0}\cr &= &\left.-\rho\left(\langle e_{11}\mid
u\rangle+{\delta\over 2}\sum_{j\ge 2}\langle  e_{11}+e_{jj}\mid
u\rangle\right)\mr{vol}_{{\ms P}}\right|_{x_0}\cr &=
&\left.-\rho\left((1+(\rho/2-1)\delta)\langle e_{11}\mid u\rangle+{\delta\over
2}\langle  e\mid u\rangle\right)\mr{vol}_{\ms
P}\right|_{x_0}.\nonumber
\end{eqnarray}
On the other hand,
$$
\left.{\ms L}_u({1\over r})\right|_{x_0}=\left.{\langle u\mid
x\rangle\over r}\cdot {1\over r}\right|_{x_0}=\left.\rho\langle
u\mid e_{11}\rangle \cdot {1\over r}\right|_{x_0}.
$$
Therefore,
\begin{eqnarray}
{\ms L}_u({1\over r}\mr{vol}_{{\ms P}})|_{x_0} &= &
\left.-\rho\left((\rho/2-1)\delta\langle e_{11}\mid u\rangle+{\delta\over
2}\langle  e\mid u\rangle\right){1\over r}\mr{vol}_{\ms
P}\right|_{x_0}\cr &=& -2\lambda_u {1\over r}\mr{vol}_{\ms
P}|_{x_0}.\nonumber
\end{eqnarray}
Then
\begin{eqnarray}
(\tilde L_u\psi_1, \psi_2)+(\psi_1, \tilde L_u\psi_2) & = &
\displaystyle\int_{{\ms P}}{\ms L}_u(\overline{\psi_1}\,\psi_2\,
{1\over r}\mr{vol}_{{\ms P}})\cr &=& \displaystyle\int_{{\ms P}}d
\iota_{\hat L_u}(\overline{\psi_1}\,\psi_2\, {1\over
r}\mr{vol}_{{\ms P}}) =0.\nonumber
\end{eqnarray}
Here $\iota_{\hat L_u}$ is interior product of differential form
with vector field $\hat L_u$.

\end{proof}

\section{The Hidden Action on the
Kepler Cones}\label{S:DS}

Our recent investigation of the Kepler-type problems leads to the discovery
of the hidden action of the conformal algebra on the Kepler cone. By
turning arguments backward, we can say that it is this hidden action
that is responsible for the existence of Kepler-type problems.

We begin with some generalities. For smooth manifold $M$, we use
$\frk X(M)$ to denote the Lie algebra of (smooth) vector fields on
$M$ and ${\ms D}(M)$ to denote the algebra of smooth
(real) differential operators on $M$.

Let $A$ be an associative algebra with identity over $\bb R$. We say
that {\bf $A$ acts on $M$ hiddenly} if there is an algebra
homomorphism from $A$ into ${\ms D}(M)\otimes_{\bb R} \bb C$. For
example, if $A=\bb R[t]$ (the polynomial algebra over $\bb R$ in
single variable $t$), then the algebra homomorphism $A\to {\ms
D}(M)\otimes_{\bb R}\bb C$ sending $t$ to the Laplace operator on
$M$, defines a hidden action of $A$ on $M$.

Let $\frk g$ be a real Lie algebra. We say that {\bf $\frk g$ acts
on $M$} if there is a Lie algebra homomorphism from $\frk g$ into
$\frk X(M)$; and we say that {\bf $\frk g$ acts on $M$ hiddenly} if
the universal enveloping algebra of $\frk g$ acts on $M$ hiddenly.
It is clear that, if $\frk g$ acts on $M$, then it acts on $M$
hiddenly; however, the converse may not be true. Note that, $\frk X(M)$ acts on $M$, but
$\ms D(M)$ acts on $M$ only hiddenly.

\vskip 10pt Let $V$ be a simple euclidean Jordan algebra. Since the
automorphisms of $V$ leave the Kepler cone invariant, the derivation
algebra, being the Lie algebra of automorphism group, acts on the
Kepler cone. The recent investigation of the Kepler-type problems leads to the following fact: {\it there is a natural
hidden action of the conformal algebra on the Kepler cone which
extends the action of the derivation algebra}.

To introduce the hidden action, we fix an orthonormal basis
$e_\alpha$ for $V$ and recall that $\tilde L_u=\hat L_u-\lambda_u$
where
$$\lambda_u= {(\rho/2-1)\delta\over 2}{\langle u\mid x\rangle\over
\langle e\mid x\rangle}+{\rho \delta\over 4}\langle u\mid e\rangle.$$ For
$u, v\in V$, we introduce differential operators
\begin{eqnarray}
\fbox{ ${\tilde S}_{uv}:=[\tilde L_u, \tilde L_v]+\tilde L_{uv},
\quad{\tilde X}_u:=-i[{\tilde L}_u, X], \quad {\tilde
Y}_v:=-i\langle v\mid x\rangle $}
\end{eqnarray} where
\begin{eqnarray}\label{defX}
X=-{1\over \langle e\mid x\rangle}\left({\hat
L}_e^2-\left((\rho-1)\delta-1\right){\hat L}_e+A\sum_{\alpha,\beta}[{\hat L}_{e_\alpha}, {\hat L}_{e_\beta}]^2+B\right)
\end{eqnarray}
with $A$ and $B$ being constants depending only on the Jordan
algebra. Note that $X$ is independent of the choice of the
orthonormal basis $e_\alpha$ and
$$
\tilde S_{ue}=\tilde S_{eu}=\tilde L_u.
$$

In view of part ii) of Lemma \ref{KeyLemma}, ${\tilde S}_{uv}$, $\tilde X_u$ and
$\tilde Y_v$ all descend to differential operators on the Kepler
cone.

\begin{Lem}\label{lemma}

i) There is a unique constant $A$ in Eq. (\ref{defX}), such that, as
differential operator on ${\ms P}$,
\begin{eqnarray}
\fbox{$[X, \langle u\mid x\rangle]=2\tilde L_u$}
\end{eqnarray} for any $u\in V$.  In fact
\begin{eqnarray}\label{Avalue}
A^{-1}={\rho^2\over 2}\left(1+{\delta\over 4}(\rho-2)\right).
\end{eqnarray}

ii) Let $\Delta_{\bb P}$ be the Laplace operator on $\bb P$ and
$A$ be the number in Eq. (\ref{Avalue}). Then
\begin{eqnarray}
\Delta_{\bb P}=A\sum_{\alpha,\beta}[{\hat L}_{e_\alpha}, {\hat
L}_{e_\beta}]^2
\end{eqnarray} as differential operators on on $\bb
P$. Consequently,
\begin{eqnarray}\label{Xdef}
\fbox{$X=-\langle e\mid x\rangle\Delta_{{\ms P}}-{B\over \langle
e\mid x\rangle}$.}
\end{eqnarray}
Here, $\Delta_{{\ms P}}$ is the Laplace operator on $\ms P$.
\end{Lem}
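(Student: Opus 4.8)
The plan is to handle the two parts by different means. Part~i) will be a direct commutator computation that, after the dust settles, rests entirely on part~i) of Lemma~\ref{KeyLemma}; part~ii) is geometric, identifying $A\sum_{\alpha,\beta}[\hat L_{e_\alpha},\hat L_{e_\beta}]^2$ with the Laplacian of a rank-one symmetric space. Throughout I abbreviate $M_{\alpha\beta}:=[L_{e_\alpha},L_{e_\beta}]$, $D_{\alpha\beta}:=[\hat L_{e_\alpha},\hat L_{e_\beta}]=-\langle M_{\alpha\beta}x\mid{/\hskip -5pt\partial}\rangle$, $c:=(\rho-1)\delta-1$, and $K:=A^{-1}=\frac{\rho^2}{2}\left(1+\frac{\delta}{4}(\rho-2)\right)$. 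Since $1/r$ and $\langle u\mid x\rangle$ are multiplication operators, writing $X=-\frac1r\mathcal O$ with $\mathcal O:=\hat L_e^2-c\hat L_e+A\sum_{\alpha,\beta}D_{\alpha\beta}^2+B$ gives
\[
[X,\langle u\mid x\rangle]=-\tfrac1r\,[\mathcal O,\langle u\mid x\rangle],
\]
so everything reduces to the first-order operator $[\mathcal O,\langle u\mid x\rangle]$.

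For part~i), the non-quadratic pieces are immediate: since $\langle u\mid x\rangle$ is homogeneous of degree one and $\hat L_e=-\langle x\mid{/\hskip -5pt\partial}\rangle$ is minus the Euler field, $[\hat L_e,\langle u\mid x\rangle]=-\langle u\mid x\rangle$ and $[\hat L_e^2,\langle u\mid x\rangle]=-2\langle u\mid x\rangle\hat L_e+\langle u\mid x\rangle$. The substance is the quadratic term; using $[\xi^2,f]=2(\xi f)\,\xi+\xi(\xi f)$ and $D_{\alpha\beta}(\langle u\mid x\rangle)=-\langle u\mid M_{\alpha\beta}x\rangle$, its first-order part contributes to the vector-field part of $[X,\langle u\mid x\rangle]$ a term whose value at $x$ is proportional to $\sum_{\alpha,\beta}\langle u\mid M_{\alpha\beta}x\rangle\,M_{\alpha\beta}x$. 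This last expression is precisely the operator $\sum_{\alpha,\beta}\mid M_{\alpha\beta}x\rangle\langle M_{\alpha\beta}x\mid$ of Lemma~\ref{KeyLemma}(i) evaluated on $u$, which that lemma rewrites as $K\,(r\,ux-\langle u\mid x\rangle x)$. Requiring the vector-field part of $[X,\langle u\mid x\rangle]$ to equal that of $2\hat L_u=-2\langle ux\mid{/\hskip -5pt\partial}\rangle$ forces $AK=1$, giving both existence and uniqueness of $A=1/K$, i.e. (\ref{Avalue}). The leftover zeroth-order part equals $-\frac1r\big((1+c)\langle u\mid x\rangle+A\,\langle u\mid(\sum_{\alpha,\beta}M_{\alpha\beta}^2)x\rangle\big)$; since every derivation kills $e$ while $\sum_{\alpha,\beta}M_{\alpha\beta}^2$ is $\frk{der}$-invariant and hence a scalar $\mu$ on the irreducible complement $({\bb R}e)^\perp$, one has $(\sum_{\alpha,\beta}M_{\alpha\beta}^2)x=\mu(x-re)$, and taking the trace of the identity in Lemma~\ref{KeyLemma}(i) pins down $\mu=-\frac{\rho\delta}{2}K$. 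With $A=1/K$ this zeroth-order part collapses to exactly $-2\lambda_u$, completing $[X,\langle u\mid x\rangle]=2\hat L_u-2\lambda_u=2\tilde L_u$.

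For part~ii), recall that $\bb P$ is a compact rank-one symmetric space on which $\mr{Aut}(V)$ acts transitively by isometries, with Lie algebra $\frk{der}=\mr{span}_{\bb R}\{M_{\alpha\beta}\}$; because the $M_{\alpha\beta}$ are derivations they are skew and trace-preserving, so each $D_{\alpha\beta}$ is a Killing field tangent to $\bb P$. The operator $A\sum_{\alpha,\beta}D_{\alpha\beta}^2$ is therefore $\mr{Aut}(V)$-invariant, and its principal symbol at $x\in\bb P$ is the endomorphism $A\sum_{\alpha,\beta}\mid M_{\alpha\beta}x\rangle\langle M_{\alpha\beta}x\mid$ of $T_x\bb P$. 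By Lemma~\ref{KeyLemma}(i), and since $\sum_\alpha\mid e_\alpha\rangle\langle e_\alpha x\mid=L_x$, the bracketed sum equals $K(r\,L_x-\mid x\rangle\langle x\mid)$, which on $T_x\bb P$ is scalar (this is the rank-one feature: $L_x=\frac{r\rho}{2}$ on the off-diagonal Pierce directions and $\mid x\rangle\langle x\mid$ annihilates $T_x\bb P$); as the metric $ds^2_{\bb P}$ agrees with $\langle\cdot\mid\cdot\rangle$ on the slice $\bb P$, the choice $A=1/K$ makes this symbol the identity of $T_x\bb P$, matching $\Delta_{\bb P}$. The difference $A\sum_{\alpha,\beta}D_{\alpha\beta}^2-\Delta_{\bb P}$ is then an invariant first-order operator, i.e. an invariant vector field, on the isotropy-irreducible $\bb P$, hence zero. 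This last step---the normalization of the symbol together with the vanishing of the lower-order remainder---is where I expect the real work to lie.

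Finally, (\ref{Xdef}) follows from the cone structure. Under the isometry $\iota:\ms P\to\bb R_+\times\bb P$ the radial coordinate is $r=\langle e\mid x\rangle$ itself, so $\hat L_e=-r\partial_r$ and, with $m:=\dim\bb P=(\rho-1)\delta=c+1$, the cone metric $ds^2_K=dr^2+r^2\,ds^2_{\bb P}$ gives $\Delta_{\ms P}=\partial_r^2+\frac mr\partial_r+\frac1{r^2}\Delta_{\bb P}$. Hence $\hat L_e^2-c\hat L_e=r^2\partial_r^2+m\,r\partial_r$, and adding $A\sum_{\alpha,\beta}D_{\alpha\beta}^2=\Delta_{\bb P}$ yields $\hat L_e^2-c\hat L_e+A\sum_{\alpha,\beta}D_{\alpha\beta}^2=r^2\Delta_{\ms P}$. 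Therefore $X=-\frac1r\left(r^2\Delta_{\ms P}+B\right)=-\langle e\mid x\rangle\,\Delta_{\ms P}-\frac{B}{\langle e\mid x\rangle}$, as claimed.
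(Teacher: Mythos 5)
Your part i) follows essentially the same route as the paper: reduce to the first--order identity of Lemma \ref{KeyLemma}(i) (which forces $A^{-1}={\rho^2\over 2}(1+{\delta\over 4}(\rho-2))$), and pin down the zeroth--order remainder $\langle u\mid (\sum_{\alpha,\beta}[L_{e_\alpha},L_{e_\beta}]^2)x\rangle$ by a Schur/trace argument; the paper does exactly this, determining the scalar by taking the trace of the identity in Lemma \ref{KeyLemma}(i), and your bookkeeping (the remainder collapsing to $-2\lambda_u$) checks out. For part ii) you genuinely diverge: the paper proves $[\Delta_{\bb P},\langle u\mid x\rangle]=-2r\tilde L_u+2\langle u\mid x\rangle\tilde L_e$ by an explicit coordinate computation at a point of $\bb P$, using the second--order Taylor expansion of the constraint $x^2=(\tr x)x$, and then observes that two constant--free second--order operators with the same commutators against all linear functions coincide; you instead match principal symbols (again via Lemma \ref{KeyLemma}(i), using that $rL_x$ is the scalar $1$ on $T_x{\bb P}$) and kill the residual first--order invariant operator by a symmetry argument. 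Your route is more conceptual and avoids the Taylor expansion of the embedding, at the cost of invoking transitivity of $\mr{Aut}(V)$ on $\bb P$ and the structure of the isotropy representation. The one soft spot is precisely that last step: ``invariant vector field on the isotropy--irreducible $\bb P$, hence zero'' requires the isotropy representation to have \emph{no nonzero fixed vectors}, not merely to be irreducible. This fails for the connected isotropy group when $V=\Gamma(2)$, where ${\bb P}=S^1$ and the rotation field is invariant; you must either invoke the full (disconnected) automorphism group $\mr O(2)$, whose isotropy $\{\pm 1\}$ acts by $-1$ on the tangent line, or, more robustly, note that both $\Delta_{\bb P}$ and $A\sum_{\alpha,\beta}[\hat L_{e_\alpha},\hat L_{e_\beta}]^2$ are formally self--adjoint on $L^2({\bb P})$ (each $\hat L_{[L_{e_\alpha},L_{e_\beta}]}$ is a Killing field, hence divergence--free and skew--adjoint), so their difference is a self--adjoint vector field, which is necessarily zero. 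With that repair, and with your final cone computation $\hat L_e^2-((\rho-1)\delta-1)\hat L_e+\Delta_{\bb P}=r^2\Delta_{\ms P}$ (which the paper leaves implicit in its ``Consequently''), the argument is complete.
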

\begin{proof}
i) For simplicity, we write $\langle x\mid e\rangle$ as $r$. Since
\begin{eqnarray}
[X, \langle u\mid x\rangle ]&=& -{1\over r}\left[{\hat
L}_e^2-((\rho-1)\delta-1)\hat L_e+A\sum_{\alpha,\beta}[\hat L_{e_\alpha}, \hat L_{e_\beta}]^2, \langle u\mid x\rangle\right]\cr &=& -{1\over
r}\left(-2\langle u\mid x\rangle\tilde L_e+[A\sum_{\alpha,\beta}[\hat L_{e_\alpha}, \hat L_{e_\beta}]^2, \langle u\mid x\rangle]\right),\nonumber
\end{eqnarray}we just need to show that
\begin{eqnarray}\label{identity}
\fbox{$[A\sum_{\alpha,\beta}[\hat L_{e_\alpha}, \hat L_{e_\beta}]^2, \langle u\mid
x\rangle]=-2r\tilde L_u+2\langle u\mid x\rangle\tilde L_e$}
\end{eqnarray} or
\begin{eqnarray}
\left\{\begin{array} {rcl} A\sum_{\alpha,\beta}\langle
[L_{e_\alpha}, L_{e_\beta}]u\mid x\rangle\, [\hat L_{e_\alpha}, \hat L_{e_\beta}] & = & -r\hat L_u+\langle u\mid x\rangle \hat L_e\cr\\
A\langle \sum_{\alpha,\beta}[L_{e_\alpha}, L_{e_\beta}]^2 u\mid
x\rangle & = & -{\rho \delta\over 2}(\langle u\mid x\rangle-r \langle u\mid e\rangle)
\end{array}\right.
\end{eqnarray}
for any $u\in V$ and any $x\in {\ms P}$. So it suffices to show that
\begin{eqnarray}\label{trivialid'}
 A\sum_{\alpha,\beta} \mid
[L_{e_\alpha}, L_{e_\beta}]x\rangle\langle [L_{e_\alpha},
L_{e_\beta}]x \mid = r\sum_\alpha \mid e_\alpha \rangle \langle
e_\alpha x \mid- \mid x\rangle \langle x\mid
\end{eqnarray}
and
\begin{eqnarray}\label{trivialid}
A\sum_{\alpha,\beta}[L_{e_\alpha}, L_{e_\beta}]^2 x = -{\rho \delta\over
2} (x-re)
\end{eqnarray} for any $x\in {\ms P}$.  Eq. (\ref{trivialid'}) is the content of part i) of Lemma \ref{KeyLemma} with
$$
A^{-1}={\rho^2\over 2}\left(1+{\delta\over 4}(\rho-2)\right).
$$
Eq. (\ref{trivialid}) is clear except that we don't know what the
constant $A$ is. Here is a way to find $A$: taking inner product
with $x$, we have
\begin{eqnarray}\label{trivialid''}
A\sum_{\alpha,\beta}||[L_{e_\alpha}, L_{e_\beta}] x||^2 =
{(\rho-1)\delta\over 2}||x||^2.
\end{eqnarray}On the other hand, by taking the trace of Eq. (\ref{trivialid'}), we also arrive at Eq. (\ref{trivialid''}); so
Eq. (\ref{trivialid}) is a consequence of Eq. (\ref{trivialid'}).

ii) In view of identity (\ref{identity}) and the fact that both
$\Delta_{\bb P}$ and $A\sum_{\alpha,\beta}[{\hat L}_{e_\alpha},
{\hat L}_{e_\beta}]^2$ are 2nd order differential operators without
the constant terms, it suffices to show that
\begin{eqnarray}\label{identity'}
[\Delta_{\bb P}, \langle u\mid x\rangle]=-2r\tilde L_u+2\langle
u\mid x\rangle\tilde L_e.
\end{eqnarray}
To verify identity (\ref{identity'}) at a point $x_0\in \bb P$, we choose
a Jordan frame $\{e_{11}, \ldots, e_{\rho\rho}\}$ such that $x_0=\sqrt{2\rho}\,e_{11}$ and write
variable $$x=\sum_{1\le i\le \rho} x_{ii}e_{ii}+\sum_{1\le i<j\le
\rho}^{1\le \alpha\le \delta}x_{ij}^\alpha e_{ij}^\alpha$$ where
$e_{ii}$'s, $e_{ij}^\alpha$'s are mutually orthogonal with length
$1\over \sqrt{\rho}$.

By solving equation $x^2=\tr x\, x$ and $\tr x =\sqrt{2\rho}$, we
know that $x_{1j}^\alpha$'s ($j>1$) are independent real variables and the
Taylor expansion of other variables has no linear terms in
$x_{1j}^\alpha$'s. Therefore, around point $x_0$, we have
\begin{eqnarray}
ds^2_{\bb P} & = & ds^2_E|_{\bb P}={1\over \rho}\sum_{2\le i\le
\rho}^{1\le \alpha \le \delta}(dx_{1i}^\alpha)^2+O(|x-x_0|^2),\cr \langle
u\mid x\rangle &=& \langle u\mid x_0\rangle+\sum_{2\le i\le \rho}^{1\le \alpha \le \delta}\langle
u\mid x_{1i}^\alpha e_{1i}^\alpha \rangle+O(|x-x_0|^2).\nonumber
\end{eqnarray}
Thus
\begin{eqnarray}
\mbox{LHS of Eq. (\ref{identity'})}|_{x_0}  & = & \left.\rho\sum_{2\le
i\le \rho}^{1\le \alpha \le \delta}[{\partial^2\over
\partial (x_{1i}^\alpha)^2},\langle u\mid x\rangle]\right|_{x_0}\\
&=& \left. 2\rho\sum_{2\le i\le \rho}^{1\le \alpha \le \delta}\langle
u\mid e_{1i}^\alpha\rangle{\partial\over
\partial x_{1i}^\alpha}\right |_{x_0}\cr
&& +\left.\rho\sum_{2\le i\le \rho}^{1\le \alpha \le \delta}
{\partial^2\over
\partial (x_{1i}^\alpha)^2}(\langle u\mid x\rangle)\right|_{x_0}.\nonumber
\end{eqnarray}
On the other hand,
\begin{eqnarray}
\mbox{RHS of Eq. (\ref{identity'})}|_{x_0} & = &\left.
2\rho\sum_{2\le i\le \rho}^{1\le \alpha \le \delta}\langle u\mid
e_{1i}^\alpha\rangle{\partial\over
\partial x_{1i}^\alpha}\right |_{x_0}\\
&& +\delta\sqrt{\rho/2}(-\rho\langle u\mid e_{11}\rangle +\langle u\mid
e\rangle).\nonumber
\end{eqnarray}
Therefore, all need to do is to verify that
\begin{eqnarray}
\left.\sum_{2\le i\le \rho}^{1\le \alpha \le \delta} {\partial^2\over
\partial (x_{1i}^\alpha)^2}(\langle u\mid x\rangle)\right|_{x_0}={\delta\over \sqrt{2\rho}}(-\rho\langle u\mid e_{11}\rangle +\langle u\mid
e\rangle)\nonumber
\end{eqnarray}
or
\begin{eqnarray}\label{finalid}
\left.\sum_{2\le i\le \rho}^{1\le \alpha \le \delta} {\partial^2\over
\partial (y_{1i}^\alpha)^2}(\langle u\mid y\rangle)\right|_{y=0}=\delta(-\rho\langle u\mid e_{11}\rangle +\langle u\mid
e\rangle),
\end{eqnarray}
here $y$ satisfies the following condition:
\begin{eqnarray}
2e_{11}y+y^2=y, \quad \tr y=0,
\end{eqnarray}
so
$$
y=t+\rho\left[-\langle t^2\mid e_{11}\rangle e_{11}+\sum_{j=2}^\rho
\langle t^2\mid e_{jj}\rangle e_{jj}+ \sum_{2\le k<l\le \rho}^{1\le
\beta \le \delta} \langle t^2\mid e_{kl}^\beta\rangle
e_{kl}^\beta\right]+o(t^2),
$$
where $t$ is a free parameter taking values in $\oplus_{j=2}^\rho V_{1j}$.

So, in view of the fact that $(e_{1i}^\alpha)^2={1\over
2}(e_{11}+e_{ii})$, we have
\begin{eqnarray}
\mbox{LHS of Eq. (\ref{finalid})}&=& \left.\sum_{2\le i\le \rho}^{1\le \alpha \le \delta} {\partial^2\over
\partial (t_{1i}^\alpha)^2}(\langle u\mid y\rangle)\right|_{t=0}\cr
&=& 2\rho\sum_{2\le i\le
\rho}^{1\le \alpha\le \delta}[-\langle (e_{1i}^\alpha)^2\mid
e_{11}\rangle \langle u\mid e_{11}\rangle \cr &&\quad\quad\quad\quad +\sum_{j=2}^\rho
\langle (e_{1i}^\alpha)^2\mid e_{jj}\rangle \langle u\mid
e_{jj}\rangle\cr &&\quad\quad\quad\quad + \sum_{2\le k<l\le \rho}^{1\le \beta \le \delta}
\langle (e_{1i}^\alpha)^2\mid e_{kl}^\beta\rangle \langle u\mid
e_{kl}^\beta\rangle ]\cr
&=& \sum_{2\le i\le \rho}^{1\le \alpha\le
\delta}(- \langle u\mid e_{11}\rangle+\langle u\mid e_{ii}\rangle )\cr
&=& \delta\sum_{2\le i\le \rho}(- \langle u\mid e_{11}\rangle+ \langle
u\mid e_{ii}\rangle )\cr &=& \delta(- (\rho-1)\langle u\mid
e_{11}\rangle+ \langle u\mid e-e_{11}\rangle )\cr &=&\delta(- \rho\langle
u\mid e_{11}\rangle+ \langle u\mid e\rangle )\cr &=& \mbox{RHS of
Eq. (\ref{finalid})}.\nonumber
\end{eqnarray}

\end{proof}
\subsection{Hidden Action}The following theorem implies that there is a hidden action of the conformal algebra
on the Kepler cone.
\begin{Thm}[Hidden Action/Dynamical Symmetry]\label{Main1}
Let $V$ be a simple euclidean Jordan algebra with rank $\rho\ge 2$
and degree $\delta$.  There are unique constants $A$ and $B$ in Eq. (\ref{defX}) such that, as differential operators on the Kepler
cone, ${\tilde S}_{uv}$, ${\tilde  X}_u$ and ${\tilde Y}_v$ satisfy
the TKK commutation relation (\ref{CONFA}) for the conformal algebra,
i.e.,
\begin{eqnarray}
\begin{matrix}[\tilde X_u, \tilde X_v] =0, \quad [\tilde Y_u, \tilde Y_v]=0, \quad [\tilde X_u,
\tilde Y_v] = -2\tilde S_{uv},\cr\\ [\tilde S_{uv},
\tilde X_z]=\tilde X_{\{uvz\}}, \quad [\tilde S_{uv}, \tilde Y_z]=-\tilde Y_{\{vuz\}},\cr\\
[\tilde S_{uv}, \tilde S_{zw}] = \tilde S_{\{uvz\}w}-\tilde
S_{z\{vuw\}}
\end{matrix}\nonumber
\end{eqnarray}for $u$, $v$, $z$, $w$ in $V$. In fact,
$$
A={2/\rho^2\over 1+{\delta\over 4}(\rho-2)},\quad B={\delta\over
8}(\rho-2)\left(\left({3\rho\over 2}-1\right)\delta-2\right).
$$\end{Thm}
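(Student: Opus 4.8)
The plan is to verify, one by one, the six TKK relations (\ref{CONFA}) for the operators $\tilde S_{uv}$, $\tilde X_u$, $\tilde Y_v$, organized by how many factors of the second-order operator $X$ each relation contains. Throughout I write $\phi_u=\langle u\mid x\rangle$ and $r=\phi_e$, and I rely on the two conclusions of Lemma \ref{lemma}: the commutator identity $[X,\phi_u]=2\tilde L_u$ (part i, which has already forced the value of $A$) and the geometric formula $X=-r\,\Delta_{\ms P}-B/r$ (part ii). Three structural observations streamline everything: $\tilde Y_v$ is a multiplication operator; $\tilde S_{uv}=\hat S_{uv}+\sigma_{uv}$ differs from the structure-algebra vector field $\hat S_{uv}$ only by multiplication by the scalar $\sigma_{uv}$ built from the $\lambda$-corrections; and $\tilde X_u=i[X,\tilde L_u]$ is of order two, so $[\tilde X_u,\tilde Y_v]$ can only be of order one while $[\tilde X_u,\tilde X_v]$ could a priori be of order three.

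I would dispose first of the three relations that involve no $X$, all independent of $B$. The bracket $[\tilde Y_u,\tilde Y_v]=0$ is immediate since both factors are multiplications. For $[\tilde S_{uv},\tilde Y_z]=-\tilde Y_{\{vuz\}}$ the scalar $\sigma_{uv}$ commutes with $\phi_z$, so the computation collapses to $\hat S_{uv}(\phi_z)=-\langle S_{vu}(z)\mid x\rangle=-\langle\{vuz\}\mid x\rangle$, exactly as in Proposition \ref{prop1} using $S'_{uv}=S_{vu}$. The structure-algebra closure $[\tilde S_{uv},\tilde S_{zw}]=\tilde S_{\{uvz\}w}-\tilde S_{z\{vuw\}}$ I would reduce to the single identity $[[\tilde L_u,\tilde L_v],\tilde L_z]=\tilde L_{u(vz)-v(uz)}$; granting it, the purely formal polarization argument that produces (\ref{strR}) from (\ref{usefulid}) applies verbatim. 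Its first-order part is the corresponding relation for the vector fields $\hat L_u$, valid on $\ms P$ because they descend there by Lemma \ref{KeyLemma}(ii), while its scalar part is a short identity among the functions $\lambda_u$.

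The relation $[\tilde X_u,\tilde Y_v]=-2\tilde S_{uv}$ contains one factor of $X$ yet remains independent of $B$: unwinding $\tilde X_u=i[X,\tilde L_u]$ gives $[\tilde X_u,\tilde Y_v]=[[X,\tilde L_u],\phi_v]$, and the Jacobi identity with Lemma \ref{lemma}(i) used twice (for $\phi_v$ and for $\phi_{uv}$) together with $[\tilde L_u,\phi_v]=-\phi_{uv}$ telescopes this to $-2([\tilde L_u,\tilde L_v]+\tilde L_{uv})$. With relations two, three, five and six secured, the two remaining relations are brought to a common normal form by Jacobi bookkeeping. Bracketing $[\tilde X_a,\tilde Y_b]=-2\tilde S_{ab}$ with $\tilde S_{uv}$ and invoking relations five, six and three shows that $[\tilde S_{uv},\tilde X_a]-\tilde X_{\{uva\}}$ commutes with every multiplication operator $\tilde Y_b$, hence is of order zero; once relation four is thereby in hand, bracketing the same identity with $\tilde X_z$ shows in the same way that $[\tilde X_u,\tilde X_v]$ is of order zero. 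In short, relations four and one are each equivalent to the vanishing of an explicit zeroth-order operator.

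The determination of those zeroth-order remainders is the heart of the proof and the place where $B$ is pinned down. Here I would insert the geometric form $X=-r\,\Delta_{\ms P}-B/r$ and evaluate the remaining functions in a Jordan frame, in the style of Lemma \ref{lemma}; each remainder turns out to be an affine function of $B$ that vanishes precisely at the stated value. The conceptual picture behind relation one is the $\frk{sl}_2$-triple $(e_+,h,e_-)=(X,\,2\tilde L_e,\,r)$, for which $[X,r]=2\tilde L_e$, $[2\tilde L_e,r]=-2r$ and $[2\tilde L_e,X]=2X$ (the last because $\hat L_e$ scales $\Delta_{\ms P}$ with weight two while $\lambda_e$ is constant). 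Each $\phi_v$ is a lowest-weight vector of weight $-2$, with $\tilde L_v=\tfrac12[X,\phi_v]$ and $\tilde X_v=\tfrac{i}{2}[X,[X,\phi_v]]$ its weight-zero and weight-two descendants, so $[\tilde L_e,\tilde X_v]=\tilde X_v$ is automatic; the decisive requirement is that the $\frk{sl}_2$-module generated by $\phi_v$ be the three-dimensional irreducible one, i.e. that the string terminate at $[X,\tilde X_v]=0$, which fails for all but one $B$ and forces $B=\tfrac{\delta}{8}(\rho-2)\bigl((\tfrac{3\rho}{2}-1)\delta-2\bigr)$. With $[X,\tilde X_v]=0$ and relation four in place, relation one closes formally, $[\tilde X_u,\tilde X_v]=i[X,[\tilde L_u,\tilde X_v]]=i[X,\tilde X_{uv}]=0$. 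I expect the genuine obstacle to be exactly this Jordan-frame evaluation, together with the computation of $[\hat L_u,\Delta_{\ms P}]$ feeding the order-zero remainders: everything else is formal Lie-algebra manipulation, but the vanishing of these scalars — and with it the exact value of $B$ — rests on the cone geometry.
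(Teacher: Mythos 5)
Your proposal is correct and follows essentially the same route as the paper: the three $X$-free relations and $[\tilde X_u,\tilde Y_v]=-2\tilde S_{uv}$ are handled by the same direct computations via Lemma \ref{lemma}(i), and the remaining two relations are reduced, exactly as in the paper's steps four and five, to showing that an operator commuting with all multiplications $\langle b\mid x\rangle$ is of order zero and then evaluating it on the constant function $1$ in a Jordan frame (the paper's appendix computations $O(1)=0$ and $\ms O(1)=0$, which pin down $B$). Your $\frk{sl}_2$-triple $(X,2\tilde L_e,r)$ is a pleasant conceptual gloss on why $[X,\tilde X_v]=0$ is the decisive condition, but it does not replace the Jordan-frame evaluation, which you correctly identify as the real content.
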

\begin{rmk}Here are the more explicit values of $A$ and $B$:
\begin{displaymath}
\begin{array}{|c|c|c|c|c|c|} \hline
V & \Gamma(n) & {\mathcal H}_n(\bb R) & {\mathcal H}_n(\bb C) &{\mathcal H}_n(\bb H) & {\mathcal H}_3(\bb O)\\
\hline
&&&&&\\
A & {1\over 2} & {8\over n^2(n+2)} & {4\over n^3} & {2\over n^2(n-1)} & {2\over 27}\\
&&&&&\\
B & 0 & {3(n-2)^2\over 16} & {(n-2)(3n-4)\over 4} & 3(n-1)(n-2) & 26\\
&&&&&\\
\hline
\end{array}
\end{displaymath}
In the case  $\rho=1$, a similar theorem is valid except
that $A=0$ and $B$ is not unique.
\end{rmk}

\begin{proof}
For any function $f$ on $V$ and $u, v\in V$, we have $$(\tilde
Y_u\tilde Y_v) (f)(x)=-\langle u\mid x\rangle\langle v\mid x\rangle f(x),$$
so $[\tilde Y_u, \tilde Y_v]=0$ for any $u, v\in V$. The rest of the
proof is divided into four steps.

\vskip 10pt \underline{Step one}: Verify that $[\tilde S_{uv},
\tilde Y_z]=-\tilde Y_{\{vuz\}}$.

This is a simple computation:
\begin{eqnarray}
[\tilde S_{uv}, \tilde Y_z] & = & [\hat S_{uv}, -i\langle z\mid
x\rangle]= i\langle z\mid S_{uv}(x)\rangle\cr &= &i\langle S_{vu}(z)\mid
x\rangle=i\langle \{vuz\}\mid x\rangle \cr &=&-\tilde
Y_{\{vuz\}}.\nonumber
\end{eqnarray}

\underline{Step two}: Verify that
\begin{eqnarray}\label{step2}
[\tilde S_{uv}, \tilde
S_{zw}]=\tilde S_{\{uvz\}w}-\tilde S_{z\{vuw\}}.
\end{eqnarray}

It is easy to see that $\tilde S_{uv} = \hat S_{uv}-\lambda_{uv}$. Thanks to Proposition \ref{prop1}, $[\hat S_{uv}, \hat
S_{zw}]=\hat S_{\{uvz\}w}-\hat S_{z\{vuw\}}$, so all we need to
check is that
$$
\lambda_{\{uvz\}w}-\lambda_{z\{vuw\}}=[\hat S_{uv},
\lambda_{zw}]-[\hat S_{zw}, \lambda_{uv}],
$$ i.e.,
$$
\lambda_{\{uvz\}w}-\lambda_{z\{vuw\}}=-
\lambda_{\{vu(zw)\}}+\lambda_{\{wz(uv)\}}. 
$$ 
Since $\lambda_u$ is linear in $u$, the last equality is implied by identity
$$
L_{\{uvz\}}-L_z S_{vu}=-S_{vu}L_z+S_{(uv)z},\; \mbox{i.e.},\; L_{\{uvz\}}=S_{(zv)u}-S_{v(zu)}+S_{(uv)z}
$$
or equivalently
$$
0=[L_{zv}, L_u]-[L_v, L_{zu}]+[L_{uv}, L_z]
$$ --- the polarization of identity $[L_{u^2}, L_u]=0$.

\underline{Step three}. Verify that $[\tilde X_u, \tilde
Y_v]=-2\tilde S_{uv}$, i.e.,
\begin{eqnarray}\label{step3}
\fbox{$[[\tilde L_u, X], \langle v\mid x\rangle]=2\tilde S_{uv}$.}
\end{eqnarray}

This is a consequence of part i) of Lemma \ref{lemma}:
\begin{eqnarray}
[[\tilde L_{u}, X],\langle x\mid v\rangle]&=&[[\langle x\mid
v\rangle, X],\tilde L_{u}]+[[\tilde L_{u},\langle x\mid v\rangle],
X]\cr &=& [-2\tilde L_v, \tilde L_u]+[-\langle x\mid uv\rangle,
X]\cr &=& [-2\tilde L_v, \tilde L_u]+2\tilde L_{uv}=2\tilde
S_{uv}.\nonumber
\end{eqnarray}

\underline{Step four}. Verify that $[\tilde S_{uv}, \tilde
X_z]=\tilde X_{\{uvz\}}$, i.e., $[\tilde S_{uv}, [\tilde L_z,
X]]=[\tilde L_{\{uvz\}}, X]$.

Note that $X$ is invariant under $\frk{der}$ and $\tilde
S_{eu}=\tilde S_{ue}=\tilde L_u$, in view of Eq. (\ref{step2}), we have
\begin{eqnarray}
[\tilde S_{uv}, [\tilde L_z, X]] &=& [X, [\tilde L_z,\tilde S_{uv}]]+[\tilde L_z, [\tilde S_{uv}, X]]\cr &=&[\tilde L_{\{uvz\}}-\tilde
S_{z(vu)}, X]+[\tilde L_z, [\tilde S_{uv}, X]]\cr &=&[\tilde
L_{\{uvz\}}-\tilde L_{z(vu)}, X]+[\tilde L_z, [\tilde L_{uv}, X]]\cr
&=&[\tilde L_{\{uvz\}}, X]-[\tilde L_{z(uv)}, X]+[\tilde L_z,
[\tilde L_{uv}, X]].\nonumber
\end{eqnarray}
So it suffices to show that
\begin{eqnarray}\label{Lid}
\fbox{$[\tilde L_{uv}, X]=[\tilde L_u, [\tilde L_v, X]]$}
\end{eqnarray} for any $u, v\in V$. To prove it, we let
$$O=[\tilde L_{uv}, X]-[\tilde L_u, [\tilde L_v, X]],$$
and show that, 1) for any $z\in V$, $O_z:=[O, \langle x\mid
z\rangle]=0$, 2) $O(1)=0$. Eq. (\ref{step3}) implies that
\begin{eqnarray}
[[\tilde L_u, [\tilde L_v, X]], \langle x\mid z\rangle] & = &
[[\langle x\mid z\rangle, [\tilde L_v, X]], \tilde L_u]+[[\tilde
L_u, \langle x\mid z\rangle], [\tilde L_v, X]]\cr
&=& [[\langle x\mid z\rangle, [\tilde L_v, X]], \tilde L_u]+[-\langle x\mid uz\rangle, [\tilde L_v, X]]\cr
&=& [-2\tilde S_{vz},\tilde L_u]+2\tilde S_{v(uz)}.\nonumber
\end{eqnarray}
So, in view of Eq. (\ref{step2}), we have
\begin{eqnarray}
O_z/2 & = & \tilde S_{(uv)z}+ [\tilde S_{vz},\tilde L_u]-\tilde
S_{v(uz)}\cr 
&=&  \tilde S_{(uv)z}-\tilde
S_{v(uz)} - [\tilde S_{ue}, \tilde S_{vz}]\cr
&=&0.\cr 
\end{eqnarray}
The proof of $O(1)=0$ is a long computation, so its details are
provided in the appendix.

\underline{Step five}. Verify that $[\tilde X_u, \tilde X_v]=0$,
i.e., $[[\tilde L_u, X], [\tilde L_v, X]]=0$.

Eq. (\ref{Lid}) implies that
\begin{eqnarray}
[[\tilde L_u, X], [\tilde L_v, X]] &=& [[\tilde L_u,[\tilde L_v,
X]],X]+[[\tilde L_v, X],X], \tilde L_u] \cr &=& [[\tilde L_{uv},
X],X]+[[[\tilde L_v, X],X], \tilde L_u].\nonumber
\end{eqnarray}
So it suffices to show that
\begin{eqnarray}\label{Lid2}
\fbox{$[[\tilde L_u, X], X]=0$}
\end{eqnarray} for any $u\in V$. To prove Eq. (\ref{Lid2}), as in step four, we let $\ms O=[[\tilde
L_u, X], X]$, and show that, 1) for any $z\in V$, $\ms O_z:=[\ms O,
\langle x\mid z\rangle]=0$,  2) $\ms O(1)=0$.

In view of Eqs. (\ref{Lid}) and (\ref{step3}), part i) of Lemma \ref{lemma} and the
fact that $X$ is invariant under the action of $\frk {der}$, we have
\begin{eqnarray}
[\ms O, \langle x\mid z\rangle]&=& [[\langle x\mid z\rangle,X],
[\tilde L_u, X]]+[[[\tilde L_u,X], \langle x\mid z\rangle], X]\cr
&=& -2[\tilde L_z, [\tilde L_u, X]]+[2\tilde S_{uz}, X]\cr
&=& -2[\tilde L_z, [\tilde L_u, X]]+[2\tilde L_{uz}, X]=0.\cr
\end{eqnarray}
The proof of $\ms O(1)=0$ is a long computation, so its details are
provided in the appendix.

\end{proof}
\subsection{Quadratic Relations for the Hidden Action}
Let $V_0$ be the orthogonal complement of $e$ in $V$, $D=\dim V_0$ and $\{e_\alpha\}_{1\le \alpha\le D}$ be the orthonormal basis for $V_0$. We write $e$ as $e_0$ and $\langle x\mid e_\alpha\rangle$ as $x_\alpha$. Note that
$\tilde S_{eu}=\tilde S_{ue}=\tilde L_u$.
\begin{Thm}(The Primary Quadratic Relation)
\begin{eqnarray}\label{case1}
{2\over \rho} \sum_{0\le \alpha\le D} \tilde L_{e_\alpha}^2-\tilde L_e^2-{1\over 2}\{\tilde X_e,\tilde Y_e\}= -a.
\end{eqnarray}
 Here, $a={\rho \delta\over 4}(1+{(\rho-2)\delta\over 4})$.
 \end{Thm}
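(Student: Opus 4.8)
The plan is to show that the operator
$T:=\frac{2}{\rho}\sum_{0\le\alpha\le D}\tilde L_{e_\alpha}^2-\tilde L_e^2-\frac12\{\tilde X_e,\tilde Y_e\}$
is a scalar by proving first that it has order zero and then evaluating it on the constant function. Concretely, I would show $[T,\tilde Y_w]=0$ for every $w\in V$. Since $\tilde Y_w=-i\langle w\mid x\rangle$ and the differentials $d\langle w\mid x\rangle$ span $T^*_x{\ms P}$ at every point (for $\rho\ge2$ the cone ${\ms P}$ spans $V$, so $\langle w\mid v\rangle=0$ for all $w$ forces $v=0$), commuting with all these multiplication operators forces $T$ to be multiplication by the function $T(1)$. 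It then remains to compute $T(1)$ and identify it with $-a$.

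For the vanishing of $[T,\tilde Y_w]$ I would use only the TKK relations of Theorem \ref{Main1} together with $[\tilde L_u,\tilde Y_z]=-\tilde Y_{uz}$ (the case $\tilde S_{ue}$ of $[\tilde S_{uv},\tilde Y_z]=-\tilde Y_{\{vuz\}}$, using $\{euz\}=uz$). Expanding gives $[\tilde L_{e_\alpha}^2,\tilde Y_w]=-\{\tilde L_{e_\alpha},\tilde Y_{e_\alpha w}\}$, $[\tilde L_e^2,\tilde Y_w]=-\{\tilde L_e,\tilde Y_w\}$ and $[\{\tilde X_e,\tilde Y_e\},\tilde Y_w]=-2\{\tilde L_w,\tilde Y_e\}$, hence
$$[T,\tilde Y_w]=-\frac{2}{\rho}\sum_\alpha\{\tilde L_{e_\alpha},\tilde Y_{e_\alpha w}\}+\{\tilde L_e,\tilde Y_w\}+\{\tilde L_w,\tilde Y_e\}.$$
I would then split every term into its first-order (vector-field) part and its zeroth-order (multiplication) part. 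The vector-field parts, using the tangent field $\sum_\alpha\langle e_\alpha\mid wx\rangle\hat L_{e_\alpha}=Z_w$ with $Z_w|_x=-(wx)x$ (tangency by part (ii) of Lemma \ref{KeyLemma}), collapse to the single identity $\frac{2}{\rho}(wx)x=\langle w\mid x\rangle\,x+r\,(wx)$ on ${\ms P}$; writing $x=\rho r\,e_{11}$ for a primitive idempotent and using the Peirce eigenvalues $1,\frac12,0$ of $L_{e_{11}}$ from the Jordan Frame theorem, this is immediate. The multiplication parts, after $\sum_\alpha\langle e_\alpha\mid wx\rangle\lambda_{e_\alpha}=(p\rho+q)\langle w\mid x\rangle$ on ${\ms P}$ (with $p=\frac{(\rho-2)\delta}{4}$, $q=\frac{\rho\delta}{4}$), reduce to the vanishing of the coefficients of $\langle w\mid x\rangle$ and of $r\langle e\mid w\rangle$; these vanish precisely because $\sum_\alpha L_{e_\alpha}^2$ acts as $\dim V$ on $\bb R e$ (equivalently $\sum_\alpha e_\alpha^2=(\dim V)e$) and as $\kappa=\rho\bigl(1+\frac{(\rho-2)\delta}{4}\bigr)$ on $V_0=e^\perp$, which I would verify by evaluating $\sum_\alpha L_{e_\alpha}^2$ on $e_{11}-e_{22}$ in a Jordan frame.

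For $T(1)$ I record three scalar facts: $\tilde L_u(1)=-\lambda_u$; $X(1)=-B/r$ by Lemma \ref{lemma}(ii); and $\tilde X_e=-iX$, which holds because $\lambda_e=\frac{(\rho-1)\delta}{2}$ is constant and $X=-\langle e\mid x\rangle\Delta_{\ms P}-B/\langle e\mid x\rangle$ is homogeneous of degree $-1$ under the Euler field $\hat L_e=-r\partial_r$, so $[\hat L_e,X]=X$. Combined with $[X,r]=2\tilde L_e$ (Lemma \ref{lemma}(i) at $u=e$, giving $X(r)=-B-2\lambda_e$) this yields $\{\tilde X_e,\tilde Y_e\}(1)=-\{X,r\}(1)=2\lambda_e+2B$, so $-\frac12\{\tilde X_e,\tilde Y_e\}(1)=-\lambda_e-B$. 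Meanwhile $\sum_\alpha\tilde L_{e_\alpha}^2(1)=-\sum_\alpha\hat L_{e_\alpha}(\lambda_{e_\alpha})+\sum_\alpha\lambda_{e_\alpha}^2$, where $\sum_\alpha\hat L_{e_\alpha}(\lambda_{e_\alpha})=p(\rho-\dim V)$ and $\sum_\alpha\lambda_{e_\alpha}^2=\rho p^2+2pq+q^2$ (using $\|x\|^2=\rho r^2$ and $\mr{Tr}(L_x)=(\dim V)\,r$ on ${\ms P}$), and $\tilde L_e^2(1)=\lambda_e^2$. Substituting the stated values of $B,p,q,\lambda_e$ and simplifying collapses $T(1)$ to $-\frac{\rho(\rho-2)\delta^2}{16}-\frac{\rho\delta}{4}=-\frac{\rho\delta}{4}\bigl(1+\frac{(\rho-2)\delta}{4}\bigr)=-a$.

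The main obstacle is the bookkeeping in the identity $[T,\tilde Y_w]=0$: one must organize the anticommutators so that the position-dependent coefficients in $\sum_\alpha\langle e_\alpha\mid wx\rangle\tilde L_{e_\alpha}$ are resolved on ${\ms P}$, and one must independently pin down the Casimir-type scalar $\kappa=\sum_\alpha L_{e_\alpha}^2|_{V_0}$. Once these two Jordan-algebraic facts are in hand, everything else is a routine substitution of the constants already fixed in Theorem \ref{Main1}.
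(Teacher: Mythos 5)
Your proposal is correct and follows essentially the same strategy as the paper's own proof: show that the operator commutes with every $\tilde Y_w$ (hence is multiplication by its value on $1$) and then evaluate it on the constant function, relying on the same identity $\sum_\alpha L_{e_\alpha}^2=\rho\left(1+\tfrac{(\rho-2)\delta}{4}\right)L_e+\tfrac{\rho^2\delta}{4}\mid e\rangle\langle e\mid$ and on the tangency of the fields $\hat L_u$ to $\ms P$. The differences are only organizational --- you separate symbol orders and obtain the on-$\ms P$ identity $\tfrac{2}{\rho}(wx)x=\langle w\mid x\rangle x+r\,(wx)$ by Peirce decomposition where the paper gets the same cancellation by differentiating $x^2=\tr x\,x$ along a tangent field, and you make explicit the order-zero argument the paper leaves implicit --- and your evaluation of $T(1)$ reproduces the paper's value $-a$.
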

 \begin{proof}  
Let $\ms O$ be the left hand side of Eq. (\ref{case1}). We just need to show that, on the Kepler cone we have 1) $[\ms O, \tilde Y_u]=0$ and 2) $\ms O(1)=-a$.

Since
\begin{eqnarray}
[\tilde L_{e_\alpha}^2, \tilde Y_u]&=& \{\tilde L_{e_\alpha}, [\tilde L_{e_\alpha}, \tilde Y_u]\}= \{\tilde L_{e_\alpha}, -\tilde Y_{ue_\alpha}\}\cr
&=& -2\tilde Y_{ue_\alpha}\tilde L_{e_\alpha}-[\tilde L_{e_\alpha}, \tilde Y_{ue_\alpha}]\cr
&=& -2\tilde Y_{ue_\alpha}\tilde L_{e_\alpha}+\tilde Y_{(ue_\alpha)e_\alpha},\nonumber
\end{eqnarray}
we have
\begin{eqnarray}
\sum_{0\le \alpha\le D}[\tilde L_{e_\alpha}^2, \tilde Y_u]
&=& \sum_{0\le \alpha\le D}-2\tilde Y_{ue_\alpha}(\hat L_{e_\alpha}-\lambda_{e_\alpha})+\tilde Y_{(ue_\alpha)e_\alpha}\cr
&=& 2i\hat L_{ux}-2i\lambda_{ux}+\rho\left (1+{\rho-2\over 4}\delta\right )\tilde Y_u+{\rho \delta\over 4}\tr u \tilde Y_e\cr
&=& 2i\hat L_{ux}+\rho\left (1+{3\rho-4\over 4}\delta\right )\tilde Y_u+{\rho \delta\over 4}\tr u \tilde Y_e, \nonumber
\end{eqnarray}
here, we have used the identity
$$
\sum_\alpha L_{e_\alpha}^2 =\rho\left (1+{\rho-2\over 4}\delta\right )L_e+{\rho^2 \delta\over 4}\mid e\rangle\langle e\mid.
$$
On the other hand,
\begin{eqnarray}
[\tilde L_e^2+{1\over 2}\{\tilde X_e,\tilde Y_e\}, \tilde Y_u]&=&\{\tilde L_e, [\tilde L_e, \tilde Y_u]\}+{1\over 2}\{[\tilde X_e, \tilde Y_u], \tilde Y_e\}\cr
&=&-2(\tilde Y_u\tilde L_e+\tilde Y_e\tilde L_u-\tilde Y_u), \nonumber
\end{eqnarray}
so we have
\begin{eqnarray}
[\ms O, \tilde Y_u]&=& {4i\over \rho}\hat L_{ux}+2\left (1+{3\rho-4\over 4}\delta\right )\tilde Y_u+{\delta\over 2}\tr u \tilde Y_e\cr
&& +2(\tilde Y_u\hat L_e+\tilde Y_e\hat L_u-\tilde Y_u)-2(\tilde Y_u\lambda_e+\tilde Y_e\lambda_u)\cr
&=& {4i\over \rho}\hat L_{ux}+2(\tilde Y_u\hat L_e+\tilde Y_e\hat L_u)\cr
&=& {2i\over \rho}\sum_{0\le \beta\le D} \hat L_u(\langle x^2-\tr x\, x\mid {/\hskip -5pt
\partial}\rangle)\cr
&=&0 \nonumber
\end{eqnarray}
as differential operator on the Kepler cone.

Since
\begin{eqnarray}
 \sum_{0\le \alpha\le D} \tilde L_{e_\alpha}^2 (1)&=& \sum_{0\le \alpha\le D} -\hat L_{e_\alpha}(\lambda_{e_\alpha})+\lambda_{e_\alpha}^2\cr
 &=& {\rho(\rho-2)\over 4}\delta\lambda_e+{\rho^2\over 8}\delta\lambda_e={\rho(3\rho/2-2)\over 4}\delta\lambda_e, \nonumber
\end{eqnarray}
and
\begin{eqnarray}
 -\tilde L_e^2(1)-{1\over 2}\{\tilde X_e,\tilde Y_e\} (1)=-\lambda_e^2-\lambda_e-B,\nonumber
\end{eqnarray}
we have
\begin{eqnarray}
 -a &= & {(3\rho/2-2)\over 2}\delta\lambda_e-\lambda_e^2-\lambda_e-B\cr
 &=& -{\rho \delta\over 4}(1+{(\rho-2)\delta\over 4}),\nonumber
\end{eqnarray}
i.e., $$a={\rho \delta\over 4}\left(1+{(\rho-2)\delta\over 4}\right).$$
\end{proof}

 \begin{cor}(The Secondary Quadratic Relations)
 \begin{eqnarray}
\sum_{0\le \alpha\le D}\{\tilde X_{e_\alpha}, \tilde L_{e_\alpha}\} =\rho \{
 \tilde X_e, \tilde L_e\},\quad
 \sum_{0\le \alpha\le D}\{\tilde Y_{e_\alpha}, \tilde L_{e_\alpha}\} = \rho\{
 \tilde Y_e, \tilde L_e\}, \cr
\sum_{0\le \alpha\le D}\tilde X_{e_\alpha}^2 = \rho\tilde X_e^2, \quad
\sum_{0\le \alpha\le D}\tilde Y_{e_\alpha}^2 =\rho\tilde Y_e^2,\quad
{1\over 2} \sum_{0\le \alpha\le D} \{\tilde X_{e_\alpha}, \tilde Y_{e_\alpha}\}=\rho(\tilde L_e^2+a),\cr
{2\over \rho}\sum_{1\le \alpha\le D}\{\tilde L_{e_\alpha, u}, \tilde L_{e_\alpha}\}={1\over 2}\left(-\{\tilde X_u, \tilde Y_e\}+\{\tilde X_e, \tilde Y_u\} \right),\cr
{2\over \rho}\sum_{1\le \alpha\le D}\{\tilde L_{e_\alpha, u}, \tilde X_{e_\alpha}\}=-\{\tilde X_u, \tilde L_e\}+\{\tilde L_u, \tilde X_e\},\cr
  {2\over \rho}\sum_{1\le \alpha\le D}\{\tilde L_{e_\alpha,u}, \tilde Y_{e_\alpha}\}=\{\tilde Y_u, \tilde L_e\}-\{\tilde L_u, \tilde Y_e\},\cr
A\sum_{1\le\alpha, \beta\le D}[\tilde L_{e_\alpha}, \tilde L_{e_\beta}]^2={1\over 2}\{\tilde X_e, \tilde Y_e\}-\tilde L_e^2+{\rho \delta\over 4}({\rho \delta\over 4}-1).\nonumber
\end{eqnarray}
Here, $A={2/\rho^2\over 1+{(\rho-2)\delta\over 4}}$. \end{cor}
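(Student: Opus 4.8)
The plan is to treat the Primary Quadratic Relation as a single seed identity and to generate the Secondary Relations from it by repeatedly commuting with the generators $\tilde X_u$, $\tilde Y_u$ of the hidden action, using only the TKK commutation relations of Theorem \ref{Main1} (and, for the very last relation, the explicit form of $X$ from Lemma \ref{lemma}). Write $P:=\frac2\rho\sum_{0\le\alpha\le D}\tilde L_{e_\alpha}^2-\tilde L_e^2-\frac12\{\tilde X_e,\tilde Y_e\}$, so that the Primary Relation reads $P=-a$ as an operator on $\ms P$. Since $P$ equals the scalar $-a$, it commutes with every differential operator on $\ms P$; in particular $[\tilde X_e,P]=[\tilde Y_e,P]=0$. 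The elementary inputs, all immediate from Theorem \ref{Main1} together with $\tilde S_{eu}=\tilde S_{ue}=\tilde L_u$ (whence $[\tilde L_e,\tilde L_u]=0$) and triple-product evaluations such as $\{e_\alpha\,e\,e\}=e_\alpha$ and $\{e\,e_\alpha\,u\}=e_\alpha u$, are $[\tilde X_e,\tilde L_{e_\alpha}]=-\tilde X_{e_\alpha}$, $[\tilde Y_e,\tilde L_{e_\alpha}]=\tilde Y_{e_\alpha}$, $[\tilde X_e,\tilde Y_e]=-2\tilde L_e$, and $[\tilde X_{e_\alpha},\tilde Y_u]=-2\tilde S_{e_\alpha u}$.

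First I would expand $[\tilde X_e,P]=0$. Using $[\tilde X_e,\tilde L_{e_\alpha}^2]=-\{\tilde X_{e_\alpha},\tilde L_{e_\alpha}\}$ and $[\tilde X_e,\{\tilde X_e,\tilde Y_e\}]=-2\{\tilde X_e,\tilde L_e\}$, the three pieces of $P$ collapse to $-\frac2\rho\sum_\alpha\{\tilde X_{e_\alpha},\tilde L_{e_\alpha}\}+2\{\tilde X_e,\tilde L_e\}=0$, which is exactly the first Secondary Relation; commuting instead with $\tilde Y_e$ gives the second by the symmetric computation. I would then feed these back in: commuting the first relation once more with $\tilde X_e$ (noting $[\tilde X_e,\tilde X_{e_\alpha}]=0$) turns each $\{\tilde X_{e_\alpha},\tilde L_{e_\alpha}\}$ into $-2\tilde X_{e_\alpha}^2$ and yields $\sum_\alpha\tilde X_{e_\alpha}^2=\rho\tilde X_e^2$, and likewise the fourth relation follows from the second via $\tilde Y_e$. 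The fifth is obtained by commuting the first relation with $\tilde Y_e$, which produces $4\sum_\alpha\tilde L_{e_\alpha}^2+\sum_\alpha\{\tilde X_{e_\alpha},\tilde Y_{e_\alpha}\}=\rho(4\tilde L_e^2+\{\tilde X_e,\tilde Y_e\})$; eliminating the reappearing $\sum_\alpha\tilde L_{e_\alpha}^2$ by re-substituting the Primary Relation leaves $\frac12\sum_\alpha\{\tilde X_{e_\alpha},\tilde Y_{e_\alpha}\}=\rho(\tilde L_e^2+a)$.

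For the three $u$-dependent relations I would commute the already-established relations (the third, and the first and second) with the general generators $\tilde X_u$ and $\tilde Y_u$. For instance $[\tilde Y_u,\tilde X_{e_\alpha}^2]=2\{\tilde S_{e_\alpha u},\tilde X_{e_\alpha}\}$ introduces the structure generators $\tilde S_{e_\alpha u}=[\tilde L_{e_\alpha},\tilde L_u]+\tilde L_{e_\alpha u}$, whose symmetric $\tilde L_{e_\alpha u}$-part assembles into the stated left-hand sides, while the derivation part $\sum_\alpha\{[\tilde L_{e_\alpha},\tilde L_u],\,\cdot\,\}$ must be disposed of using the antisymmetry in $\alpha$ together with the polarized Jordan identity $[L_{zv},L_u]-[L_v,L_{zu}]+[L_{uv},L_z]=0$ already exploited in Theorem \ref{Main1}. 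The final, genuinely second-order relation I would treat separately: using the Primary Relation to rewrite $\frac12\{\tilde X_e,\tilde Y_e\}-\tilde L_e^2$ in terms of $\sum_\alpha\tilde L_{e_\alpha}^2$, the claim reduces to the purely first-order identity $A\sum_{\alpha,\beta}[\tilde L_{e_\alpha},\tilde L_{e_\beta}]^2=\frac2\rho\sum_\alpha\tilde L_{e_\alpha}^2-2\tilde L_e^2+\mathrm{const}$, which is the content of the formula $\Delta_{\bb P}=A\sum[\hat L_{e_\alpha},\hat L_{e_\beta}]^2$ of Lemma \ref{lemma}(ii) (the $e$-terms drop since $[\hat L_{e_\alpha},\hat L_e]=0$, and likewise $[\tilde L_{e_\alpha},\tilde L_e]=0$), once $\hat L_u$ is converted to $\tilde L_u=\hat L_u-\lambda_u$.

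I expect the main obstacle to be the constant-and-scalar bookkeeping rather than any conceptual difficulty. The anticommutators force careful tracking of operator ordering; each re-substitution of the Primary Relation reintroduces the scalar $a$; and in the last relation the passage $\hat L_u\mapsto\tilde L_u-\lambda_u$, with $\lambda_u$ a non-constant function, generates correction terms in $\sum_{\alpha,\beta}[\tilde L_{e_\alpha},\tilde L_{e_\beta}]^2$ that must be shown to reorganize into first- and zeroth-order pieces, which is precisely what pins down the constant $\frac{\rho\delta}{4}(\frac{\rho\delta}{4}-1)$. Disposing of the structure-derivation terms $\sum_\alpha\{[\tilde L_{e_\alpha},\tilde L_u],\,\cdot\,\}$ in the $u$-dependent relations is the one place where a genuinely Jordan-algebraic (rather than purely Lie-theoretic) input is required.
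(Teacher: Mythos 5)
Your derivation of the first five relations coincides with the paper's: commute the Primary Relation with $\tilde X_e$ and $\tilde Y_e$ to get the two $\{\tilde X_{e_\alpha},\tilde L_{e_\alpha}\}$-type identities, commute those again with $\tilde X_e$ (resp.\ $\tilde Y_e$) for $\sum\tilde X_{e_\alpha}^2=\rho\tilde X_e^2$ and $\sum\tilde Y_{e_\alpha}^2=\rho\tilde Y_e^2$, and re-substitute the Primary Relation to isolate $\frac12\sum\{\tilde X_{e_\alpha},\tilde Y_{e_\alpha}\}=\rho(\tilde L_e^2+a)$. Your treatment of the last identity (reduce to $\Delta_{\bb P}=A\sum[\hat L_{e_\alpha},\hat L_{e_\beta}]^2$ from Lemma \ref{lemma}(ii) and track the $\lambda_u$-corrections) is also one of the two options the paper explicitly endorses.

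The gap is in the three $u$-dependent relations. First, a notational point that matters: by the convention $L_{u,v}:=[\hat L_u,\hat L_v]$ set in the Lenz-vector section, the symbol $\tilde L_{e_\alpha,u}$ on the left-hand sides is the \emph{commutator} $[\tilde L_{e_\alpha},\tilde L_u]$, i.e.\ the derivation part of $\tilde S_{e_\alpha u}$ --- you have the two summands of $\tilde S_{e_\alpha u}=[\tilde L_{e_\alpha},\tilde L_u]+\tilde L_{e_\alpha u}$ playing swapped roles. More seriously, your proposed move (commute $\sum\tilde X_{e_\alpha}^2=\rho\tilde X_e^2$ with $\tilde Y_u$, or the first two relations with $\tilde X_u,\tilde Y_u$) produces $\sum_\alpha\{\tilde S_{e_\alpha u},\tilde X_{e_\alpha}\}$, which splits into \emph{two} independent unknown sums, $\sum_\alpha\{[\tilde L_{e_\alpha},\tilde L_u],\tilde X_{e_\alpha}\}$ and $\sum_\alpha\{\tilde L_{e_\alpha u},\tilde X_{e_\alpha}\}$; one equation cannot determine both, and the mechanism you invoke to kill the unwanted one does not apply --- there is no antisymmetry available in a single repeated index $\alpha$, and the polarized Jordan identity $[L_{zv},L_u]-[L_v,L_{zu}]+[L_{uv},L_z]=0$ relates commutators of multiplication operators, not these anticommutator sums. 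The generator you never consider is the one that makes the argument close: commuting the Primary Relation directly with $\tilde L_{e_\beta}=\tilde S_{e_\beta e}$ kills $\tilde L_e^2$ (since $[\tilde L_{e_\beta},\tilde L_e]=0$), sends $\{\tilde X_e,\tilde Y_e\}$ to $\{\tilde X_{e_\beta},\tilde Y_e\}-\{\tilde X_e,\tilde Y_{e_\beta}\}$, and turns $\sum_\alpha\tilde L_{e_\alpha}^2$ into $-\sum_\alpha\{\tilde L_{e_\alpha,e_\beta},\tilde L_{e_\alpha}\}$ --- a single new unknown, giving the sixth relation at once; the seventh and eighth then follow by commuting that identity with $\tilde X_e$ and $\tilde Y_e$ (using $[\tilde X_e,\tilde L_{e_\alpha,e_\beta}]=[\tilde Y_e,\tilde L_{e_\alpha,e_\beta}]=0$). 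This is exactly the paper's route, and without it your derivation of the sixth through eighth relations does not go through as written.
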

\begin{proof}
The two identities
\begin{eqnarray}\label{case2}
 \sum_{0\le \alpha\le D}\{\tilde X_{e_\alpha}, \tilde L_{e_\alpha}\} =\rho \{
 \tilde X_e, \tilde L_e\},\quad  \sum_{0\le \alpha\le D}\{\tilde Y_{e_\alpha}, \tilde L_{e_\alpha}\} = \rho\{
 \tilde Y_e, \tilde L_e\}
\end{eqnarray}
can be obtained by taking the commutator of Eq. (\ref{case1}) with $\tilde X_e$ and $\tilde Y_e$ respectively.
Identity
\begin{eqnarray}\label{case3}
{2\over \rho}\sum_{1\le \alpha\le D}\{\tilde L_{e_\alpha, e_\beta}, \tilde L_{e_\alpha}\}+{1\over 2}\left(\{\tilde X_{e_\beta}, \tilde Y_e\}-\{\tilde X_e, \tilde Y_{e_\beta}\} \right)=0
\end{eqnarray}
can be obtained by forming the commutator of the identity in Eq. (\ref{case1}) with $\tilde L_{e_\beta}$ .

Identities
\begin{eqnarray}\label{case4}
\left\{
\begin{array}{rcl}
\sum_{0\le \alpha\le D}\tilde X_{e_\alpha}^2 & = & \rho \tilde X_e^2,\\\\
\sum_{0\le \alpha\le D}\tilde Y_{e_\alpha}^2 &= & \rho \tilde Y_e^2,\\\\
 {2\over \rho} \sum_{0\le
\alpha\le D} \{\tilde X_{e_\alpha}, \tilde Y_{e_\alpha}\} &=& 4(\tilde L_e^2 + a)
\end{array}
\right.
\end{eqnarray}
can be proved this way: The 1st identity here is obtained from forming the commutator of the 1st identity in Eq. (\ref{case2}) with $\tilde X_e$, and the 2nd identity here is obtained from forming the commutator of the 2nd identity in Eq. (\ref{case2}) with $\tilde Y_e$. The third identity here is obtained by first forming the commutator of the 1st identity in Eq. (\ref{case2}) with $\tilde Y_e$ and then using Eq. (\ref{case1}).

Identities
\begin{eqnarray}\left\{\begin{array}{rcl}
{2\over \rho}\sum_{1\le \alpha\le D}\{\tilde L_{e_\alpha, e_\beta}, \tilde X_{e_\alpha}\}+\{\tilde X_{e_\beta}, \tilde L_e\}-\{\tilde L_{e_\beta}, \tilde X_e\}
  & = & 0,\\
 \\
  {2\over \rho}\sum_{1\le \alpha\le D}\{\tilde L_{e_\alpha, e_\beta}, \tilde Y_{e_\alpha}\}-\{\tilde Y_{e_\beta}, \tilde L_e\}+\{\tilde L_{e_\beta}, \tilde Y_e\}
  & = & 0\end{array}\right.
\end{eqnarray}
can be obtained by taking the commutator of Eq. (\ref{case3}) with $\tilde X_e$ and $\tilde Y_e$ respectively.

The last identity in the corollary is a direct consequence of the definition of $X$, but  can be verified to be  a consequence of the TKK commutation relations in Theorem \ref{Main1} and Eq. (\ref{case1}).

\end{proof}

\section{J-Kepler Problems}\label{S:KP}
\begin{Definition}[J-Kepler Problem]\label{D:main2}
The {\bf J-Kepler problem} associated to a simple euclidean Jordan algebra
with rank $\rho$ and degree $\delta$ is the quantum mechanical system for
which the configuration space is the Kepler cone, and the
hamiltonian is
\begin{eqnarray}
\hat h=-{1\over 2}\Delta-\left( {B\over 2\langle e\mid x\rangle^2}+
{1\over \langle e\mid x\rangle}\right).
\end{eqnarray}
Here, $\Delta$ is the (non-positive) Laplace operator on the Kepler
cone, and $$B={\delta(\rho-2)\over 8}\left(\left({3\rho\over
2}-1\right)\delta-2\right).$$
\end{Definition}
\begin{rmk} In view of Eq. (\ref{Xdef}), we have another expression for $\hat h$:
\begin{eqnarray}
\hat h={1\over \langle e\mid x\rangle }\left({i\over 2}\tilde X_e-1\right).
\end{eqnarray}
\end{rmk}
\begin{rmk}
The J-Kepler problem is really the
quantum mechanical system for which the configuration space is the
geometric open cone over the projective space, and the hamiltonian
is
\begin{eqnarray}
\hat h=-{1\over 2}\Delta-\left( {B\over 2r^2}+ {1\over r}\right),
\end{eqnarray}
where $\Delta$ is the (non-positive) Laplace operator on the
open geometric cone over the projective space.
\end{rmk}
\begin{rmk}
The {\bf classical J-Kepler problem} is the classical mechanical system
for which the configuration space is the Kepler cone, and the
Lagrangian is
$$
L(x, \dot x)={1\over 2}|\dot x|^2+ {1\over
\langle e\mid x\rangle}.
$$
\end{rmk}

We are now ready to state
\begin{Prop}The J-Kepler problems are equivalent to the various
Kepler-type problems constructed and analyzed in Ref. \cite{meng},
but with zero magnetic charge. Here is the precise identification:
\begin{displaymath}
\begin{array}{|c|c|c|c|c|c|}
\hline
V & \Gamma(n) & {\mathcal H}_n(\bb R) & {\mathcal H}_n(\bb C) &{\mathcal H}_n(\bb H) & {\mathcal H}_3(\bb O)\\
\hline
& & & & &\\
\begin{matrix}\mbox{Kepler} \cr \mbox{problem}\end{matrix} & \begin{matrix}\mbox{MICZ in} \cr \mbox{dim. $n$}\end{matrix}
&  \begin{matrix}\mbox{$\mr O(1)$ in} \cr \mbox{dim.
$n$}\end{matrix}  &
\begin{matrix}\mbox{$\mr U(1)$ in} \cr \mbox{dim. $(2n-1)$}\end{matrix}
& \begin{matrix}\mbox{$\mr {Sp}(1)$ in} \cr \mbox{dim. $(4n-3)$}\end{matrix}   & \mbox{exceptional}\\
& & & & &\\
\hline
\end{array}
\end{displaymath}
\end{Prop}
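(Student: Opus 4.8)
The plan is to reduce the assertion to a comparison of Hamiltonians on a common Riemannian configuration space, with the magnetic charge set to zero from the outset. First I would recall from Ref.~\cite{meng} that each Kepler-type problem in the table is, at zero magnetic charge, a quantum system on the punctured geometric cone $\bb R_+\times M$ over a compact rank-one symmetric space $M$, carrying the cone metric $dr^2+r^2\,ds^2_M$ and a Hamiltonian of the Kepler form $-\frac12\Delta-\frac{c}{2r^2}-\frac1r$, where $\Delta$ is the cone Laplacian and $c$ is a constant fixed by the geometry of $M$; here \emph{zero magnetic charge} means that the homogeneous bundle over $M$ underlying the problem is the trivial one, so that the twisted Laplacian appearing in the general Hamiltonian collapses to the ordinary scalar Laplacian $\Delta_M$. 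Since the remark recasting $\hat h$ on the open geometric cone following Definition~\ref{D:main2} already displays the J-Kepler Hamiltonian in exactly this form, the matching splits into two independent tasks: identifying the two configuration spaces, and identifying the two coefficients of the $1/r^2$ term.

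For the first task I would invoke the isometry $\iota\colon\ms P\to\bb R_+\times\bb P$ of Section~\ref{S:KM}, together with the identification there of $\bb P$ (as $S^{n-1}$, $\bb R P^{n-1}$, $\bb C P^{n-1}$, $\bb H P^{n-1}$, or $\bb O P^2$, equipped with the round metric or four times the Fubini-Study metric). This exhibits the Kepler cone, as a Riemannian manifold, as precisely the cone over the base space $M=\bb P$ of the corresponding problem of Ref.~\cite{meng}; the differing normalizations (round versus four-times-Fubini-Study) must be tracked carefully, since they feed into the angular Laplacian and hence into the effective $1/r^2$ coefficient. Reading off $\dim\ms P=1+\delta(\rho-1)$ from the Jordan-frame tangent space $\bigoplus_{j=1}^\rho V_{1j}$ supplied by part~ii) of Lemma~\ref{KeyLemma}, and substituting the $(\rho,\delta)$ values from Section~\ref{S:JA}, then recovers the dimensions $n$, $n$, $2n-1$, $4n-3$, $17$ implicit in the proposition's table.

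The remaining and decisive task is to check, type by type, that the zero-charge constant $c$ of Ref.~\cite{meng} equals $B=\frac{\delta(\rho-2)}{8}((\frac{3\rho}{2}-1)\delta-2)$ of Definition~\ref{D:main2}. For $\Gamma(n)$ this is the familiar statement that the MICZ term $\mu^2/(2r^2)$ vanishes with the charge, matching $B=0$; for the three hermitian families and the exceptional one I would substitute the explicit $B$-values tabulated after Theorem~\ref{Main1} and compare with the $1/r^2$ coefficients of the generalized MICZ problems at vanishing charge. This constant comparison is the main obstacle: in Ref.~\cite{meng} the constant $c$ is not presented in Jordan-algebraic terms but arises from the curvature of $M$ together with the lowest admissible angular eigenvalue, so the equivalence rests on translating that spectral datum into the rank-degree language and carrying out the five substitutions. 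Once the bundle is recognized as trivial and the two cone metrics are matched, however, the kinetic term $-\frac12\Delta$ and the attractive term $-1/r$ agree automatically, and the proposition follows from this final matching of constants.
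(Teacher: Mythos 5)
Your proposal assumes that, at zero magnetic charge, each Kepler-type problem of Ref.~\cite{meng} is already presented as a system on the metric cone $(\bb R_+\times M,\ dr^2+r^2ds^2_M)$ with a Hamiltonian $-\frac12\Delta-\frac{c}{2r^2}-\frac1r$, so that the whole proof reduces to an isometry of configuration spaces plus a comparison of the constants $c$ and $B$. That is not how those problems are set up, and this is where the argument has a genuine gap. For the hermitian families the $\mr O(1)$-, $\mr U(1)$- and $\mr{Sp}(1)$-Kepler problems live on a quotient of a punctured vector space with radial variable $z$, and the zero-charge Hamiltonian has the form $H=-\frac{1}{8z}\Delta\frac{1}{z}-\frac{1}{z^2}$: a conjugated Laplacian and a $1/z^2$ potential, not a $1/r$ potential and not a bare $1/r^2$ term whose coefficient you could read off and compare with $B$. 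Moreover the natural diffeomorphism onto the Kepler cone (e.g.\ $[Z]\mapsto nZZ'$ for ${\mathcal H}_n(\bb R)$) sends $z$ to $r=z^2$ and is only \emph{conformal}, $\pi^*(ds^2_{\ms P})=(2z)^2ds^2_{\widetilde{\bb RP^n}}$ --- the two configuration spaces are diffeomorphic but \emph{not} isometric, precisely because $ds^2_{\bb P}$ is four times the Fubini--Study metric while the quotient Euclidean metric gives the Fubini--Study metric once. So the first of your "two independent tasks" already fails as stated.

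The mechanism the paper actually uses is a unitary intertwiner rather than a point transformation: after pulling back along the radius-squaring diffeomorphism, one conjugates by multiplication by $(2z)^{n/2}$ (to match the measures $\mr{vol}_{\ms P}$ and $\mr{vol}_{\widetilde{\bb RP^n}}$) and verifies the operator identity $z^{-n/2}Hz^{n/2}=\hat h$ by direct computation in polar coordinates. The constant $B$ (e.g.\ $\tfrac{3(n-2)^2}{16}$ for ${\mathcal H}_n(\bb R)$) is not a pre-existing $1/r^2$ coefficient in Ref.~\cite{meng} waiting to be matched; it is \emph{produced} by this similarity transformation, from the cross terms $\bigl(\tfrac n4-\tfrac12\bigr)\bigl(\tfrac{3n}4-\tfrac32\bigr)/r^2$ that appear when the conjugated radial operator is expanded. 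Your "decisive task" of a five-fold substitution of constants therefore has nothing to compare against until this conjugation has been carried out, and the conjugation is the entire content of the proof for the non-Dirac, non-exceptional types. (For $\Gamma(n)$ and ${\mathcal H}_3(\bb O)$ your reading is essentially right, and the paper likewise dismisses those cases as clear.)
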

\begin{proof}
The proposition is clear for the MICZ-Kepler problems and
the exceptional Kepler problem. For the remaining cases, all one needs
is to make a transformation similar to the one appeared in the proof
of Proposition 2.2 of the first paper in Ref. \cite{meng}. For
example, for the $\mr O(1)$-Kepler problem in dimension $n$ with
zero magnetic charge, the configuration space is $\widetilde{\bb
RP^n}=\bb R^n_*/Z\sim -Z$ and the hamiltonian is
$$H=-{1\over 8z}\Delta_{\widetilde{\bb RP^n}}{1\over z}-{1\over z^2},$$ where
$\Delta_{\widetilde{\bb RP^n}}$ is the Laplace operator on
$\widetilde{\bb RP^n}$ and $z=|Z|$. Note that, with the quotient
metric induced from the euclidean metric of $\bb R^n$,
$\widetilde{\bb RP^n}$ is isometric to
$$\left(\bb R_+\times \bb RP^{n-1}, dz^2+z^2ds^2_{FS}\right),$$
where $ds^2_{FS}$ is the Fubini-Study metric on $\bb RP^{n-1}$.

To see the equivalence of the J-Kepler problem associated with $\mathcal H_n(\bb R)$ with the
$\mr O(1)$-Kepler problem in dimension $n$ with zero magnetic charge, we start with
diffeomorphism
\begin{eqnarray}
\widetilde{\bb RP^n} &\to & {\ms P} \cr [Z] &\mapsto & n ZZ'
\end{eqnarray}
or equivalently diffeomorphism $\pi$:
\begin{eqnarray}
\widetilde{\bb RP^n} &\to & \bb R_+\times \bb P\cr [Z] &\mapsto &
(z^2, \sqrt{2n}{ZZ'\over z^2}).
\end{eqnarray}
Here $Z$ is viewed as a column vector in $\bb R^n$ and $Z'$ is the transpose of $Z$.

Under $\pi$, we have
\begin{eqnarray}
\pi^*(dr^2+r^2\,ds^2_{\bb P}) = (2z)^2
(dz^2+z^2\,ds^2_{FS}),\nonumber
\end{eqnarray}
i.e.,
\begin{eqnarray}
\pi^*(ds^2_{{\ms P}})= (2z)^2ds^2_{\widetilde{\bb RP^n}},
\quad\mbox{so}\quad\pi^*(\mr{vol}_{\ms
P})=(2z)^n\mr{vol}_{\widetilde{\bb RP^n}}.
\end{eqnarray}

Let $\Psi_i$ ($i=1$ or $2$) be a wave-function for the J-Kepler problem associated with $\mathcal H_n(\bb R)$, and
$$\psi_i(z, \Theta):=(2z)^{n\over 2}\pi^*(\Psi_i)(z,
\Theta).$$ Then it is not hard to see that
$$\displaystyle \int_{\widetilde{\bb
RP^n}}\overline{\psi_1}\,\psi_2\,\mr{vol}_{\widetilde{\bb
RP^n}}=\displaystyle \int_{\widetilde{\bb
RP^n}}\overline{\pi^*(\Psi_1)}\pi^*(\Psi_2) \pi^*(\mr{vol}_{\ms
P})=\displaystyle \int_{{\ms P}}\overline{\Psi_1}
\Psi_2\mr{vol}_{{\ms P}}$$ and
\begin{eqnarray}\displaystyle
\displaystyle\int_{\widetilde{\bb RP^n}}\overline{\psi_1}H\psi_2
\mr{vol}_{\widetilde{\bb RP^n}} &=&
\displaystyle\int_{\widetilde{\bb
RP^n}}\overline{\pi^*(\Psi_1)}\,{1\over z^{n\over 2}}Hz^{n\over
2}\,\pi^*(\Psi_2) \,\pi^*(\mr{vol}_{{\ms P}})\cr&=&
\displaystyle\int_{{\ms P}}\overline{\Psi_1}\,{1\over z^{n\over
2}}Hz^{n\over 2}\,\Psi_2 \,\mr{vol}_{{\ms P}}. \nonumber
\end{eqnarray}
Since
\begin{eqnarray}
{1\over z^{n\over 2}}H z^{n\over 2} &=& -{1\over 8 z^{{n\over
2}+1}}\Delta_{\widetilde{\bb RP^n}} z^{{n\over 2}-1}-{1\over z^2}
\cr &=&-{1\over 8 z^{{n\over 2}+1}}\left({1\over
z^{n-1}}\partial_zz^{n-1}\partial_z+{1\over
z^2}\Delta_{FS}\right)z^{{n\over 2}-1}-{1\over z^2}\cr &=& -{1\over
8 z^{3n\over 2}}\partial_zz^{n-1}\partial_zz^{{n\over 2}-1}-{1\over
8z^4}\Delta_{FS}-{1\over z^2}\cr &=&  -{1\over 2 r^{{3n\over
4}-{1\over 2}}}\partial_r r^{n\over 2}\partial_r r^{{n\over
4}-{1\over 2}}-{1\over 2r^2}\Delta_{\bb P}-{1\over r}\cr &=&
-{1\over 2}\left( \partial_r^2+{n-1\over r}\partial_r+{({n\over
4}-{1\over 2})({3n\over 4}-{3\over 2})\over r^2}+{1\over
r^2}\Delta_{\bb P}\right)-{1\over r}\cr &=& -{1\over 2}
\Delta-{3(n-2)^2\over 32r^2}-{1\over r}=\hat h,\nonumber
\end{eqnarray}
we have the equivalence of the J-Kepler problem associated with $\mathcal H_n(\bb R)$
with the $\mr O(1)$-Kepler problem in dimension $n$ with zero
magnetic charge.

\end{proof}

\subsection {The Lenz Vector}
The Lenz vector exists for J-Kepler problems.
\begin{Definition}[Lenz vector]
The Lenz vector for the J-Kepler problem is
$$A_u:={1\over \langle e\mid x\rangle}\left[\tilde L_u, (\langle e\mid x\rangle)^2\hat h\right],$$ i.e.,
\begin{eqnarray}
\fbox{$A_u ={i\over 2}\tilde X_u-\langle u\mid x\rangle \hat h ={i\over 2}\left(\tilde X_u-{\langle u\mid x\rangle\over \langle e\mid x\rangle}\tilde X_e\right)+{\langle u\mid x\rangle\over \langle e\mid x\rangle}$.}
\end{eqnarray}
\end{Definition}

Note that $A_e=1$. One might call $\vec A:=(A_{e_1}, \ldots, A_{e_D})$ the Lenz vector, that is because $\vec A$ is precisely the usual Lenz vector when the Jordan algebra is the Minkowski space. 
\begin{Thm} For $u, v\in V$, we let
$L_{u,v}:=[\hat L_u, \hat L_v]$.  Then we have  commutation relations:
\begin{eqnarray}\label{CTR}
\fbox{$\begin{array}{lcl} [L_{u,v}, \hat h] & = & 0\cr
[L_{u,v},  L_{z, w}] & = &  L_{[L_u, L_v]z,w}+ L_{z, [L_u, L_v]w}\cr
[L_{u,v}, A_z] &= &
A_{[L_u, L_v]z}\cr
[A_u, \hat h] & = & 0\cr[A_u, A_v] & = &
-2\hat h L_{u,v}.
\end{array}$}
\end{eqnarray}
\end{Thm}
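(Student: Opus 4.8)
The plan is to transport the entire statement into the hidden-action operators $\tilde X_u$, $\tilde Y_v$, $\tilde S_{uv}$, $\tilde L_u$ and reduce it to the TKK relations of Theorem \ref{Main1} together with the single structural identity $\langle e\mid x\rangle\,\hat h=\frac{i}{2}\tilde X_e-1$ from the remark following Definition \ref{D:main2}. To keep the bookkeeping light I would write $P_u:=\frac{i}{2}\tilde X_u$, let $m_u$ denote multiplication by $\langle u\mid x\rangle$, and set $r:=m_e=\langle e\mid x\rangle$; since $\langle u\mid x\rangle=i\tilde Y_u$ we have $A_u=P_u-m_u\hat h$ and $r\hat h=P_e-1$. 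All five relations will then be formal consequences of Theorem \ref{Main1}, so no new analysis on $\ms P$ is needed beyond one symmetry identity for the weights $\lambda_u$.

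First I would locate $L_{u,v}$ inside the hidden action. Writing $D:=[L_u,L_v]$, which is a derivation with $De=0$, I claim $L_{u,v}=[\hat L_u,\hat L_v]=[\tilde L_u,\tilde L_v]=\tilde S_{uv}-\tilde L_{uv}$, i.e. $L_{u,v}$ is the image, under the homomorphism of Theorem \ref{Main1}, of the element $S_{uv}-L_{uv}=D\in\frk{der}\subset\frk{co}$. The only point needing checking is $[\hat L_u,\hat L_v]=[\tilde L_u,\tilde L_v]$, equivalently $\hat L_u(\lambda_v)=\hat L_v(\lambda_u)$; this follows from the explicit formula for $\lambda_u$ together with $\hat L_v(\langle u\mid x\rangle)=-\langle uv\mid x\rangle$ and $\hat L_v(r)=-\langle v\mid x\rangle$, since the resulting expression is manifestly symmetric in $u$ and $v$. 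Granting this, the homomorphism property applied to Eq. (\ref{derivation}) (and $De=0$) yields at once the covariance relations $[L_{u,v},\tilde X_z]=\tilde X_{Dz}$, $[L_{u,v},\tilde Y_z]=\tilde Y_{Dz}$, hence $[L_{u,v},m_z]=m_{Dz}$, $[L_{u,v},\tilde X_e]=0$ and $[L_{u,v},r]=0$.

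The first three commutation relations then drop out. For $[L_{u,v},\hat h]=0$ it suffices that $L_{u,v}$ commutes with both $r$ and $\tilde X_e$, hence with $\hat h=r^{-1}(\frac i2\tilde X_e-1)$. The relation $[L_{u,v},L_{z,w}]=L_{Dz,w}+L_{z,Dw}$ is the image under the homomorphism of the purely algebraic identity $[D,\,S_{zw}-L_{zw}]=(S_{(Dz)w}-L_{(Dz)w})+(S_{z(Dw)}-L_{z(Dw)})$ in $\frk{co}$, which itself follows from (\ref{strR}) and the derivation rule $D(zw)=(Dz)w+z(Dw)$. The relation $[L_{u,v},A_z]=A_{Dz}$ then follows from $[L_{u,v},\tilde X_z]=\tilde X_{Dz}$, $[L_{u,v},m_z]=m_{Dz}$ and $[L_{u,v},\hat h]=0$.

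For the last two relations I would first record three auxiliary commutators, all immediate from the TKK relations and $r\hat h=P_e-1$: from $[\tilde X_u,\tilde Y_v]=-2\tilde S_{uv}$ one gets $[P_u,m_v]=\tilde S_{uv}$ (so $[P_u,r]=\tilde L_u$); commuting $r\hat h=P_e-1$ with $m_v$ gives $[\hat h,m_v]=r^{-1}\tilde L_v$; and commuting $r\hat h$ with $P_u$, using $[P_u,P_e]=0$, gives $[P_u,\hat h]=-r^{-1}\tilde L_u\hat h$. Then $[A_u,\hat h]=[P_u,\hat h]-[m_u,\hat h]\hat h=-r^{-1}\tilde L_u\hat h+r^{-1}\tilde L_u\hat h=0$, which is the fourth relation of (\ref{CTR}). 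Finally, expanding $[A_u,A_v]=[\,P_u-m_u\hat h,\ P_v-m_v\hat h\,]$ and substituting the three auxiliary commutators, every term carrying a factor $r^{-1}\tilde L\,\hat h$ cancels in pairs, leaving $(\tilde S_{vu}-\tilde S_{uv})\hat h=-2[\tilde L_u,\tilde L_v]\hat h=-2L_{u,v}\hat h=-2\hat h L_{u,v}$, the last equality by the first relation of (\ref{CTR}); this is the fifth relation of (\ref{CTR}).

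The main obstacle is not conceptual but a matter of bookkeeping. The identification $L_{u,v}=[\tilde L_u,\tilde L_v]$ rests on the one genuinely new computation $\hat L_u(\lambda_v)=\hat L_v(\lambda_u)$, and the proof of the bracket $[A_u,A_v]$ requires tracking operator orderings carefully enough to see that all $\hat h$-dependent contributions cancel exactly, not merely up to lower-order terms. Once these two points are secured, the remaining steps are purely formal consequences of Theorem \ref{Main1}.
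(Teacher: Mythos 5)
Your proposal is correct and follows essentially the same route as the paper: both proofs reduce all five relations to formal consequences of the TKK relations in Theorem \ref{Main1} together with the identity $\langle e\mid x\rangle\,\hat h=\frac{i}{2}\tilde X_e-1$, differing only in bookkeeping (you precompute three auxiliary commutators and use the single form $A_u=\frac{i}{2}\tilde X_u-\langle u\mid x\rangle\hat h$ throughout, while the paper switches between the two expressions for $A_u$). A small bonus of your write-up is that you actually verify $L_{u,v}=[\tilde L_u,\tilde L_v]$ via the symmetry $\hat L_u(\lambda_v)=\hat L_v(\lambda_u)$, a fact the paper uses but relegates to an unproved remark.
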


\begin{rmk} 1) $L_{u,v}=[\tilde L_u, \tilde L_v]$.  2) $A_u$, $\hat h$ and  $iL_{u,v}$ are all hermitian operators with respect to inner product
$$
(\psi, \phi)\mapsto \int_{\ms P} \psi^* \phi\, \mr{vol}_{\ms P}.
$$
\end{rmk}

\begin{proof}
Theorem \ref{Main1} implies that
$$
[L_{u,v}, \tilde X_z]= \tilde X_{[L_u, L_v]z}, \quad [L_{u,v}, \tilde Y_z]= \tilde Y_{[L_u, L_v]z},\quad [L_{u,v}, \tilde L_z] = \tilde L_{[L_u, L_v]z}.
$$
In particular, we have $[L_{u,v}, \tilde X_e]=[L_{u,v}, \tilde Y_e]=0$. Therefore, $[L_{u, v}, H]=0$,
\begin{eqnarray}
[L_{u, v}, A_z] &=& {1\over \langle e\mid x\rangle}[L_{u,v}, [\tilde L_z, (\langle e\mid x\rangle)^2\hat h]]\cr
&=& {1\over \langle e\mid x\rangle}\left([(\langle e\mid x\rangle)^2\hat h, [\tilde L_z, L_{u,v}]]+[\tilde L_z, [L_{u,v}, (\langle e\mid x\rangle)^2\hat h] ]\right)\cr
&=& {1\over \langle e\mid x\rangle} [(\langle e\mid x\rangle)^2\hat h, -\tilde L_{[L_u, L_v]z}]\cr
&=& A_{[L_u, L_v]z},\nonumber
\end{eqnarray}
and
\begin{eqnarray}
[L_{u, v}, L_{z, w}] &=&  [L_{u,v}, [\tilde L_z, \tilde L_w]]\cr
&=& [[L_{u,v}, \tilde L_z], \tilde L_w]+ [\tilde L_z, [L_{u,v}, \tilde L_w]]\cr
&=& [ \tilde L_{[L_u, L_v]z}, \tilde L_w]+ [\tilde L_z, \tilde L_{[L_u, L_v]w}]]\cr
&=&L_{[L_u, L_v]z, w}+ L_{z,[L_u, L_v]w}.\nonumber
\end{eqnarray}
Since $\hat h={1\over \langle e\mid x\rangle}({i\over 2}\tilde X_e-1)$, we have
\begin{eqnarray}
[A_u, \hat h] &=&  [A_u,{1\over \langle e\mid x\rangle}]({i\over 2}\tilde X_e-1)+{1\over \langle e\mid x\rangle}[A_u, {i\over 2}\tilde X_e]\cr
&=&  -{1\over \langle e\mid x\rangle}[A_u,\langle e\mid x\rangle]{1\over \langle e\mid x\rangle} ({i\over 2}\tilde X_e-1)+{1\over \langle e\mid x\rangle}[A_u, {i\over 2}\tilde X_e]\cr
&=&  -{1\over \langle e\mid x\rangle}[A_u,\langle e\mid x\rangle]\hat h-{1\over \langle e\mid x\rangle}\left[{\langle u\mid x\rangle\over \langle e\mid x\rangle}, {i\over 2}\tilde X_e\right]\langle e\mid x\rangle \hat h\cr
&=&  -{1\over \langle e\mid x\rangle}[A_u,\langle e\mid x\rangle]\hat h\cr
&&-{1\over \langle e\mid x\rangle}\left( [\langle u\mid x\rangle, {i\over 2}\tilde X_e]-{\langle u\mid x\rangle\over \langle e\mid x\rangle}[\langle e\mid x\rangle, {i\over 2}\tilde X_e]\right) \hat h\cr
&=&  -{1\over \langle e\mid x\rangle}[A_u,\langle e\mid x\rangle]\hat h
+{1\over \langle e\mid x\rangle}\left(\tilde L_u-{\langle u\mid x\rangle\over \langle e\mid x\rangle}\tilde L_e\right) \hat h\cr
&=&0.\nonumber
\end{eqnarray}
Since $A_u= {i\over 2}\tilde X_u-\langle u\mid x\rangle \hat h$, we have
\begin{eqnarray}
[A_u, A_v] &=& [{i\over 2}\tilde X_u, -\langle v\mid x\rangle \hat h ]-[{i\over 2}\tilde X_v, -\langle u\mid x\rangle \hat h ]+[\langle u\mid x\rangle \hat h, \langle v\mid x\rangle \hat h]\cr
&=& [{i\over 2}\tilde X_u, -\langle v\mid x\rangle \hat h ]-\langle v\mid x\rangle [\hat h, \langle u\mid x\rangle] \hat h-<u\leftrightarrow v>\cr
&=& [{i\over 2}\tilde X_u, -\langle v\mid x\rangle] \hat h -\langle v\mid x\rangle[{i\over 2}\tilde X_u,  \hat h] -\langle v\mid x\rangle [\hat h, \langle u\mid x\rangle] \hat h\cr && -<u\leftrightarrow v>\cr
&=& -\tilde S_{uv} \hat h +{\langle v\mid x\rangle\over \langle e\mid x\rangle}[{i\over 2}\tilde X_u,  \langle e\mid x\rangle] H-{\langle v\mid x\rangle\over \langle e\mid x\rangle} [{i\over 2}\tilde X_e, \langle u\mid x\rangle] \hat h\cr && -<u\leftrightarrow v>\cr
&=& -\tilde S_{uv} \hat h +{\langle v\mid x\rangle\over \langle e\mid x\rangle}\tilde L_u \hat h-{\langle v\mid x\rangle\over \langle e\mid x\rangle}\tilde L_u \hat h -<u\leftrightarrow v>\cr
&=& -2L_{u,v} \hat h =-2\hat h L_{u,v}.\nonumber
\end{eqnarray}

\end{proof}

\section{Symmetry Analysis of the J-Kepler Problems}\label{S:Summary}
The goal of this section is to give a detailed dynamical symmetry
analysis for the J-Kepler problem, as a byproduct, we solve the
bound state problem for the J-Kepler problem algebraically.

Unless said otherwise, throughout this section we assume that  $V$ is a simple
euclidean Jordan algebra with rank $\rho\ge 2$ and degree $\delta$. For
simplicity, for each $x\in {\ms P}$, we shall rewrite $\langle e\mid
x\rangle$ as $r$.

\subsection{Harmonic Analysis on Projective Spaces}
Let us begin with the harmonic analysis on projective
space $\bb P$. Since $\bb P$ is a real affine variety inside $V$,
its coordinate ring ${\bb R}[{\bb P}]$ is a quotient of the ring of
real polynomial functions on $V$. Recall that we use $\Delta_{\bb
P}$ to denote the Laplace operator on $\bb P$.
\begin{Lem}\label{Lemma8}
Let $V_m$ be the set of regular (i.e., polynomial) functions (on $\bb P$) of degree at
most $m$, ${\mathcal V}_m$ be the orthogonal complement of $V_{m-1}$
in $V_m$. For each $u\in V$, we let $m_u$:
${\bb R}[{\bb P}]\to {\bb R}[{\bb P}]$ denote the multiplication by
$\langle u\mid x\rangle$.

i) For any integer $k\ge 0$, there is a $u\in V$ perpendicular to $e$ such that
$$
\tilde m_u: \;\mathcal V_k\buildrel m_u\over \to V_{k+1}\buildrel \pi\over \to \mathcal V_{k+1}
$$
is nonzero. Here, $\pi$ is the orthogonal projection.

ii) ${\mathcal V}_m^{\bb C}:={\mathcal V}_m\otimes_{\bb R} \bb C$ is the $m$-th eigenspace of $\Delta_{\bb P}$
with eigenvalue $-m(m+{\rho \delta\over 2}-1)$, and the Hilbert space of
square integrable complex-valued functions on $\bb P$ admits an
orthogonal decomposition into the eigenspaces of $\Delta_{\bb P}$:
\begin{eqnarray}\label{decom}
L^2({\bb P})=\hat \bigoplus_{m\ge 0} {\mathcal V}_m^{\bb C}.
\end{eqnarray}

\end{Lem}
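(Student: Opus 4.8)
The plan is to read this as the spherical--harmonic decomposition for the rank-one compact symmetric space $\bb P$, with the Jordan-algebraic input entering only through the commutator identity (\ref{identity'}). Since $\bb P$ is compact and $\Delta_{\bb P}$ is elliptic, non-positive and formally self-adjoint, its spectrum is discrete, its eigenspaces are finite-dimensional and consist of smooth functions, and $L^2(\bb P)$ is the Hilbert direct sum of these eigenspaces. By Stone--Weierstrass the regular functions ${\bb R}[{\bb P}]=\bigcup_m V_m=\bigoplus_m{\mathcal V}_m$ are dense in $C(\bb P)$, hence in $L^2(\bb P)$. Thus everything reduces to showing that each ${\mathcal V}_m$ is a single eigenspace of $\Delta_{\bb P}$, with eigenvalue $-m(m+\tfrac{\rho\delta}{2}-1)$, and that these eigenvalues are distinct. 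I would prove i) first, since it supplies the ladder needed in ii).

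For part i): because every coordinate function is some $\langle u\mid x\rangle$, one has $V_{m+1}=V_m+\sum_u\langle u\mid x\rangle V_m$; projecting onto ${\mathcal V}_{m+1}$ and using that $\langle u\mid x\rangle V_{m-1}$ and $V_m$ land in $V_m\perp{\mathcal V}_{m+1}$, this collapses to ${\mathcal V}_{m+1}=\sum_{u}\tilde m_u({\mathcal V}_m)$. Since $\langle e\mid x\rangle=r$ is constant on $\bb P$, the term $u=e$ contributes nothing, so in fact ${\mathcal V}_{m+1}=\sum_{u\perp e}\tilde m_u({\mathcal V}_m)$. As $\bb P$ is a positive-dimensional variety, ${\bb R}[{\bb P}]$ is infinite-dimensional, so the filtration never stabilizes and ${\mathcal V}_{m+1}\neq0$ for every $m$; hence some $\tilde m_u$ with $u\perp e$ is nonzero. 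This proves i) and, simultaneously, shows that ${\mathcal V}_{m+1}$ is \emph{generated} from ${\mathcal V}_m$ by the raising operators $\tilde m_u$.

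For part ii) I would argue by induction on $m$ that ${\mathcal V}_m$ is a single eigenspace with eigenvalue $\mu_m:=-m(m+\tfrac{\rho\delta}{2}-1)$. First, (\ref{identity'}) shows by induction that $\Delta_{\bb P}$ preserves the filtration $\{V_m\}$, and then self-adjointness forces it to preserve each ${\mathcal V}_m=V_m\ominus V_{m-1}$ and to commute with the orthogonal projection onto it. The base case ${\mathcal V}_0=\bb C$ is clear. For the inductive step, represent $f\in{\mathcal V}_m$ by its unique homogeneous degree-$m$ representative $\tilde f$ on the cone $\ms P$; since $\hat L_e$ is the negative Euler field one has $\hat L_e\tilde f=-m\tilde f$, whence $\tilde L_e\tilde f=(-m-\tfrac{(\rho-1)\delta}{2})\tilde f$ because $\lambda_e=\tfrac{(\rho-1)\delta}{2}$. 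Applying (\ref{identity'}) to $\langle u\mid x\rangle\tilde f$ for $u\perp e$ gives
$$
\Delta_{\bb P}(\langle u\mid x\rangle\tilde f)=\mu_m\,\langle u\mid x\rangle\tilde f-2r\,\tilde L_u\tilde f+2\langle u\mid x\rangle\,\tilde L_e\tilde f .
$$
Projecting onto ${\mathcal V}_{m+1}$ kills the $\hat L_u$-part of $\tilde L_u$ (it preserves degree) but, crucially, \emph{not} the $\lambda_u=\tfrac{(\rho-2)\delta}{4}\,\langle u\mid x\rangle/r$ part: on $\bb P$ the factor $1/r$ is constant, so $\lambda_u$ raises the filtration degree by one and feeds into the top component. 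Collecting the ${\mathcal V}_{m+1}$-components gives $\Delta_{\bb P}(\tilde m_u f)=\big(\mu_m-2m-\tfrac{\rho\delta}{2}\big)\tilde m_u f$, where the constant $\tfrac{\rho\delta}{2}$ arises as $2\lambda_e-\tfrac{(\rho-2)\delta}{2}=(\rho-1)\delta-\tfrac{(\rho-2)\delta}{2}$. By part i) the vectors $\tilde m_u f$ span ${\mathcal V}_{m+1}$ and at least one is nonzero, each being a $\mu_{m+1}$-eigenvector, so ${\mathcal V}_{m+1}$ is a single eigenspace with eigenvalue $\mu_{m+1}=\mu_m-2m-\tfrac{\rho\delta}{2}$. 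Solving this recursion with $\mu_0=0$ yields $\mu_m=-m(m+\tfrac{\rho\delta}{2}-1)$.

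Finally, since $\mu_m$ is strictly decreasing in $m$ the eigenvalues are distinct, so the ${\mathcal V}_m^{\bb C}$ are automatically mutually orthogonal (consistent with their definition) and no two share an eigenvalue; density of $\bigoplus_m{\mathcal V}_m^{\bb C}$ then forces the spectrum to be exactly $\{\mu_m\}$ and each ${\mathcal V}_m^{\bb C}$ to be the \emph{full} $\mu_m$-eigenspace, giving (\ref{decom}). I expect the main obstacle to be precisely the bookkeeping flagged above: one must keep two gradings apart --- homogeneous degree on the cone $\ms P$ (measured by $\hat L_e$) versus the polynomial-filtration degree on $\bb P$ --- and recognize that, because $r$ is constant on $\bb P$, the operator $\lambda_u$ acts there as a degree-\emph{raising} operator. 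Overlooking this contribution produces the wrong constant $(\rho-1)\delta$ in place of $\tfrac{\rho\delta}{2}$; getting it right, uniformly across all five families, is the crux of the eigenvalue computation.
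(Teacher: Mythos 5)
Your proof is correct and follows essentially the same route as the paper: part i) is the same spanning/non-stabilization argument, and part ii) rests on the same commutator identity $[\Delta_{\bb P},\langle u\mid x\rangle]=-2r\tilde L_u+2\langle u\mid x\rangle\tilde L_e$ (Eq. (\ref{identity'})), the same filtration-preservation plus self-adjointness reasoning, and the same density step, the only organizational difference being that you run a one-step recursion $\mathcal{V}_m\to\mathcal{V}_{m+1}$ where the paper expands $\Delta_{\bb P}(x_{u_1}\cdots x_{u_m})$ all at once by the Leibniz rule. Your bookkeeping of the $\lambda_u$-contribution (with $1/r$ constant on $\bb P$, so that $\lambda_u$ acts as a degree-raising multiplication) reproduces exactly the terms the paper tracks modulo $V_{m-1}$, yielding the same eigenvalue $-m(m+\tfrac{\rho\delta}{2}-1)$.
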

\begin{proof} It is clear that
$$V_m=\bigoplus_{k=0}^m{\mathcal V}_k, \quad {\bb R}[{\bb P}]=\bigoplus_{k=0}^\infty{\mathcal
V}_k.$$

i) Let $\{e\}\cup\{e_i\mid 1\le i\le \dim V-1\}$ be an orthonomal
basis for $V$ and $x_i=\langle
e_i\mid x\rangle$. Since $m_u$ maps $V_k$ to $V_{k+1}$ for each
integer $k\ge 0$, we have the resulting map
$$
\overline{m_u}:\; V_k/V_{k-1}\to  V_{k+1}/V_{k}.
$$
For any $\bar f\in V_{k+1}/V_{k}$, if we write $\bar f=\sum_{i>0}
\overline{x_ig_i}$, we have $\bar f=\sum_{i>0}
\overline{m_{e_i}}(\overline{g_i})$. In view of the commutative
diagram
$$\begin{CD}
{\mathcal V}_k @>\tilde m_u>> {\mathcal V}_{k+1}\\
@VV \cong V @VV\cong  V\\
V_k/V_{k-1} @>\overline{m_u}>> V_{k+1}/V_{k}\; ,
\end{CD}
$$ we have
\begin{eqnarray}\label{nonzero}
{\mathcal V}_{k+1}=\sum_{i>0} \tilde m_{e_i}({\mathcal V}_{k})
\end{eqnarray} for any $k\ge 0$.
Suppose that $\tilde m_{e_i}$: ${\mathcal V}_m \to {\mathcal
V}_{m+1}$ is zero for any $i>0$, then ${\mathcal V}_{m+1}=0$ per Eq. (\ref{nonzero}), so ${\mathcal V}_n=0$ for any $n\ge m+1$ per Eq. (\ref{nonzero}), then $${\bb R}[{\bb P}]=\lim_{k\to\infty} V_k=V_m,$$ a contradiction.

ii) Let $u_1$, \ldots, $u_m$ be in $V$, and write $x_{u_i}$ for
$\langle u_i\mid x\rangle$, $u_i^0$ for $\langle e\mid u_i\rangle$. Since
\begin{eqnarray}
[\Delta_{\bb P}, x_{u_1}\cdots x_{u_m}] &=& \sum_{i=1}^m
x_{u_1}\cdots x_{u_{i-1}}[\Delta_{\bb P},x_{u_i}]x_{u_{i+1}}\cdots
x_{u_m}, \nonumber
\end{eqnarray}
in view of Eq. (\ref{identity}) and part ii) of Lemma \ref{lemma}, we have
\begin{eqnarray}
\Delta_{\bb P}\left( x_{u_1}\cdots x_{u_m}\right) &=& \sum_{i=1}^m
x_{u_1}\cdots x_{u_{i-1}}[\Delta_{\bb P},x_{u_i}](x_{u_{i+1}}\cdots
x_{u_m})\cr &=& \sum_{i=1}^m x_{u_1}\cdots x_{u_{i-1}}(-2r\tilde
L_{u_i}+2x_{u_i}\tilde L_e)(x_{u_{i+1}}\cdots x_{u_m})\cr &=&
-m{\rho \delta\over 2}x_{u_1}\cdots x_{u_m}+{\rho \delta\over
2}ru_i^0 \sum_{i=1}^m x_{u_1}\cdots \hat x_{u_i}\cdots x_{u_m}\cr
&&+\sum_{i=1}^m x_{u_1}\cdots x_{u_{i-1}}(-2r\hat
L_{u_i}+2x_{u_i}\hat L_e)(x_{u_{i+1}}\cdots x_{u_m})\cr &=& -(m{\rho
\delta\over 2}+m(m-1))x_{u_1}\cdots x_{u_m}+{\rho \delta\over
2}r\sum_{i=1}^m u_i^0 x_{u_1}\cdots \hat x_{u_i}\cdots x_{u_m}\cr
&&-2r\sum_{i=1}^m x_{u_1}\cdots
x_{u_{i-1}}\hat L_{u_i}(x_{u_{i+1}}\cdots x_{u_m})\cr &\equiv& -m(m+{\rho
\delta\over 2}-1)x_{u_1}\cdots x_{u_m}\quad (\mod V_{m-1})\nonumber
\end{eqnarray} because $r=\sqrt{2/\rho}$ on $\bb P$.

It is then clear that $\Delta_{\bb P}$ maps $V_m$ into $V_m$ for
each $m\ge 0$ and resulting map
$$\overline{\Delta_{\bb P}}:\; V_m/V_{m-1}\to V_m/V_{m-1}$$
is the scalar multiplication by $-m(m+{\rho \delta\over 2}-1)$. Since
$\Delta_{\bb P}$ is a hermitian operator and maps $V_{m-1}$ into $V_{m-1}$,
$\Delta_{\bb P}$ maps ${\mathcal V}_m$ into ${\mathcal V}_m$, so we have
commutative diagram
$$\begin{CD}
{\mathcal V}_m @>\Delta_{\bb P}>> {\mathcal V}_m\\
@VV\cong V @VV\cong V\\
V_m/V_{m-1} @>\overline{\Delta_{\bb P}}>> V_m/V_{m-1}\; .
\end{CD}
$$
Since ${\mathcal V}_m\neq\{0\}$ per part i), we conclude that
${\mathcal V}_m$ is an eigenspace of $\Delta_{\bb P}$ with
eigenvalue $-m(m+{\rho \delta\over 2}-1)$.

Since the ring of regular functions is dense in the ring of real
continuous functions and
$$
{\bb R}[{\bb P}]=\bigoplus_{k=0}^\infty{\mathcal
V}_k,
$$
we have
$$
L^2({\bb P})=\hat \bigoplus_{m\ge 0} {\mathcal V}_m^{\bb C}.
$$

\end{proof}

\subsection{Associated Lagueree Polynomials} Here, we give a quick review of the associated Lagueree
polynomials. Let $\alpha$ be a real number and $n\ge 0$ be an
integer. By definition, the associated Lagueree polynomial
$L^\alpha_n(x)$ is the polynomial solution of equation
\begin{eqnarray}\label{LaguereeEq}
xy''+(\alpha+1-x)y'+ny=0
\end{eqnarray} whose the leading
coefficient is $(-1)^n{1\over n!}$. We note that $L^\alpha_n(x)$ has
degree $n$, for example,
\begin{eqnarray}
L_0^\alpha(x) &=& 1\cr L_1^\alpha(x) &=& -x+\alpha+1\cr
L_2^\alpha(x)&=&{1\over 2}x^2-(\alpha+2)x+{1\over
2}(\alpha+1)(\alpha+2)\cr
 & \vdots&\nonumber
\end{eqnarray}
In general, we have
\begin{eqnarray}
L_n^\alpha(x)={x^{-\alpha}e^x\over n!}{d^n\over
dx^n}\left(e^{-x}x^{n+\alpha}\right).
\end{eqnarray}
It is a fact that $L^\alpha_n(x)$'s form an orthogonal basis for
$L^2({\bb R}_+, x^\alpha e^{-x}\,dx)$:
\begin{eqnarray}
\displaystyle \int _0^\infty x^\alpha e^{-x}\,
L_n^\alpha(x)L_m^\alpha(x)\, dx={\Gamma(n+\alpha+1)\over
n!}\delta_{mn};
\end{eqnarray}moreover, a degree $n$ polynomial in $x$ can be
uniquely written as a linear combination of $L_k^\alpha(x)$ with
$0\le k\le n$.

 It is then clear that,
\begin{eqnarray}\label{Leguree}
\fbox{$\begin{array}{c} \mbox{\em for any integer $l\ge 0$,
$x^{l-{(\rho/2-1)\delta\over 2}}e^{-x}L^{2l+{\rho \delta\over 2}-1}_n(2x)$'s
form}\\ \text{\em an orthogonal basis for $L^2({\bb R}_+,
x^{(\rho-1)\delta-1}\,dx)$}.\end{array}$}
\end{eqnarray}
It is also a fact that
\begin{eqnarray}\label{recursive1}
nL_n^\alpha(x)=(n+\alpha)L_{n-1}^\alpha(x)-xL_{n-1}^{\alpha+1}(x)
\end{eqnarray}
and \begin{eqnarray}\label{recursive2} L_{n+1}^\alpha(x)={1\over
n+1}\left((2n+1+\alpha-x)L_n^\alpha(x)-(n+\alpha)L_{n-1}^\alpha(x)\right).
\end{eqnarray}

\subsection{Hidden Harmonic Analysis on Kepler Cones}
For each integer $l\ge 0$, we fix an orthonormal spanning set
$\{Y_{lm}\mid m\in{\mathcal I}(l)\}$ for ${\mathcal V}_l$. Note that
each $Y_{lm}$ can be represented by a homogeneous degree
$l$-polynomial in $x$, which will be denoted by $Y_{lm}(x)$. For integer $k\ge 1$, we introduce
\begin{eqnarray}\label{wavefunction}
\varphi_{klm}(x): = r^{-{(\rho/2-1)\delta\over 2}}L_{k-1}^{2l+{\rho
\delta\over 2}-1}(2r) e^{-r} Y_{l m}(x)
\end{eqnarray} where $r=\langle e\mid x\rangle$. One can verify that
$\varphi_{klm}$ is square integrable with respect to ${1\over r}\mr{vol}_{\ms P}$:
\begin{eqnarray} \left({2\over
\rho}\right)^l\displaystyle\int_{\ms P}
|\varphi_{klm}|^2\,{1\over r}\mr{vol}_{\ms P}&=& \int_{\bb P}|Y_{lm}|^2\,\mr{vol}_{\bb P}\;\cdot \cr && \cdot
\int_0^\infty r^{2l-(\rho/2-1)\delta}\cdot(L_{k-1}^{2l+{\rho \delta\over
2}-1}(2r))^2\cdot r^{(\rho-1)\delta-1}e^{-2r}\, dr\cr
&=&\int_0^\infty r^{2l+{\rho \delta\over 2}-1}(L_{k-1}^{2l+{\rho
\delta\over 2}-1}(2r))^2e^{-2r}\, dr\cr &= & {\Gamma(2l+{\rho \delta\over
2}-1+k)\over 2^{2l+{\rho \delta/2}}(k-1)!}<\infty\nonumber
\end{eqnarray} because $\delta\ge 1$ and $\rho\ge 1$.

Let $H_0:=-{i\over 2}(X_e+Y_e)$, then
$$
\tilde H_0={1\over 2}\left({\hat L_e^2-(2\lambda_e-1)\hat
L_e+\Delta_{\bb P}+B\over r}-r\right).
$$

We say that a smooth nonzero function $\varphi$ on the Kepler cone
is an {\bf eigenfunction} of $\tilde H_0$ if it is square integrable
with respect to ${1\over r}\mr{vol}_{\ms P}$ and satisfies equation
$$
\tilde H_0\varphi=\lambda \varphi
$$ for some real number $\lambda$. With the help of part ii) of Lemma \ref{Lemma8} and Eq. (\ref{LaguereeEq}), one can check that
$$\tilde H_0\varphi_{klm} = -(l+k-1+{\rho \delta\over 4})\varphi_{klm},$$
so {\em $\varphi_{klm}$ is an eigenfunction of $\tilde H_0$ with eigenvalue
$-(l+k-1+{\rho \delta\over 4})$}.

Let ${\ms V}_l(k):=\mr{span}_{\bb C}\{\varphi_{klm}\mid m\in {\mathcal I}(l)\}$ for each integer $l\ge 0$ and
$k\ge 1$, and
$$
\tilde {\ms H}_I:=\bigoplus_{l=0}^{I}{\ms V}_l(I+1-l)
$$for each integer $I\ge 0$. Finally, we let $\tilde {\mathcal H}:=\bigoplus_{I=0}^\infty \tilde {\ms H}_I$ and
$\pi(\mathcal O):=\tilde{\mathcal O}$ for any $\mathcal O$ in the
conformal algebra. In view of statement
(\ref{Leguree}), it is clear that $\varphi_{klm}$'s form an orthogonal basis for $\tilde {\mathcal H}$.
\begin{Prop}

i) $\tilde {\ms H}_I$ is the eigenspace of $\tilde H_0$ with eigenvalue
$-(I+\rho \delta/4)$ and
\begin{eqnarray} L^2({\ms P}, {1\over r}\mr{vol}_{\ms P})=
\hat \bigoplus_{I=0}^\infty \tilde {\ms H}_I.\nonumber
\end{eqnarray}
Moreover, $\varphi_{klm}$'s form an orthogonal basis for $L^2({\ms
P}, {1\over r}\mr{vol}_{\ms P})$.

ii) $(\pi,\tilde {\mathcal H})$ is a unitary representation of the conformal
algebra.

iii)  $(\pi|_{\bar{\frk u}}, \tilde {\ms H}_I)$ is an irreducible
representation of $\bar{\frk u}$. Consequently $(\pi|_{{\frk u}},
\tilde {\ms H}_I)$ is an irreducible representation of ${\frk u}$.

iv) $(\pi,\tilde {\mathcal H})$ is a unitary lowest weight representation of the
conformal algebra with lowest weight equal to ${\delta \over
2}\lambda_0$. Here $\lambda_0$ is the fundamental weight conjugate
to the unique non-compact simple root $\alpha_0$ in Lemma
\ref{LemmaVogan}, i.e., $\lambda_0(H_{\alpha_i})=0$ for $i>0$ and
$$
2{\lambda_0(H_{\alpha_0})\over \alpha_0(H_{\alpha_0})}=1.
$$

\end{Prop}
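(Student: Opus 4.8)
The plan is to handle the four parts in order, reducing everything via the cone isometry $\iota\colon{\ms P}\to{\bb R}_+\times{\bb P}$ of Section \ref{S:KM} to the harmonic analysis on ${\bb P}$ (Lemma \ref{Lemma8}) together with the radial Laguerre basis (\ref{Leguree}). For part i), the cone structure gives ${1\over r}\mr{vol}_{\ms P}=r^{(\rho-1)\delta-1}\,dr\wedge\mr{vol}_{\bb P}$, hence $L^2({\ms P},{1\over r}\mr{vol}_{\ms P})\cong L^2({\bb R}_+,r^{(\rho-1)\delta-1}dr)\otimes L^2({\bb P})$; writing the degree-$l$ polynomial $Y_{lm}(x)$ as a constant times $r^lY_{lm}(\Theta)$, the radial factor of $\varphi_{klm}$ is exactly $r^{l-(\rho/2-1)\delta/2}e^{-r}L^{2l+\rho\delta/2-1}_{k-1}(2r)$. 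For each fixed $l$ these form an orthogonal basis of $L^2({\bb R}_+,r^{(\rho-1)\delta-1}dr)$ by (\ref{Leguree}) and the $Y_{lm}$ form one of $L^2({\bb P})$ by Lemma \ref{Lemma8}(ii), so the $\varphi_{klm}$ form an orthogonal basis of $L^2({\ms P},{1\over r}\mr{vol}_{\ms P})$; grouping by $I=l+k-1$ and recalling $\tilde H_0\varphi_{klm}=-(l+k-1+\rho\delta/4)\varphi_{klm}$ identifies $\tilde{\ms H}_I$ with the eigenspace of eigenvalue $-(I+\rho\delta/4)$ and gives the stated decomposition.

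For part ii), that $\pi$ is a representation of $\frk{co}$ is Theorem \ref{Main1}, and unitarity follows once every generator acts skew-hermitianly for $(\,,\,)$. Indeed $\tilde L_u$ is skew-hermitian by Lemma \ref{KeyLemma}(iii) and $\tilde Y_v=-i\langle v\mid x\rangle$ is skew-hermitian because $\langle v\mid x\rangle$ is real; by Lemma \ref{lemma}(ii) one has $X=-\langle e\mid x\rangle\Delta_{\ms P}-B/\langle e\mid x\rangle$, which is hermitian for the weight $1/r$ (both summands are), so $\tilde X_u=-i[\tilde L_u,X]$ and $\tilde S_{uv}=[\tilde L_u,\tilde L_v]+\tilde L_{uv}$ are skew-hermitian too. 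Invariance of $\tilde{\mathcal H}$ then follows by passing to the basis $\frk u^{\bb C}\cup\{E^{\pm}_u\}$ of $\frk{co}^{\bb C}$: by (\ref{vogan:key}) each such element shifts the $\tilde H_0$-eigenvalue by $0$ or $\pm1$, and, being a differential operator of order $\le2$ that preserves the Gaussian decay, it sends an $L^2$ eigenvector of $\tilde H_0$ to an $L^2$ eigenvector, which lies in some (finite-dimensional) $\tilde{\ms H}_I$ by part i).

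Part iii) is the crux. Since $\frk u$ centralizes $H_0$, $\bar{\frk u}$ preserves $\tilde{\ms H}_I$, and under $\frk{der}\subset\bar{\frk u}$ one has the multiplicity-free decomposition $\tilde{\ms H}_I=\bigoplus_{l=0}^I{\ms V}_l(I+1-l)$ with each summand isomorphic to the Laplace eigenspace ${\mathcal V}_l$, an irreducible $\frk{der}$-module because ${\bb P}$ is a rank-one symmetric space. Thus any $\bar{\frk u}$-submodule $W$ is a partial sum $\bigoplus_{l\in S}{\ms V}_l(I+1-l)$, and, each level being $\frk{der}$-irreducible, it suffices to show $l\in S\Rightarrow l\pm1\in S$. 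The operators $\pi(h_w)=i(\tilde X_w+\tilde Y_w)$ with $w\perp e$ lie in $\bar{\frk u}^{\bb C}$ by (\ref{id: str=k}) and are $\frk{der}$-equivariant by (\ref{derivation}); I expect the main difficulty to be that their projection to a neighbouring level is nonzero. The ${\mathcal V}_{l+1}$-component comes only from the degree-raising part $\tilde Y_w=-i\langle w\mid x\rangle$ and is nonzero by Lemma \ref{Lemma8}(i), its radial factor being forced to the correct Laguerre function since the image stays in $\tilde{\ms H}_I$, while the ${\mathcal V}_{l-1}$-component is nonzero by hermiticity of $\pi(h_w)$; making this airtight fuses Lemma \ref{Lemma8}(i) with the recursions (\ref{recursive1})--(\ref{recursive2}). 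Granting it, $S$ is closed under $l\mapsto l\pm1$, so $S=\{0,\dots,I\}$ and $\tilde{\ms H}_I$ is $\bar{\frk u}$-irreducible, a fortiori $\frk u$-irreducible.

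For part iv), the lowest weight vector is $\varphi_{100}$, spanning the line $\tilde{\ms H}_0$. By (\ref{vogan:key}) the negative non-compact root vectors raise the $\tilde H_0$-eigenvalue, so they annihilate $\varphi_{100}$ because $-\rho\delta/4$ is the top of the spectrum (part i), while each compact $E_{-\alpha_i}$ annihilates it since the semisimple $\bar{\frk u}$ acts trivially on the line $\tilde{\ms H}_0$. To compute the weight I use $H_{\alpha_0}=h_{e_{11}}$ (Lemma \ref{LemmaVogan}): as $e_{11}-e/\rho\perp e$, the element $h_{e_{11}}-{1\over\rho}h_e\in\bar{\frk u}^{\bb C}$ kills $\varphi_{100}$, so $\pi(h_{e_{11}})\varphi_{100}={1\over\rho}\pi(h_e)\varphi_{100}=-{2\over\rho}\tilde H_0\varphi_{100}={\delta\over2}\varphi_{100}$, giving $\Lambda(H_{\alpha_0})=\delta/2$ and $\Lambda(H_{\alpha_i})=0$ for $i\ge1$, i.e. $\Lambda={\delta\over2}\lambda_0$. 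Finally $\varphi_{100}$ generates $\tilde{\mathcal H}$: the radial triple $\{h_e,E^{\pm}_e\}$ of (\ref{vogan:key}) raises $k$, hence $I$, nontrivially by (\ref{recursive2}), climbing the $I$-ladder, and part iii) fills in each $\tilde{\ms H}_I$; together with the unitarity of part ii) this exhibits $(\pi,\tilde{\mathcal H})$ as the unitary lowest weight module of lowest weight ${\delta\over2}\lambda_0$.
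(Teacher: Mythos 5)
Your overall architecture coincides with the paper's: part i) via the tensor decomposition $L^2({\ms P},\tfrac1r\mr{vol}_{\ms P})\cong L^2(\bb R_+, r^{(\rho-1)\delta-1}dr)\otimes L^2(\bb P)$ plus Lemma \ref{Lemma8} and (\ref{Leguree}); part iii) via "adjacent spherical levels are coupled"; part iv) via triviality of the semisimple $\bar{\frk u}$ on the line $\tilde{\ms H}_0$ and the identity $H_{\alpha_0}\equiv-\tfrac2\rho H_0 \ (\mathrm{mod}\ \bar{\frk u})$. Your spectral argument that $\tilde E_{-\alpha_0}$ kills $\varphi_{100}$ (eigenvalue would leave the spectrum) is a nice replacement for the paper's direct computation. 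But two of your steps are asserted where the paper computes, and one of them is the crux you yourself flag.

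The serious gap is in part iii). Your claim that the ${\mathcal V}_{l+1}$-component of $\pi(h_w)\psi_l$ "comes only from the degree-raising part $\tilde Y_w$" is not an operator identity: $\tilde X_w=-i[\tilde L_w,X]$ also shifts the spherical degree by one, since $\tilde L_w$ does. What is true — and what the paper uses — is the commutator identity $\tilde X_u+\tilde Y_u=[\tilde L_u,\tilde X_e-\tilde Y_e]=2i[\tilde L_u,\tilde H_0]+2\tilde Y_u$, so that the \emph{matrix element} between two vectors in the same $\tilde H_0$-eigenspace reduces to that of $2\tilde Y_u$; and even then the nonvanishing is not automatic: Lemma \ref{Lemma8}(i) only controls the angular factor, and one must still check that the radial overlap $\int_0^\infty x^{\alpha+2}e^{-x}L_k^{\alpha}(x)L_{k-1}^{\alpha+2}(x)\,dx$ (with $\alpha=2l+\rho\delta/2-1$) is nonzero, which the paper does via (\ref{recursive1}), obtaining $-2\Gamma(k+\alpha+2)/(k-1)!$. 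Without both of these ingredients the irreducibility argument is incomplete. A second, smaller gap is in part ii): "preserves the Gaussian decay, hence sends $L^2$ eigenvectors to $L^2$ eigenvectors" ignores the behaviour at the tip of the cone, where $X=-r\Delta_{\ms P}-B/r$ is singular and the $\varphi_{klm}$ already carry the negative power $r^{-(\rho/2-1)\delta/2}$; square-integrability of the image near $r=0$ is exactly what needs checking. The paper avoids this by verifying membership in $\tilde{\mathcal H}$ directly: $\tilde Y_e$ via (\ref{recursive2}), $\tilde X_e=2i\tilde H_0-\tilde Y_e$, and $\tilde L_u$ by the explicit degree count $\hat L_u(\varphi_{klm})\in\bigoplus_{k'\le k+l+1,\,l'\le l+1}{\ms V}_{l'}(k')$.
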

\begin{proof}
i) By virtue of Theorem II.10 of Ref. \cite{Reed&Simon}, we have
$$L^2({\ms P}, {1\over r}\mr{vol}_{\ms P})=L^2(\bb R_+,
r^{(\rho-1)\delta-1}\,dr)\otimes L^2({\bb P}).$$  Then
\begin{eqnarray}
L^2({\ms P}, {1\over r}\mr{vol}_{\ms P}) & = & \hat
\bigoplus_{l=0}^\infty \left(L^2(\bb R_+,
r^{(\rho-1)\delta-1}\,dr)\otimes {\mathcal V}_l^{\bb C}\right)\quad \text{Eq.
(\ref{decom})}\cr &=& \hat \bigoplus_{l=0}^\infty \hat
\bigoplus_{k=1}^\infty \bigoplus_{m\in {\mathcal I}(l)}
\mr{span}_{\bb C}\{\varphi_{klm}\}\quad\text{statement
(\ref{Leguree})}\cr &=& \hat \bigoplus_{I=0}^\infty \tilde {\ms
H}_I.\nonumber
\end{eqnarray}
Therefore, we conclude that $\{\varphi_{klm}\}$ is an orthogonal
basis for $L^2({\ms P}, {1\over r}\mr{vol}_{\ms P})$ and $\tilde {\ms H_I}$
is the $I$-th eigenspace of $\tilde H_0$ with eigenvalue $-(I+\rho \delta/4)$.

ii) First, we need to show that $\tilde{\mathcal O}(\psi)\in \tilde {\mathcal
H}$ for any $\mathcal O\in \frk{co}$ and any $\psi\in \tilde {\mathcal H}$.
Without loss of generality we may assume that $\mathcal O$ is $L_u$,
$X_e$ or $Y_e$ and
$$\psi=\varphi_{klm}=r^{-{(\rho/2-1)\delta\over 2}}L_{k}^{2l+{\rho \delta\over
2}-1}(2r) e^{-r} Y_{l m}(x).$$ Using Eq. (\ref{recursive2}) one can
see that $\tilde Y_e(\varphi_{klm})\in\tilde {\mathcal H}$. Then
\begin{eqnarray}
{\tilde X_e}(\varphi_{klm}) &= &(2\sqrt{-1}\tilde H_0-\tilde
Y_e)(\varphi_{klm})\cr
&= & 2\sqrt{-1}(k+l-1+{\rho \delta\over 2})\varphi_{klm}-\tilde
Y_e(\varphi_{klm})\cr
&\in &\tilde {\mathcal H}.\nonumber
\end{eqnarray}
Since
\begin{eqnarray}
\hat L_u(\varphi_{klm}) &= &\hat L_u(r^{-{(\rho/2-1)\delta\over
2}}L_{k}^{2l+{\rho \delta\over 2}-1}(2r) e^{-r}) Y_{l m}(x)\cr &&
+r^{-{(\rho/2-1)\delta\over 2}}L_{k}^{2l+{\rho \delta\over 2}-1}(2r) e^{-r}
\hat L_u(Y_{l m}(x)), \nonumber
\end{eqnarray}
one can see that $\hat L_u(\varphi_{klm})\in \bigoplus_{k'\le k+l+1, l'\le
l+1}{\ms V}_{l'}(k')$. It is also clear that $$\lambda_u\cdot \varphi_{klm}\in
\bigoplus_{k'\le k+l+1, l'\le l+1}{\ms V}_{l'}(k'),$$
so $\tilde L_u(\varphi_{klm})\in \tilde {\mathcal H}$.

Next, we verify that
\begin{eqnarray}
(\varphi_{klm},\,\tilde{\mathcal O}(\varphi_{k'l'm'}))+(\tilde
{\mathcal O}(\varphi_{klm}),\,\varphi_{k'l'm'})=0
\end{eqnarray}for $\mathcal O\in \frk{co}$. We may assume that $\mathcal O$ is $L_u$, $X_e$ or
$Y_e$. It is clearly OK when $\mathcal O=Y_e$ because $\tilde
Y_e=-ir$. Since $\tilde X_e=2i \tilde H_0-\tilde Y_e$, to show that $\tilde X_e$ is anti-hermitian, it suffices to verify that $(\varphi_{klm},\,\tilde
H_0(\varphi_{k'l'm'}))-(\tilde
H_0(\varphi_{klm}),\,\varphi_{k'l'm'})=0$ or
$(k'+l'-k-l)(\varphi_{klm},\,\varphi_{k'l'm'})=0$, which is
obviously true.

To verify that $(\varphi_{klm},\,\tilde
L_u(\varphi_{k'l'm'}))+(\tilde
L_u(\varphi_{klm}),\,\varphi_{k'l'm'})=0$, in view of part iii) of
Lemma \ref{KeyLemma}, we know that $(\varphi_{klm},\,\tilde
L_u(\varphi_{k'l'm'}))+(\tilde
L_u(\varphi_{klm}),\,\varphi_{k'l'm'})$ is equal to
$$ \displaystyle\int_{{\ms P}}{\ms L}_u(\overline{\varphi_{klm}}\,\varphi_{k'l'm'}\,
{1\over r}\mr{vol}_{{\ms P}})=\displaystyle\int_{\ms P}d \iota_{\hat
L_u}(\overline{\varphi_{klm}}\,\varphi_{k'l'm'}\, {1\over
r}\mr{vol}_{\ms P}) =0.$$ That is because $\iota_{\hat
L_u}(\overline{\varphi_{klm}}\,\varphi_{k'l'm'}\, {1\over
r}\mr{vol}_{\ms P})$ approaches to zero exponentially fast as $r\to
\infty$ and approaches to zero as $r\to 0$, uniformly with respect
to the angle directions.

iii) First, we verify that $\tilde {\ms H}_I$ is invariant under the
action of $\bar{\frk u}$. To see this, we note that $\tilde {\ms H}_I$ is an
eigenspace of $\tilde H_0$, moreover, as operators on Hilbert space
$L^2({\ms P}, {1\over r}\mr{vol}_{\ms P})$, $\tilde H_0$ commutes with $\tilde {\mathcal O}$
for any $\mathcal O\in \bar{\frk u}$.

Since $\tilde {\ms H}_I=\bigoplus_{l=0}^I{\ms V}_l(I+1-l)$, if the action
of $\bar{\frk u}$ on $\tilde {\ms H}_I$ were not irreducible, there would be
an integer $l$ with $0\le l< I$ such that $(\psi_l, \tilde {\mathcal
O}(\psi_{l+1}))=0$ for any $\psi_l\in {\ms V}_l(I+1-l)$,
$\psi_{l+1}\in {\ms V}_{l+1}(I-l)$, and any $\mathcal O \in \bar{\frk
k}$.

In view of part i) of Lemma \ref{Lemma8}, we can choose a $u\in V$
with $u\perp e$ such that $\tilde m_u$: ${\mathcal V}_l\to {\mathcal
V}_{l+1}$ is nontrivial; so there is a $Y_{l m}\in {\mathcal V}_l$ and a
$Y_{(l+1) m'}\in {\mathcal V}_{l+1}$ such that
\begin{eqnarray}\label{notzero}
\int_{\bb P}Y_{(l+1)m'}\cdot  \tilde m_u(Y_{lm})\, \mr{vol}_{\bb P}\neq 0.
\end{eqnarray}
Let
\begin{eqnarray}
\psi_l(x)&=& r^{-{(\rho/2-1)\delta\over 2}}L_{k}^{2l+{\rho \delta\over
2}-1}(2r) e^{-r} Y_{l m}(x),\cr \psi_{l+1}(x)&=&
r^{-{(\rho/2-1)\delta\over 2}}L_{k-1}^{2l+{\rho \delta\over 2}+1}(2r) e^{-r}
Y_{(l+1) m'}(x),\cr
{\mathcal O} &=& X_u+Y_u.\nonumber
\end{eqnarray}

Then $\mathcal O\in \bar{\frk u}$ because $u\perp e$. Since
$\tilde {\mathcal O}=[\tilde L_u, \tilde X_e-\tilde Y_e]=[2i\tilde
L_u, \tilde H_0]+2\tilde Y_u$, we have $(\psi_l, \tilde {\mathcal
O}(\psi_{l+1}))=(\psi_l, 2\tilde Y_u \cdot \psi_{l+1})$; so, in
view of Eq. (\ref{notzero}), $(\psi_l, \tilde {\mathcal
O}(\psi_{l+1}))=0$ would imply that
\begin{eqnarray}
\int_0^\infty x^{\alpha+2} e^{-x}\, L_k^{\alpha}(x)
L_{k-1}^{\alpha+2}(x)\, dx =0. \nonumber\end{eqnarray} where $\alpha=2l+\rho
\delta/2 -1$. But that is a contradiction: using Eq. (\ref{recursive1}), one can show that
\begin{eqnarray}
\int_0^\infty x^{\alpha+2} e^{-x}\, L_k^{\alpha}(x)
L_{k-1}^{\alpha+2}(x)\, dx = -2{\Gamma(k+\alpha+2)\over (k-1)!}\neq
0.\nonumber
\end{eqnarray}

iv) Let us take the simple root system $\alpha_0$, \ldots, $\alpha_r$
specified in Lemma \ref{LemmaVogan} and $$\psi_0(x) = r^{-{(\rho/2-1)\delta\over 2}}e^{-r}.$$

Since $\tilde {\ms H}_0$ ($=\mr{span}_{\bb C}\{\psi_0\}$) is one dimensional
and $\bar{ \frk u}$ is semi-simple, the action of  $\bar{ \frk u}^{\bb C}$ on $\tilde {\ms H}_0$ must be
trivial. Therefore, for $i\ge 1$,  in view of the fact that $E_{\pm\alpha_i}, H_{\alpha_i}\in
\bar{ \frk u}^{\bb C}$, we have
\begin{eqnarray}\label{lw1}
\tilde E_{-\alpha_i}\psi_0=0, \quad \tilde H_{\alpha_i} \psi_0=0.
\end{eqnarray}
On the other hand, since $E_{-\alpha_0}={i\over
2}(X_{e_{11}}-Y_{e_{11}})+L_{e_{11}}$ and
$H_{\alpha_0}=i(X_{e_{11}}+Y_{e_{11}})\equiv -{2\over \rho}H_0(\mod \bar {\frk u})$, by a computation, we have
\begin{eqnarray}\label{lw2}
\tilde E_{-\alpha_0}\psi_0=0, \quad \tilde H_{\alpha_0}\psi_0={\delta\over 2}\psi_0.
\end{eqnarray}
Therefore,  in view of the fact that $\alpha_0(H_{\alpha_0})=2$, {\em $\psi_0$ is a lowest weight state with weight ${\delta\over 2}\lambda_0$}.

Since operator $\tilde Y_v$ is the multiplication by $-i\langle v\mid x\rangle$, we have
$$
r^{-{(\rho/2-1)\delta\over 2}}e^{-r}\sum_{i_1, \ldots, i_n} \alpha_{i_1\cdots i_n} x_1^{i_1}\cdots x_n^{i_n}=\left(\sum_{i_1, \ldots, i_n} \alpha_{i_1\cdots i_n} (i\tilde Y_{e_1})^{i_1}\cdots(i\tilde Y_{e_n})^{i_n}\right) \psi_0,
$$
so the representation $(\pi, \tilde {\ms H})$ is generated from $\psi_0$. Since this representation is unitary, it must be irreducible. 

In summary, $(\pi, \tilde {\ms H})$ is a unitary lowest weight representation with lowest weight ${\delta\over 2}\lambda_0$.

\end{proof}

\begin{rmk}
Let $\mathcal P$ be the space of regular functions on $\ms P$, i.e.,
$$
{\mathcal P}=\{p:{\ms P}\to {\bb C}\mid \mbox{$p$ is a polynomial on $V$} \}.
$$
In view of Eqn. (\ref{wavefunction}), 
$$
{\tilde D}:=e^{-r} r^{-{(\rho/2-1)d\over 2}}{\mathcal P}
$$
can be taken as a dense common domain of definition for $\tilde {\mathcal O}$, $\mathcal O\in \frk{co}$.
\end{rmk}

The following main theorem is an easy corollary of the above proposition.

\begin{Thm}\label{main2}
Let $\mr{Co}$ be the conformal group of the Jordan algebra with rank at least two, $\mr K$
be the closed Lie subgroup of $\mr{Co}$ whose Lie algebra is $\frk u$,
$\lambda_0$ be the fundamental weight conjugate to the unique non-compact simple root $\alpha_0$ in Lemma
\ref{LemmaVogan}, $\tilde H_0:=-{i\over 2}(\tilde X_e+\tilde Y_e)$, $\tilde {\ms H}_I$ be
the $I$-th eigenspace of $\tilde  H_0$, and $\tilde {\mathcal
H}:=\bigoplus _{I=0}^\infty\tilde {\ms H}_I$.

1) The hidden action $\pi$ in Theorem \ref{Main1} turns $\tilde{\mathcal H}$
into a unitary lowest weight $(\frk{co}, \mr{K})$-module with
lowest weight ${\delta\over 2}\lambda_0$. Here the action is unitary
with respect to inner product
$$
(\psi_1, \psi_2)=\displaystyle\int_{\ms
P}\overline\psi_1\,\psi_2\,{1\over r}\mr{vol}_{\ms P}\; .
$$

2) The unitary lowest weight representation of $\mr{Co}$, whose underlying $(\frk{co}, \mr{K})$-module is
the $(\frk{co}, \mr{K})$-module in part 1), can be realized by $L^2({\ms P}, {1\over r}\mr{vol}_{\ms P})$.

3) Decomposition $\tilde {\mathcal H}=\bigoplus _{I=0}^\infty\tilde {\ms H}_I$ is
a multiplicity free $K$-type formula.
\end{Thm}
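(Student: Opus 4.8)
The plan is to read off all three assertions from the preceding Proposition, the only genuinely global ingredient being the exponentiation of the Lie-algebra representation to a representation of the group $\mr{Co}$. Throughout I write $\mr K=\tilde{\mr U}$ for the closed subgroup of $\mr{Co}$ with Lie algebra $\frk u$, and recall from Section \ref{S:Vogan} that $\mr K/\mr Z$ is compact, that $\frk u$ is reductive with one-dimensional center spanned by $X_e+Y_e$, and that $\tilde H_0$ is, up to a scalar, the action of this central element. Since $\mr{Co}$ is by definition the simply connected group with Lie algebra $\frk{co}$, this is the target of the integration.

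For part 1) I would first note that $(\pi,\tilde{\mathcal H})$ is a unitary representation of $\frk{co}$ by part ii) of the Proposition. Each space $\tilde{\ms H}_I=\bigoplus_{l=0}^I{\ms V}_l(I+1-l)$ is finite dimensional, is the eigenspace of the central operator $\tilde H_0$ with eigenvalue $-(I+\rho\delta/4)$ by part i), and is an irreducible $\frk u$-module by part iii). Since $\mr K/\mr Z$ is compact and the center of $\frk u$ acts on $\tilde{\ms H}_I$ by the single scalar $-(I+\rho\delta/4)$, this finite-dimensional irreducible $\frk u$-action integrates to an action of $\mr K$; assembling these over $I$ equips $\tilde{\mathcal H}$ with the structure of a $(\frk{co},\mr K)$-module whose restricted $\frk u$-action is the differential of the $\mr K$-action and agrees with $\pi$. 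By part iv) the module is generated by the lowest-weight vector $\psi_0$ of weight ${\delta\over 2}\lambda_0$, so it is a lowest weight $(\frk{co},\mr K)$-module, and unitarity is again part ii).

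For part 2) I would invoke the standard integrability theorem for unitary lowest weight modules of a Hermitian-type Lie algebra: since $\frk{co}$ is of Hermitian type (the center of $\frk u$ being one-dimensional) and $(\pi,\tilde{\mathcal H})$ is a unitary lowest weight $(\frk{co},\mr K)$-module, $\tilde{\mathcal H}$ is the space of $\mr K$-finite vectors of a unitary lowest weight representation of the simply connected group $\mr{Co}$. Because part i) of the Proposition identifies the Hilbert completion of $\tilde{\mathcal H}$ with $L^2({\ms P},{1\over r}\mr{vol}_{\ms P})$, this representation is realized on $L^2({\ms P},{1\over r}\mr{vol}_{\ms P})$. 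This exponentiation is the step I expect to be the main obstacle, since the generators $\tilde{\mathcal O}$ are unbounded differential operators on $\ms P$; it is where one uses that $\tilde D=e^{-r}r^{-(\rho/2-1)\delta/2}\mathcal P$ is a common invariant dense domain on which a Nelson-type analytic-vector argument guarantees integrability.

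For part 3) each occurring $K$-type is one of the $\tilde{\ms H}_I$, irreducible as a $\mr K$-module by part iii). Distinct $\tilde{\ms H}_I$ are inequivalent because the central operator $\tilde H_0$ acts on them by the distinct scalars $-(I+\rho\delta/4)$. Hence every $K$-type occurs exactly once, so $\tilde{\mathcal H}=\bigoplus_{I\ge 0}\tilde{\ms H}_I$ is a multiplicity-free $K$-type decomposition.
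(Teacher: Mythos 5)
Your proposal is correct and follows essentially the same route as the paper: the paper gives no separate argument, simply declaring Theorem \ref{main2} ``an easy corollary of the above proposition,'' and your write-up is exactly the spelled-out version of that deduction (parts i)--iv) of the Proposition for statement 1), Harish-Chandra/Nelson integrability on the dense domain $\tilde D$ for statement 2), and the distinct central characters of the $\tilde{\ms H}_I$ for statement 3)). The only value you add beyond the paper is making the exponentiation step in part 2) explicit, which the paper leaves implicit.
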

Note that, the unitary lowest weight representation of $\mr{Co}$ appeared in this theorem is the minimal representation of $\mr{Co}$ in the sense of A. Joseph \cite{Joseph1974}, and has the smallest positive Gelfand-Kirillov dimension. This theorem has a more general version which takes care of
all unitary lowest weight representations of the smallest positive Gelfand-Kirillov dimension. Since it is a refinement of part (ii) of Theorem XIII.3.4 from Ref. \cite{FK91} for the case $\nu={\delta\over 2}$ there, this theorem can be conceivably generalized to cover the case for a generic $\nu$ there.

\subsection{Solution of the J-Kepler Problems}
For a J-Kepler problem, we are primarily interested in solving
the bound state problem here, i.e., the following (energy) spectrum problem:
\begin{eqnarray}\label{eigen}
\left\{\begin{array}{rcl}
\hat h\psi & = & E\psi\\
\\
\displaystyle\int_{{\ms P}} |\psi|^2\, \mr{vol}_{\ms P}&< & \infty, \quad \psi\not\equiv 0.
\end{array}\right.
\end{eqnarray}
It turns out that $E$ has to take ceratin discrete values. For
example, for the original Kepler problem, we have
$$
E=-{1\over 2n^2}, \quad n=1, 2, \ldots
$$
The {\bf Hilbert space of bound states}, denoted by $\ms H$, is
defined to be the completion of the linear span of all
eigenfunctions of $\hat h$.

\begin{Thm}\label{main3} Let $V$ be a simple euclidean Jordan
algebra with rank $\rho\ge 2$ and degree $\delta$ and $\mr{Co}$ be the conformal group of $V$. For the J-Kepler problem associated to $V$, the following statements are true:

1) The bound state energy spectrum is
$$
E_I=-{1/2\over (I+{\rho \delta\over 4})^2}
$$ where $I=0$, $1$, $2$, \ldots

2) There is a unitary action of $\mr{Co}$ on the Hilbert space of bound states, ${\ms
H}$. In fact, ${\ms H}$ provides a realization for
the minimal representation of the conformal
group $\mr{Co}$.

3) The orthogonal decomposition of $\ms H$ into the energy
eigenspaces is just the multiplicity free $K$-type formula for the
minimal representation.

\end{Thm}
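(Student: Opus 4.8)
My plan is to derive Theorem \ref{main3} from the unitary lowest weight structure of Theorem \ref{main2} by recognizing the bound--state problem for $\hat h$ as a dilated copy of the (already diagonalized) eigenvalue problem for $\tilde H_0$. The engine is the distinguished $\frk{sl}(2,\bb R)$--triple inside $\frk{co}$ spanned by $X_e$, $Y_e$, $L_e$: from the TKK relations (\ref{CONFA}) together with $S_{ee}=L_e$ one computes $[L_e,X_e]=X_e$, $[L_e,Y_e]=-Y_e$ and $[X_e,Y_e]=-2L_e$, and by Theorem \ref{Main1} the realized operators $\tilde X_e,\tilde Y_e,\tilde L_e$ satisfy the same brackets. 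I will also use the two identities recorded earlier, $\hat h=\frac1r(\frac i2\tilde X_e-1)$ and $\tilde Y_e=-ir$ with $r=\langle e\mid x\rangle$, and the definition $\tilde H_0=-\frac i2(\tilde X_e+\tilde Y_e)$.

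First I would prove the spectral statement (1). Positing a bound--state energy $E=-\tfrac12\kappa^2$ with $\kappa>0$, I multiply $\hat h\psi=E\psi$ by $r$ and substitute $r=i\tilde Y_e$ to obtain the equivalent equation $\frac i2(\tilde X_e+\kappa^2\tilde Y_e)\psi=\psi$. Because $\mr{ad}(\tilde L_e)$ acts with eigenvalue $+1$ on $\tilde X_e$ and $-1$ on $\tilde Y_e$, conjugation by the dilation $e^{t\tilde L_e}$ sends $\tilde X_e+\kappa^2\tilde Y_e$ to $e^{t}\tilde X_e+\kappa^2e^{-t}\tilde Y_e$; taking $e^{t}=\kappa$ and writing $\phi=e^{t\tilde L_e}\psi$ turns the equation into $-\kappa\tilde H_0\phi=\phi$, i.e. $\tilde H_0\phi=-\kappa^{-1}\phi$. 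By part i) of the Proposition preceding Theorem \ref{main2}, the spectrum of $\tilde H_0$ is exactly $\{-(I+\rho\delta/4)\mid I\ge 0\}$, so a normalizable solution exists precisely when $\kappa^{-1}=I+\rho\delta/4$, giving $E=E_I=-\tfrac{1/2}{(I+\rho\delta/4)^2}$.

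For (2) and (3) I would argue by matching orthogonal bases. The same dilation exhibits the $E_I$--bound states explicitly: they are obtained from the $\tilde H_0$--eigenfunctions $\varphi_{klm}$ (with $k+l-1=I$) by the substitution $r\mapsto\kappa_I r$, $\kappa_I=(I+\rho\delta/4)^{-1}$. Since $\hat h$ is self--adjoint on $L^2(\ms P,\mr{vol}_{\ms P})$, these bound states form an orthogonal basis $\{\psi_{klm}\}$ of $\ms H$, with $\ms H_{E_I}$ spanned by those having $k+l-1=I$; likewise $\{\varphi_{klm}\}$ is an orthogonal basis of $\tilde{\mathcal H}=L^2(\ms P,\frac1r\mr{vol}_{\ms P})$ with $\tilde{\ms H}_I$ spanned by $k+l-1=I$. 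The two families carry the same index set and the same energy/$K$--type grading, so normalizing and matching $\psi_{klm}\leftrightarrow\varphi_{klm}$ defines a unitary $W:\ms H\to\tilde{\mathcal H}$ with $W(\ms H_{E_I})=\tilde{\ms H}_I$. Transporting the minimal representation $\pi$ of Theorem \ref{main2} through $W$ makes $\ms H$ a realization of the minimal representation in which $\ms H=\bigoplus_I\ms H_{E_I}$ becomes exactly the multiplicity--free $K$--type decomposition. (The fixed--energy degeneracy is also visible intrinsically from the Lenz--vector relations (\ref{CTR}): on $\ms H_{E_I}$ one has $[A_u,A_v]=-2E_I\,L_{u,v}$ with $E_I<0$, so $A_u/\sqrt{-2E_I}$ and $L_{u,v}$ close into the compact form $\bar{\frk u}$ and, by part iii) of the Proposition, act irreducibly on each level.)

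The genuine difficulty is analytic rather than algebraic. The dilation argument produces candidate eigenvalues and candidate eigenfunctions, but I must still verify that these are \emph{all} the bound states, i.e. that $\{\psi_{klm}\}$ is complete in $\ms H$ and that no normalizable eigenfunctions occur at other energies. This amounts either to the completeness of the associated Laguerre system on $L^2(\bb R_+,r^{(\rho-1)\delta-1}\,dr)$ in statement (\ref{Leguree}), or to promoting the formal bijection $\psi\mapsto e^{t\tilde L_e}\psi$ to an honest correspondence of Hilbert--space eigenspaces, for which one checks that every bound state lies in the domain where $\tilde L_e$ is skew--hermitian (Lemma \ref{KeyLemma} iii)) and is square--integrable against both $\mr{vol}_{\ms P}$ and $\frac1r\mr{vol}_{\ms P}$ at $r\to 0$ and $r\to\infty$. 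Once completeness is secured the unitary $W$ is unambiguous and the representation--theoretic conclusions are immediate from Theorem \ref{main2}.
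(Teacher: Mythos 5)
Your proposal is correct and takes essentially the same route as the paper: your conjugation by $e^{t\tilde L_e}$ is exactly the paper's substitution $\psi(x):=\tilde\psi(x/n_I)$ (packaged there as the isometry $\tau(\psi)(x)=n_I^{{(\rho-1)\delta\over 2}+1}\psi(n_Ix)$), which reduces the $\hat h$-eigenvalue problem to the already-diagonalized $\tilde H_0$-problem and transports the minimal representation from $L^2({\ms P},{1\over r}\mr{vol}_{\ms P})$ to $\ms H$. The completeness/normalizability point you rightly flag is disposed of in the paper by the definition of $\ms H$ as the closed span of eigenfunctions, the two-way eigenfunction correspondence under the dilation, and a citation of Dunkl's Laguerre orthogonality result for the isometry property of $\tau$.
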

\begin{proof}
We start with the eigenvalue problem for $\tilde  H_0$:
\begin{eqnarray}
\tilde H_0\tilde \psi=-n_I \tilde \psi
\end{eqnarray} where $n_I=(I+\rho \delta/4)$ and $\tilde\psi$ is square integrable
with respect to measure ${1\over r}\mr{vol}_{\ms P}$ and
$\tilde\psi\not\equiv 0$. The above equation can be recast as
$$
-{1\over 2}\left(\Delta+{B\over r^2}+{2n_I\over
r}\right)\tilde\psi(x) =-{1\over 2}\tilde \psi(x).
$$
Let $\psi(x):=\tilde \psi({x\over n_I})$, then the preceding
equation becomes
$$
\left(-{1\over 2}\Delta-{B\over 2r^2}-{1\over r}
\right)\psi(x)=-{1/2\over n_I^2}\psi(x),$$
i.e.,
\begin{eqnarray}
\hat h \psi =-{1/2\over n_I^2}\psi.
\end{eqnarray}
One can check that $\psi$ is square integrable
with respect to measure $\mr{vol}_{\ms P}$. Therefore, $\tilde\psi$ is an eigenfunction of $\tilde H_0$
$\Rightarrow$ $\psi$ is an eigenfunction of $\hat h$. By turning the above arguments backward, one can show that
the converse of this statement is also true. Therefore,

\begin{eqnarray}
\fbox{$\tilde\psi$ is an eigenfunction of $\tilde H_0$
$\Leftrightarrow$ $\psi$ is an eigenfunction of $\hat h$.}
\end{eqnarray}

Introduce $${\ms H}_I:=\{\psi\mid \tilde \psi\in \tilde{\ms H}_I\}, \quad {\mathcal H}:=\bigoplus_{i=0}^\infty {\ms H}_I,$$
and denote by $\tau$: ${\mathcal H}\to \tilde {\mathcal H}$ the linear map such that
$$\fbox{$\tau(\psi)(x)=n_I^{{(\rho-1)\delta\over 2}+1}\psi(n_Ix)$}$$
for $\psi\in {\ms H}_I$. By virtue of Theorem 2 in Ref. \cite{CD2003}, one can show that $\tau$ is an isometry. Here, the inner product on $\mathcal H$ is
the usual one: for $\psi$, $\phi$ in $\mathcal H$, we have
$$
\langle \psi, \phi\rangle=\int_{\ms P}\overline{\psi}\,\phi\, \mr{vol}_{\ms P}.
$$

Since $\tilde{\mathcal H}$ is a unitary lowest weight Harish-Chandra module, and $\tau$ is an isometry,
${\mathcal H}$ becomes a unitary lowest weight Harish-Chandra module. Since the completion of $\mathcal H$ is the
Hilbert space of bound states, the rest is clear from Theorem \ref{main2}.
\end{proof}

\appendix
\section{Proof of $O(1)=0$ and ${\ms O}(1)=0$}\label{S:Boring}
We shall write
$$
X=-{1\over r}(R+\Delta+B)
$$ where $R=\hat L_e^2-((\rho-1)\delta -1)\hat L_e$ and $\Delta=A\sum_{\alpha,\beta}[\hat L_{e_\alpha}, \hat L_{e_\beta}]^2$. The following facts shall be used often in the
computations:
\begin{eqnarray}
[\Delta, \langle u\mid x\rangle] & = & -2r\tilde L_u+2\langle
u\mid x\rangle\tilde L_e, \cr \Delta (\langle u\mid x\rangle) & = &
-{\rho \delta\over 2}(\langle u\mid x\rangle-\langle u\mid e\rangle r),\cr R(f(x)) &= &
(k^2+k-(\rho-1)\delta k)f(x)\quad \mbox{if $f(x)$ has homogeneous degree
$-k$.}\nonumber
\end{eqnarray}
Since
\begin{eqnarray}
[\tilde L_u, X]=-{\langle u\mid x\rangle \over
r^2}(R+\Delta+B)-{1\over r}[\hat L_u, \Delta]-{1\over
r}[\Delta, \lambda_u],
\end{eqnarray}
we have
\begin{eqnarray}
[\tilde L_u, X](1)&=& -B\langle u\mid x\rangle/r^2-\Delta
(\lambda_u)/r\cr &=& -B\langle u\mid
x\rangle/r^2+{(\rho/2-1)\delta\over 2 r^2}[-\Delta, \langle u\mid
x\rangle]\cr &=& \left(-B+{\rho(\rho-2)\delta^2\over 8}\right){\langle
x\mid u\rangle \over r^2}-{\rho(\rho-2)\delta^2\over 8}{\langle u\mid e\rangle\over
r}.
\end{eqnarray}

We may further assume that $\langle u\mid e\rangle=\langle v\mid e\rangle=0$ in the computations below.

{\bf Proof of $O(1)=0$}. Since
\begin{eqnarray}
[[\tilde L_u, X], \lambda_v](1) &=& [[\tilde L_u, \lambda_v],
X](1)+[[\lambda_v, X], \tilde L_u](1)\cr &=& [\hat L_u(\lambda_v),
X](1)+[{1\over r}[\Delta, \lambda_v], \tilde L_u](1)\cr &=&
{1\over r}[\Delta,\hat L_u (\lambda_v)](1)+(\rho/2-1)\delta[-{1\over
r}\tilde L_v+{\langle x\mid v\rangle\over r}\tilde L_e, \tilde
L_u](1)\cr &=& {1\over r}\Delta(\hat L_u (\lambda_v))\cr
&&+(\rho/2-1)\delta\left([{-1\over r}, \tilde L_u](-\lambda_v)+[\tilde
L_u, {\langle x\mid v\rangle\over r^2}](\lambda_e)\right)\cr &=
&{(\rho -2)\delta\over 4r}\Delta\left(-{\langle x\mid uv\rangle \over
(r)}+{\langle x\mid u\rangle \langle x\mid v\rangle\over
r^2}\right)\cr &&-{(\rho-2)(\rho-1)\delta^2\over 4}{\langle x\mid
uv\rangle\over r^2}\cr &&+(\rho/2-1)\delta(3\rho/4-1/2)\delta{\langle
x\mid u\rangle\langle x\mid v\rangle\over r^2}\cr &=& {(\rho
-2)\delta\over 4r}\left({\rho \delta\over 2}{\langle x\mid uv\rangle
-r\langle u\mid v\rangle\over r}+{1\over r^2}\Delta(\langle
x\mid u\rangle \langle x\mid v\rangle)\right)\cr &&
-{(\rho-2)(\rho-1)\delta^2\over 4}{\langle x\mid uv\rangle\over
r^2}\cr&& +(\rho/2-1)\delta(3\rho/4-1/2)\delta{\langle x\mid
u\rangle\langle x\mid v\rangle\over r^2}\cr &=& {(\rho -2)\delta\over
4r^3}\left(({\rho \delta\over 2}+2)r\langle x\mid uv\rangle-(\rho
d+2)\langle x\mid u\rangle \langle x\mid v\rangle\right)\cr &&
-{\rho(\rho-2)\delta^2\over 8}{\langle u\mid v\rangle\over
r}-{(\rho-2)(\rho-1)\delta^2\over 4}{\langle x\mid uv\rangle\over
r^2}\cr && +(\rho/2-1)\delta(3\rho/4-1/2)d{\langle x\mid
u\rangle\langle x\mid v\rangle\over r^2}\cr &=& {(\rho-2)\delta\over
4}\left(({\rho\over 2}-1)\delta-2\right) \left(-{\langle x\mid
uv\rangle\over r^2} +{\langle x\mid u\rangle\langle x\mid
v\rangle\over r^2}\right)\cr &&-{\rho(\rho-2)\delta^2\over 8}{\langle
u\mid v\rangle\over r},\nonumber
\end{eqnarray}
and
$$
\hat L_v([\tilde L_u, X](1)) =  \left(-B+{\rho(\rho-2)\delta^2\over
8}\right)\cdot \left[-{\langle x\mid uv\rangle \over
r^2}+2{\langle x\mid u\rangle \langle x\mid v\rangle\over
r^3}\right],
$$
we have
\begin{eqnarray}
[\tilde L_v, [\tilde L_u, X]](1) &=& \tilde L_v ([\tilde L_u,
X](1))+[\tilde L_u, X](\lambda_v)\cr &=& \hat L_v ([\tilde L_u,
X](1))+[[\tilde L_u, X], \lambda_v](1)\cr &=&
\left((B-{\rho(\rho-2)\delta^2\over 8})-{(\rho-2)\delta\over
4}\left(({\rho\over 2}-1)\delta-2\right)\right){\langle x\mid uv\rangle
\over r^2}\cr &&-{\rho(\rho-2)\delta^2\over 8}{\langle u\mid
v\rangle\over r}\cr &&+\left(-2B+{(\rho-2)\delta\over
4}\left(({3\rho\over 2}-1)\delta-2\right)\right){\langle x\mid u\rangle
\langle x\mid v\rangle\over r^3}.\nonumber
\end{eqnarray}
On the other hand,
\begin{eqnarray}
[\tilde L_{uv}, X](1)= \left(-B+{\rho(\rho-2)\delta^2\over
8}\right){\langle x\mid uv\rangle \over
r^2}-{\rho(\rho-2)\delta^2\over 8}{\langle u\mid v\rangle\over
r},\nonumber
\end{eqnarray}
so $[\tilde L_v, [\tilde L_u, X]](1)-[\tilde L_{uv}, X](1)$ is equal
to
\begin{eqnarray}
&&\left(2(B-{\rho(\rho-2)\delta^2\over 8})-{(\rho-2)\delta\over
4}\left(({\rho\over 2}-1)\delta-2\right)\right){\langle x\mid uv\rangle
\over r^2}\cr&&+\left(-2B+{(\rho-2)\delta\over 4}\left(({3\rho\over
2}-1)\delta-2\right)\right){\langle x\mid u\rangle \langle x\mid
v\rangle\over r^3} \cr & = &\left(-2B+{(\rho-2)\delta\over
4}\left(({3\rho\over 2}-1)\delta-2\right)\right)\left(-{\langle x\mid
uv\rangle \over r^2}+{\langle x\mid u\rangle \langle x\mid
v\rangle\over r^3}\right).\nonumber
\end{eqnarray}
Then $O(1)=0$ if and only if
$$
B={\delta\over 8}(\rho-2)\left(\left({3\rho\over 2}-1\right)\delta-2\right).
$$
{\bf Proof of $\ms O(1)=0$}. Since
$$
X([\tilde L_v, X](1))=
(2-(3\rho/2-1)\delta+B)\left(B-{\rho(\rho-2)\delta^2\over 8}\right){\langle
x\mid v\rangle \over r^3},
$$
and
\begin{eqnarray}[\tilde L_u, X](X(1))={\langle u\mid x\rangle
\over r^3}\left[B(2-(\rho-1)\delta+B)+{B\rho \delta\over
2}\left(1-{(\rho-2)\delta\over 4}\right)\right],\nonumber
\end{eqnarray}
we have
$$
[[\tilde L_u, X], X](1)=\rho \delta\left(B-{d\over
8}(\rho-2)\left(\left({3\rho\over
2}-1\right)d-2\right)\right){\langle u\mid x\rangle \over r^3},
$$
so $\ms O(1)=0$ if and only if
$$
B={\delta\over 8}(\rho-2)\left(\left({3\rho\over 2}-1\right)\delta-2\right).
$$

\end{document}